\newcommand{\mypara}[1]{\medskip \noindent {\bf #1}}
\newcommand{\racke}{R\"{a}cke\xspace}
\newcommand{\flow}{\textnormal{flow}}
\renewcommand{\eps}{\varepsilon}
\newcommand{\droute}{\calD^{(1)}}
\newcommand{\bablp}{\textbf{LP-BB}}
\newcommand{\hclp}{\textbf{LP-HC}}
\newcommand{\junctionlp}{\textbf{LP-Junction}}
\newcommand{\hop}{\textnormal{hop}}
\newcommand{\LG}{\textnormal{LARGE}}
\newcommand{\SM}{\textnormal{SMALL}}
\newcommand{\diam}{\textnormal{diam}}
\newcommand{\density}{\textnormal{density}}
\newcommand{\treeround}{\textnormal{TreeRounding}}
\begin{document}
\title{Approximation Algorithms for Hop Constrained and Buy-at-Bulk Network Design
  via Hop Constrained Oblivious Routing}
\author{Chandra Chekuri \thanks{Dept. of Computer Science, Univ. of
  Illinois, Urbana-Champaign, Urbana, IL 61801. {\tt
  chekuri@illinois.edu}. Supported in part by NSF grants CCF-1910149 and CCF-1907937.}
\and
Rhea Jain\thanks{Dept. of Computer Science, Univ. of Illinois,
  Urbana-Champaign, Urbana, IL 61801. {\tt rheaj3@illinois.edu}. Supported in part by NSF grant CCF-1907937.}
}
\date{\today}
\maketitle

\begin{abstract}
  We consider two-cost network design models in which edges of the
  input graph have an associated \emph{cost} and \emph{length}.
  We build upon recent advances in hop-constrained oblivious routing
  to obtain two sets of results.

  We address multicommodity buy-at-bulk network design in the
  nonuniform setting. Existing
  poly-logarithmic approximations are based on the junction tree
  approach \cite{chks09,kortsarz_nutov11}.  We obtain a new
  polylogarithmic approximation via a natural LP relaxation. This
  establishes an upper bound on its integrality gap and affirmatively
  answers an open question raised in \cite{chks09}.  The rounding is
  based on recent results in hop-constrained oblivious routing
  \cite{hop_congestion21}, and this technique yields a polylogarithmic
  approximation in more general settings such as set connectivity.
  Our algorithm for buy-at-bulk network design is based on an LP-based
  reduction to $h$-hop constrained network design for which we obtain
  LP-based bicriteria approximation algorithms. 
  
  We also consider a fault-tolerant version of $h$-hop constrained
  network design where one wants to design a low-cost network to
  guarantee short paths between a given set of source-sink pairs even
  when $k-1$ edges can fail. This model has
  been considered in network design \cite{GouveiaL17,GouveiaML18,ArslanJL20} 
  but no approximation algorithms were known.  We obtain polylogarithmic
  bicriteria approximation algorithms for the single-source setting
  for any fixed $k$. We build upon the
  single-source algorithm and the junction-tree approach to obtain an
  approximation algorithm for the multicommodity setting when at most
  one edge can fail.
\end{abstract}

\section{Introduction}
\label{sec:intro}
Network design is a fundamental area of research in algorithms
touching upon several related fields, including combinatorial
optimization, graph theory, and operations research.
A  canonical problem is Steiner Forest: given a graph
$G = (V, E)$ with non-negative edge costs $c: E \to \R_+$ and terminal
pairs $s_i, t_i \in V$ for $i \in [r]$, the goal is to find $F \subseteq E$ that connects
all $s_i$-$t_i$ pairs while minimizing the total cost of $F$. Steiner
Tree is the special case when $s_i = s$ for
all $i$.  
Steiner Tree is referred to as \emph{single-source} problem,
while Steiner forest is a \emph{multicommodity} problem.
Both are NP-Hard and APX-hard to approximate, and extensively studied.
Steiner Forest has a $2$-approximation \cite{Jain01} and Steiner Tree has a $\ln 4 +
\eps$-approximation \cite{ByrkaGRS13}.
In this paper we address a
class of \emph{two-cost} network design problems. The input to these
problems is a graph $G=(V,E)$ where each edge has a
non-negative cost $c(e)$, and a non-negative length $\ell(e)$;  $c(e)$
represents a fixed cost and $\ell(e)$ represents
a hop constraint or routing cost. The goal is to choose a low-cost subgraph of $G$ to satisfy connectivity
and/or routing requirements for some given set of source-sink pairs.
The two-cost model is important due to its ability to model a number 
of fundamental problems. Despite many advances over the years, there
are still several problems that are important for both theory and
applications, but are not well-understood. 
We consider two
sets of problems, formally described below.
We obtain several new approximation algorithms and resolve
an open problem from \cite{chks09}.

\mypara{Buy-at-Bulk:} In buy-at-bulk network design, the goal is to
design a low-cost network to support routing demands between
given source-sink pairs. 
The cost of buying capacity on an edge to support the flow routed on
it is typically a subadditive function; this arises naturally in
telecommunication networks and other settings in which costs exhibit
economies of scale.  Buy-at-bulk is an important problem in practice
and has been influential in the approximation algorithms literature
since its formal introduction in \cite{awerbuch97}; see Section 
\ref{sec:related_work} for details.  We study the multicommodity
version, denoted MC-BaB. The input consists of a graph $G = (V, E)$, a
set of $r$ demand pairs $\{s_i, t_i\}_{i \in [r]}$ with demand
$\delta(i)$ each, and a monotone sub-additive cost function
$f_e: \R^+ \to \R^+$ for each edge $e \in E$.  We need to route
$\delta(i)$ units of flow from $s_i$ to $t_i$ for each $i \in
[r]$. Given a routing, its cost is $\sum_{e \in E} f_e(x_e)$ where
$x_e$ is the total amount of flow sent on edge $x_e$. The goal is to
find a routing of minimum cost. We focus on the \emph{nonuniform}
setting, where each edge has its own cost function $f_e$, and refer to
the case where $f_e$ is the same for all edges ($f_e = c_e \cdot f$
for some function $f$) as the \emph{uniform} problem. We call an
instance \emph{single-source} if all pairs have the same source $s$,
i.e. $\exists s \in V$ such that $s_i = s$ for all $i \in [r]$.  One
can simplify the problem, by losing an approximation factor of $2$
\cite{az98}, wherein we can assume that each $f_e$ has a simple
piecewise linear form: $f_e(x) = c(e) + \ell(e)\cdot x$; we call
$c(e)$ the \emph{cost} and $\ell(e)$ the \emph{length}.
Even though buy-at-bulk is naturally defined via routing flows,
the two-cost model allows one to recast it in a different light by
considering fixed costs and hop lengths in the aggregate. In
particular, the problem now is to choose a set of edges $F \subseteq
E$ where the objective function value of $F$ is defined as
$\sum_{e \in F} c(e) + \sum_{i \in [r]} \delta(i) \ell_F(s_i,t_i)$,
where $\ell_F(s_i,t_i)$ is the shortest path length between
$s_i$ and $t_i$ in the graph induced by $F$. Notice that
the choice of routing has been made implicit in the objective.
The uniform versions of the problem can be handled via
metric embedding techniques \cite{awerbuch97}, however,
the nonuniform problem has been particularly challenging.
The best known approximation ratios for MC-BaB are
$O(\log^4 r)$ in the nonuniform setting \cite{chks09} (one can obtain
a log factor improvement when demands are poly-bounded
\cite{kortsarz_nutov11}) and $O(\log r)$ in the uniform setting
\cite{awerbuch97,frt_tree,bartal_tree,gmm01}.  In single source
buy-at-bulk, there exists an $O(\log r)$-approximation in
the nonuniform setting \cite{mmp08,ckn01} and $O(1)$ for uniform
\cite{gmm01,talwar02,gmr03,grandoni_italiano06}.  
The integrality gap of a natural LP relaxation for MC-BaB has been
unresolved for over fifteen years \cite{chks09}. We also note that
MC-BaB is hard to within $\Omega(\log^c n)$-factor even in uniform
settings (see Section \ref{sec:related_work}).

\mypara{Hop-constrained network design:} The goal is to design
low-cost networks in which source-sink pairs are connected by paths
with few edges. We say that a path has hop-length $h$ if it has $h$
edges -- this corresponds to path length with $\ell(e) = 1$ for all
$e$. Such a hop-length constraint is natural in many telecommunication
networks and is extensively studied in theory and practice, see Section
\ref{sec:related_work} for details. Here, we study $h$-Hop Constrained Steiner Forest
($h$-HCSF).  The input is a graph $G = (V, E)$ with non-negative edge
costs $c:E \to \R_+$ and $r$ terminal pairs $s_i, t_i \in V$.  The
goal is to find $F \subseteq E$ minimizing
$c(F) := \sum_{e \in F} c(e)$ such that for all $i \in [r]$, there
exists an $s_i$-$t_i$ path in $F$ of hop-length at most $h$.
Although hop-constrained problems had admitted bicriteria approximations in
single-source and spanning settings, multicommodity versions such as $h$-HCSF
were harder to tackle, and until recently, no non-trivial
approximation algorithms (even bicriteria) were known. 
In the past few years, this
barrier has been overcome through the use of probabilistic tree
embeddings with hop constraints
\cite{hop_distance21,hop_congestion21,filtser22}. These results allow
us to project the graph onto a tree, solve the problem on the
tree, and project back to the original graph with distances preserved
up to small factors.
This has led to the development of several
\emph{bicriteria} approximation algorithms, in which the returned
subgraph $F$ connects each terminal pair with a path of length at most
$\polylog(n) \cdot h$ and the total cost of $F$ is at most
$\polylog(n)$ times the optimal \cite{hop_distance21,filtser22}.
We note that hop-constrained probabilistic tree embeddings pose several new 
challenges beyond those of traditional metric 
embeddings \cite{bartal_tree,frt_tree} since the trees are partial (do not
contain all vertices). 
Most
recently, Ghaffari, Haeupler and Zuzic gave a congestion-based
hop-constrained tree embedding \cite{hop_congestion21}, providing a
hop-constrained analog to \racke's seminal tree-based oblivious
routing result \cite{Racke08}.

MC-BaB and $h$-HCSF are related two-cost network design
problems. While $h$-HCSF imposes a strict hop-constraint on the paths,
MC-BaB penalizes the hop-length for the pairs in the objective
function. The relation allows some results for one to be translated to the
other with care. 

\mypara{Fault-tolerance:} Fault-tolerant network design has been
studied in a wide variety of settings and has numerous practical
applications. Via Menger's theorem, two nodes are $k$-edge-connected if
there exist $k$ edge-disjoint paths between them, or, equivalently, if
they remain connected despite the failure of any $k-1$ edges.
Survivable Network Design Problem (SNDP) is a central problem in his
context. The input is similar to Steiner Forest; in addition each $s_it_i$ pair
specifies an integer requirement $k_i$ and the goal is to find a
min-cost subgraph of the given graph $G$ in which each $s_it_i$ is
$k_i$-edge-connected.  Jain's seminal work \cite{Jain01} obtained a
$2$-approximation for this problem. In the hop-constrained setting,
the situation is more complicated. The corresponding variant of
Menger's theorem does not hold (see \cite{GouveiaL17} and Figure
\ref{fig:hc_fault_example}).  Thus, there are two natural
higher-connectivity generalizations of hop-constrained network design.

The first version is the focus of this paper and is the fault-based
generalization of hop-constrained network design, introduced in
\cite{GouveiaL17}. We say $u, v \in V$ are
\emph{$(h,k)$-hop-connected} in $G$ if there exists a path using at
most $h$ edges in $E \setminus Q$ for all $Q \subseteq E$, $|Q| < k$.
For instance, $s$ and $t$ are $(h, 2)$-hop-connected in Figure
\ref{fig:hc_fault_example}.  We define the
\emph{$(h,k)$-Fault-Tolerant Hop Constrained Network Design} problem,
denoted $(h,k)$-Fault-HCND: given $r$ terminal pairs $s_i, t_i \in V$
with associated connectivity requirements $k_i$, the goal is to find
$F \subseteq E$ minimizing $c(F)$ such that for all $i \in [r]$,
$s_i, t_i$ are $(h, k_i)$-hop-connected. We let
$k = \max_{i \in [r]} k_i$.

\begin{figure}[bht]
  \begin{center}
    \includegraphics[width=\linewidth]{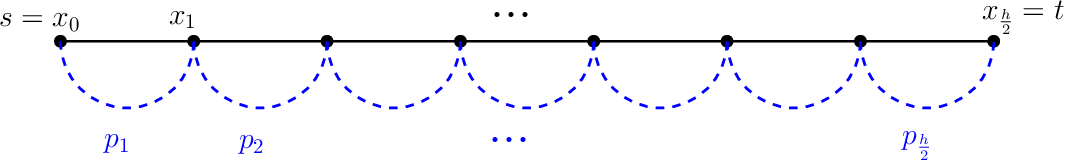}
  \end{center}
  \caption{Each $p_i$ has $\hop(p_i) = h/2$. $s$ and $t$ 
  are connected by a path of hop-length $h$ despite the failure of 
  any one edge. However, one needs hop-length $\Omega(h^2)$ to 
  obtain two edge disjoint $s$-$t$ paths.}
  \label{fig:hc_fault_example}
\end{figure}

The second version is the Hop-Constrained Survivable Network Design
problem (HC-SNDP) problem: given a hop constraint $h$ and terminal pairs $s_i$, $t_i$ with
connectivity requirements $k_i$, a feasible solution contains $k_i$
edge-disjoint $s_i$-$t_i$ paths, \emph{each} of hop length at most $h$. The
goal is to find a min-cost feasible subgraph. In the example in Figure
\ref{fig:hc_fault_example}, $s$ and $t$ are $(h,2)$-hop-connected
but any two $s$-$t$ disjoint paths must have at least one path of length $\Omega(h^2)$.

As noted in \cite{GouveiaL17,GouveiaML18}, the two versions are both
interesting and relevant from a telecommunication network design point
of view.  While there exist several algorithms via IP-solvers for both
versions of hop-constrained problem (see Section \ref{sec:related_work}), 
there are no known efficient approximation algorithms. 

Now we discuss buy-at-bulk in the fault-tolerant setting, an
important practical problem first studied in approximation by
Antonakapolous et al. \cite{acsz11}. The goal is to find a
min-cost subgraph that allows the demands to be routed even under
failures which requires routing each demand along disjoint paths.  It
has proved to be a challenging problem. For instance, even in the
single-source nonuniform setting, poly-logarithmic approximation
algorithms are not known for protecting against a single edge failure!
Progress has been made for some special cases (see Section \ref{sec:related_work}).  One
can show that bicriteria algorithms for HC-SNDP would imply algorithms
for fault-tolerant nonuniform Buy-at-Bulk and vice-versa.  Our current
techniques do not seem adequate to address these problems in their
full generality even though there are natural LP relaxations. In this
version we focus on the first version of fault-tolerant
hop-constrained network design which is also challenging.

\mypara{Set Connectivity:}
There are several
problems where we seek connectivity between pairs of \emph{sets}.
In such problems, the input consists of set pairs $S_i, T_i \subseteq V$ 
instead of terminal pairs $s_i, t_i \in V$. The goal is to find paths 
connecting $S_i$ to $T_i$ for each $i \in [r]$. This problem was 
first introduced in the single source setting, known as Group Steiner Tree, 
and has been influential in network design (see Section \ref{sec:related_work}).
Recent progress has resulted in algorithms for the hop-constrained version 
\cite{hop_distance21,filtser22}
and also for the fault-tolerant version \cite{cllz22}.
One can naturally extend the Buy-at-Bulk 
problem to this more general setting, which we consider in this work.

\subsection{Overview of Results}
\label{sec:results}
We consider polynomial-time approximation algorithms. 
For Buy-at-Bulk problems we consider the standard
approximation ratio. For $h$-Hop-Constrained Network
Design problems we consider bi-criteria $(\alpha, \beta)$-approximations
where $\alpha$ is the approximation for the cost of the output solution and
$\beta$ is the violation in the hop constraint.
In discussing results we let $\opt_{I}$ denote the value of an optimal
(integral) solution, and $\opt_{LP}$ to denote the value of an optimum
fractional solution to an underlying LP relaxation (see Section \ref{sec:prelim}). 

Our first set of results address the multicommodity settings of
two-cost network design problems. Theorem \ref{thm:bab_main} resolves
an open question posed by \cite{chks09} by proving a polylogarithmic
approximation for multicommodity Buy-at-Bulk with respect to the LP
relaxation. The resulting algorithm provides a new approach to the problem.

\begin{theorem}
\label{thm:bab_main}
  There is a randomized $O(\log D\log^3 n \log r)$-approximation for
  multicommodity Buy-at-Bulk with respect to $\opt_{LP}$, where 
  $D = \max_{i \in [r]} \delta(i)$ is the maximum demand. This 
  extends to Buy-at-Bulk Set Connectivity and the approximation ratio 
  is $O(\log D\log^7(nr))$. 
\end{theorem}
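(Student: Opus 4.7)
The plan is to approximate MC-BaB by an LP-based reduction to polylogarithmically many $h$-HCSF sub-instances, each rounded by an LP-based bicriteria approximation built from the hop-constrained oblivious routing of \cite{hop_congestion21}. Start from the natural LP relaxation with edge variables $x_e \in [0,1]$ and per-pair path-flow variables $f^i_P \geq 0$ over $s_i$-$t_i$ paths, subject to $\sum_P f^i_P \geq 1$ and $\sum_{P \ni e} f^i_P \leq x_e$, minimizing $\sum_e c(e) x_e + \sum_i \delta(i) \sum_P \ell(P) f^i_P$. This is precisely the LP whose integrality gap was left open in \cite{chks09}.

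Given a fractional optimum $(x,f)$, I would first partition the pairs by demand scale, placing $i$ in class $a$ if $\delta(i) \in [2^a, 2^{a+1})$; this gives $O(\log D)$ classes within which demands are uniform up to a factor of two. Within each demand class, let $L_i^\ast = \sum_P \ell(P) f^i_P$ be the LP-average length of pair $i$ and further partition by $L_i^\ast \in [2^b, 2^{b+1})$. After a standard truncation of negligible-length contributions, $L_i^\ast \leq O(n \cdot \ell_{\max})$ so this yields $O(\log n)$ length classes. For each sub-instance, set $h := 4 \cdot 2^{b+1}$; by Markov at least half of $f^i$ for any pair $i$ in the class is carried on paths of $\ell$-length at most $h$, and restricting $(x,f)$ to that part of the flow and to pairs in the class yields a feasible LP solution for an $h$-HCSF instance whose cost is within a constant factor of the LP contribution of those pairs to the MC-BaB objective.

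Next, apply an LP-based bicriteria $(\alpha, \beta)$-approximation for $h$-HCSF with $\alpha, \beta = O(\polylog n)$. The rounding samples a hop-constrained tree embedding as in \cite{hop_congestion21}, lifts the $h$-HCSF fractional solution onto the sampled partial tree, performs Steiner-Forest-style integral rounding on the tree, and projects back through the oblivious routing, paying $O(\polylog n)$ in edge cost and $O(\polylog n)$ in hop stretch. Taking the union of the integral solutions across all $O(\log D \log n)$ sub-instances, the fixed-cost term $\sum_e c(e) x_e$ blows up by $O(\log D \log n) \cdot \alpha$, while the demand-weighted length term inflates only by $O(\beta)$ because each pair appears in just one sub-instance whose hop budget is within a constant factor of $L_i^\ast$. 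Tracking the randomized rounding's $\log r$ concentration factor and the $O(\log^2 n)$-quality hop-embedding gives the $O(\log D \log^3 n \log r)$ bound.

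The main obstacle will be executing the LP-based bicriteria rounding for $h$-HCSF competitively with the LP value rather than the integer optimum: one must show that mapping a fractional $h$-HCSF flow through the sampled hop-constrained tree yields a tree LP whose integral rounding projects back to a subgraph preserving cost up to $O(\polylog n)$ and hop budget up to $O(\polylog n)$, for all pairs simultaneously. For Buy-at-Bulk set connectivity, the in-tree Steiner Forest rounding is replaced with a hop-constrained group Steiner rounding in the spirit of \cite{hop_distance21,filtser22}, whose $O(\log^2 n \log r)$ group-rounding loss combined with the $O(\polylog(nr))$ factors from the hop-constrained tree embedding yields the stated $O(\log D \log^7(nr))$ bound.
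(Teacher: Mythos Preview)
Your overall strategy matches the paper's: solve the LP, bucket pairs by demand (the $O(\log D)$ factor) and by LP-average length, reduce each bucket to an $h$-HCSF instance, and round via hop-constrained oblivious routing; the accounting (fixed cost blown up by the number of buckets times $\alpha$, routing cost by $O(\beta)$) is also the paper's.

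There is, however, a genuine gap in the length-bucketing step. You partition by $L_i^\ast \in [2^b,2^{b+1})$ and assert $O(\log n)$ classes ``after a standard truncation.'' This is not standard: the range of $L_i^\ast$ is a priori $[\ell_{\min}, n\cdot\ell_{\max}]$, giving $O(\log(nL))$ classes with $L=\ell_{\max}/\ell_{\min}$, which need not be $\poly(n)$. More seriously, the hop-constrained oblivious routing of \cite{hop_congestion21}, and hence any $h$-HCSF rounding built on it, measures \emph{hop-length} (number of edges), not weighted $\ell$-length. Your Markov step yields half the flow on paths of bounded $\ell$-length, which is not an $h$-HCSF feasible fractional solution in the required sense; projecting such a solution onto a hop-constrained partial tree gives no control on the number of edges in the resulting paths.

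The paper closes this gap with an explicit reduction to $\ell(e)=1$ for all $e$ (Claim~\ref{claim:bab_length_assumption}): choose a scale $\ell^\ast \approx \ell_{\max}/(rnm^2)$, subdivide each edge into $\ell(e)/\ell^\ast$ unit-length copies, and argue the LP value is preserved up to a constant. This blows the vertex count up to $r\cdot\poly(n)$, which is precisely why the Set Connectivity bound is stated in $\log(nr)$ rather than $\log n$. Once $\ell\equiv 1$, $\ell$-length coincides with hop-length, $L_i^\ast\le n$ automatically gives $O(\log n)$ buckets, and your Markov argument produces a bona fide $h$-HCSF fractional solution. That unit-length reduction is the missing ingredient in your sketch.
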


There is an $O(\log^4 r)$-approximation for MC-BaB with respect to the
optimal integral solution \cite{chks09}; our result matches this up to
an $O(\log D)$ factor. Removing this factor is a direction for future
research, see Remark \ref{rem:bab_logD}. No previous approximation
algorithm was known for Buy-at-Bulk Set Connectivity.

Theorem \ref{thm:hc_main} below describes a polylogarithmic bicriteria 
approximation for $h$-Hop Constrained Steiner Forest with respect to the 
optimal fractional solution, and also for Set Connectivity version
(referred to as $h$-HCSC).

\begin{theorem}
\label{thm:hc_main}
  There exist polylogarithmic bicriteria approximation algorithms for 
  hop-constrained network design problems with respect to $\opt_{LP}$. 
  In $h$-Hop Constrained Steiner Forest, we obtain cost factor 
  $\alpha = O(\log^2n \log r)$ and hop factor $\beta = O(\log^3 n)$. 
  In $h$-Hop Constrained Set Connectivity, we obtain cost factor 
  $O(\log^5 n \log r)$ and hop factor $\beta = O(\log^3 n)$. 
\end{theorem}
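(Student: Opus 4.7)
The plan is an LP-based bicriteria algorithm that reduces $h$-HCSF (and its set-connectivity variant $h$-HCSC) to multicommodity Steiner Forest (resp.\ Group Steiner Forest) on a tree, via the hop-constrained oblivious routing of \cite{hop_congestion21}. First, write the natural flow LP for $h$-HCSF with variables $x_e \ge 0$ for fractional edge purchase, and for each pair $i$ a unit $s_i$-$t_i$ flow $f^i$ supported only on paths of $\hop$-length at most $h$ and satisfying $f^i_e \le x_e$; minimize $\sum_e c(e)\, x_e$. The LP is polynomial-time solvable via a layered (time-expanded) reformulation; denote its optimum by $\opt_{LP}$.

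Next, I would invoke the distribution $\mathcal{D}$ over (partial) trees from \cite{hop_congestion21}: for any $h$-bounded demand, routing through $T \sim \mathcal{D}$ has expected per-edge congestion $O(\log^2 n)$, and each tree edge realizes as a $G$-path of $\hop$-length at most $O(\log^3 n)\cdot h$. Sampling $T \sim \mathcal{D}$ and pushing each $f^i$ through the oblivious map produces fractional tree capacities $y$ whose expected lifted cost is $O(\log^2 n)\,\opt_{LP}$, while every pair $i$ remains fractionally connected in $T$. Thus the rest of the problem lives on a tree.

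The third step is to round $y$ to an integer tree subgraph. For $h$-HCSF the tree problem is multicommodity Steiner Forest on a tree, which admits an $O(\log r)$-approximation via cut-LP/primal-dual rounding. Expanding the selected tree edges back to $G$ gives $F \subseteq E(G)$ with $\mathbb{E}[c(F)] = O(\log^2 n \log r)\,\opt_{LP}$, while the stretch guarantee produces an $O(\log^3 n)\cdot h$-hop $s_i$-$t_i$ path in $F$ for every $i$. For $h$-HCSC, I replace the tree rounding by a Garg-Konjevod-Vazirani-style algorithm for Group Steiner Forest on a tree (in the spirit of \cite{hop_distance21,filtser22,cllz22}), which loses $O(\log^3 n \log r)$ rather than $O(\log r)$; composing with the $O(\log^2 n)$ congestion factor gives the claimed $O(\log^5 n \log r)$ cost with the same $O(\log^3 n)$ hop factor.

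The main obstacle is that the trees from \cite{hop_congestion21} are \emph{partial} — not every vertex of $V$ need appear in $T$ — and the routing guarantees are per-demand and in expectation, not a single tree metric. To make the rounded tree solution cover all $r$ pairs simultaneously after lifting back to $G$, the natural route is a Markov bound plus a union bound over pairs, or equivalently sampling $O(\log r)$ independent trees and taking the union of their roundings; this is where the extra $\log r$ factor in the cost enters. A secondary care point is anchoring terminals $s_i, t_i$ that are not vertices of $T$ through the oblivious routing's portals, and arguing that the expansion of the chosen tree edges contains a bounded-hop $s_i$-$t_i$ path in $G$ for every $i$ — here the obliviousness property is used precisely to decouple the tree sample from the identity of the demands.
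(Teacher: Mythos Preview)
Your high-level plan is the paper's plan: solve \hclp, sample hop-constrained partial tree embeddings from \cite{hop_congestion21} with capacities $x_e$, round on the tree, map back, and repeat $O(\log r)$ times. But the step you flag as ``the main obstacle'' is exactly where your proposal has a genuine gap, and your proposed fixes (portals for missing terminals; ``Markov plus union bound over pairs'') do not close it. The tree distribution has a tunable exclusion probability $\eps$, and its hop dilation is $O(\log^3 n/\eps)$ while its congestion is $O(\log n \cdot \log(\log n/\eps))$. The paper's key move is to set $\eps = \Theta(1/h)$; then any single flow path $p\in\calP_i^h$ has at most $h{+}1$ vertices, so by a union bound over vertices $\Pr[V(p)\subseteq V(T)]\ge 3/4$, and a Markov argument gives that $y$ supports $\ge 1/2$ flow between $s_i$ and $t_i$ on $T$ with probability $\ge 1/2$ (this is Lemma~\ref{lem:hc_flow_preservation}). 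Without coupling $\eps$ to $h$ you get no such guarantee: with constant $\eps$ and large $h$, every $h$-hop flow path can be destroyed with overwhelming probability, and no number of resamples fixes this. Your ``portal'' idea is not how the paper handles absent terminals; there is no anchoring---one simply accepts that each tree covers each pair with constant probability and repeats $\tau=O(\log r)$ times. This choice of $\eps$ is also what makes both your stated congestion bound $O(\log^2 n)$ and your stated hop bound $O(\log^3 n)\cdot h$ come out right; you quote the correct numbers but do not identify where they come from.

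A smaller inaccuracy: Steiner Forest on a tree does not need an $O(\log r)$-approximation---it is exact (the fractional $y$ on the unique $s_i$-$t_i$ tree path is already $\ge 1/2$ whenever the pair's flow is preserved, so threshold rounding is a constant-factor step). The $\log r$ in the HCSF cost comes entirely from the $\tau=O(\log r)$ independent tree samples, not from tree rounding; your last paragraph says this correctly, but your third paragraph misattributes it. For HCSC the paper uses the oblivious tree rounding of \cite{CGL15} (Lemma~\ref{lem:setconnectivity-tree-rounding}), which contributes the extra $O(\text{height}(T)\log^2 n)=O(\log^3 n)$ and yields the $O(\log^5 n \log r)$ cost exactly as you outline.
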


There exist bicriteria approximations for both $h$-HCSC and
$h$-HCSF with respect to optimal integral solutions via partial tree
embeddings \cite{hop_distance21}, which were later improved through
the use of clan embeddings \cite{filtser22}. Our results match or
improve on those in \cite{hop_distance21}; see Section 
\ref{sec:related_work} for details. 
In addition to inherent interest
and other applications, an important reason for proving
Theorem \ref{thm:hc_main} is to derive Theorem \ref{thm:bab_main}
via a reduction (see Section~\ref{sec:bab}). 

Our next set of results are on hop-constrained network design in the
\emph{fault-tolerant} setting.  There were previously no known
approximation algorithms for this problem. 
Theorems \ref{thm:hcfault_main} obtains bicriteria approximation for
the single-source setting for any fixed number of failures.

\begin{theorem}
\label{thm:hcfault_main}
  There is a randomized 
  $(O(k^3\log^6n), O(k^k \log^3n))$-approximation for \emph{single-source}
  $(h,k)$-Fault-HCND with 
  respect to $\opt_{LP}$ and runs in time $n^{O(k)}$.
\end{theorem}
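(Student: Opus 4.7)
The plan is to combine the junction-tree paradigm (as in \cite{chks09,kortsarz_nutov11}) with a recursion on the fault parameter $k$, using the single-source case of Theorem~\ref{thm:hc_main} as the base case. First, I would set up a natural LP relaxation for single-source $(h,k)$-Fault-HCND with a variable $x_e\in[0,1]$ per edge and constraints requiring, for each terminal $t_i$ and each failure set $Q\subseteq E$ with $|Q|<k$, that $x$ supports one unit of $s$-$t_i$ flow in $G-Q$ along $h$-hop paths. Although the constraint family is super-polynomial, a separation oracle that enumerates the $k-1$ failed edges and checks hop-bounded flow feasibility runs in time $n^{O(k)}$, which matches the claimed runtime and allows the ellipsoid method to solve the LP.

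The base case $k=1$ is the single-source specialization of $h$-HCSF, handled directly by Theorem~\ref{thm:hc_main} at cost factor $O(\log^2 n\log r)$ and hop factor $O(\log^3 n)$. For $k\ge 2$, I would use a density-based junction decomposition. For each candidate root $v$ one builds a junction $J_v$ that provides $(h,k)$-hop-connectivity between $s$ and a subset $T_v$ of yet-uncovered terminals; structurally $J_v$ is the union of a \emph{trunk} that is $(h/2,k)$-hop-connected from $s$ to $v$ and a \emph{fan} that is $(h/2,k)$-hop-connected from $v$ to each $t\in T_v$. Each piece is constructed recursively from an $(h/2,k-1)$-Fault-HCND sub-solution augmented by one extra short hop-bounded path that acts as the $k$-th parallel route. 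A standard LP-based density argument then shows that some $v$ yields a $J_v$ of density at most $O(\log n)\cdot \opt_{LP}/|T^{\mathrm{rem}}|$, and iterating greedily until all terminals are covered loses a further $O(\log r)$ factor.

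The main obstacle I anticipate is the composition lemma: why is the union of trunk, fan, and extras actually $(h,k)$-hop-connected once the recursive guarantees are plugged in? The intended argument is that any failure set $Q$ with $|Q|\le k-1$ splits across the two sides at $v$; for some surviving terminal $t\in T_v$ the $(k-1)$-fault-tolerant substructures absorb all but at most one failure on each side, and the extra parallel $h/2$-hop path on each side absorbs the last failure, leaving an intact $h$-hop $s$-$v$-$t$ route. Making this rigorous requires a careful case analysis over how the $\le k-1$ failures distribute among the trunk, the up-to-$|T_v|$ fan branches, and the extras; this enumeration is precisely what propagates into the $k^k$ term in the final hop factor.

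Aggregating losses, each of the $k$ recursion levels contributes an $O(k)$ factor to the hop violation on top of the base-case $O(\log^3 n)$, giving $O(k^k\log^3 n)$; the cost factor at each level is a bounded polynomial in $k$ and $\log n$, and combining with the $O(\log r)$ greedy-density overhead and the base-case $O(\log^2 n\log r)$ yields the claimed $O(k^3\log^6 n)$ cost approximation after careful accounting.
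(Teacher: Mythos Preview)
Your plan has a structural mismatch and a concrete technical gap, and it is not the route the paper takes.

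First, the junction-tree paradigm is a multicommodity-to-single-source reduction: one searches over roots $v$ so that many $s_i$-$t_i$ pairs can be routed simultaneously through $v$. In a single-source instance the source $s$ is already the canonical root, so the density argument over candidate $v$ is vacuous; choosing $v=s$ returns you to the original instance with nothing gained. This is why the paper deploys junctions only in Section~\ref{sec:junction} for the multicommodity case, not here.

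Second, and more seriously, your composition lemma fails. You want to manufacture $(h/2,k)$-hop-connectivity on each side by taking a recursive $(h/2,k-1)$-fault-tolerant sub-solution and adjoining ``one extra short hop-bounded path.'' For this to work, the extra path would have to be edge-disjoint from the recursive structure: if all $k-1$ failures hit the trunk and even one of them lies on the extra path, neither the $(k-1)$-guarantee (which tolerates only $k-2$ failures) nor the extra path survives. But guaranteeing, let alone cheaply finding, a short $s$-$v$ path disjoint from an existing hop-bounded structure is exactly what the failure of Menger's theorem in the hop-constrained world (Figure~\ref{fig:hc_fault_example}) rules out. So the recursion on $k$ via ``peel off one parallel route'' does not go through, and the subsequent accounting that is supposed to deliver $O(k^3\log^6 n)$ and $O(k^k\log^3 n)$ is unsupported.

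The paper proves Theorem~\ref{thm:hcfault_main} by a different mechanism: an augmentation loop that lifts a partial solution from $(h_\ell,\ell)$- to $(h_{\ell+1},\ell+1)$-hop-connectivity. The new technical ingredient is a diameter lemma (Lemma~\ref{lem:H-component-diam}): for any violating set $Q$ of $\ell$ edges, every component of $H\setminus Q$ has diameter $O(\ell)\cdot h_\ell$. Combined with the hop-constrained $\droute$-routers and the capacity-scaling trick of \cite{cllz22} (down-weighting edges in $H\cup\LG$ to $1/(4\ell\sigma)$ so that $M^{-1}(Q)$ carries negligible tree flow), oblivious tree rounding produces, for every violating $Q$, an $s$-$t_i$ path in $(H\cup F)\setminus Q$ of length $O(\ell)h_\ell+O(\log^3 n)h$. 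The recurrence $h_{\ell+1}\le O(\ell)h_\ell+O(\log^3 n)h$ gives $h_k=O(k^k\log^3 n)h$; the cost per level is $O(\ell\log^6 n)\cdot\opt_{LP}$, summing to $O(k^2\log^6 n)$, and a final partition by the requirement $k_i$ contributes the third factor of $k$.
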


We then consider the multicommodity setting and prove Theorem
\ref{thm:junction_main} for $k=2$ via a reduction to single-source. 
We note that the result is with respect to the optimal integral solution.

\begin{theorem}
\label{thm:junction_main}
  There is a randomized $(O(\log^6n \log^2 r), O(\log^3 n \log r))$-approximation 
  for multicommodity $(h,2)$-Fault-HCND with respect to $\opt_{I}$.
\end{theorem}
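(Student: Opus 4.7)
The plan is to reduce the multicommodity problem to repeated instances of single-source $(h,2)$-Fault-HCND via the junction-tree framework, invoking Theorem \ref{thm:hcfault_main} at $k=2$ as the inner subroutine. Following the formulation in \cite{chks09,kortsarz_nutov11}, I define a \emph{$(v,h,2)$-junction} for a set $P$ of terminal pairs as a subgraph $J \subseteq E$ such that for every $(s_i, t_i) \in P$, both $s_i$ and $v$, and $t_i$ and $v$, are $(h/2,2)$-hop-connected in $J$. Any such $J$ provides $(h,2)$-hop-connectivity for all pairs in $P$: given the failure of a single edge $e$, concatenate the surviving $h/2$-hop $s_i$-$v$ path and $h/2$-hop $v$-$t_i$ path in $J-e$ to obtain an $s_i$-$t_i$ walk of hop-length at most $h$, which contains a simple $s_i$-$t_i$ path of hop-length at most $h$. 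The \emph{density} of $J$ for $P$ is $c(J)/|P|$; the framework reduces the multicommodity problem to repeatedly finding low-density junctions and greedily covering the uncovered pairs.

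The first key step is a density lemma: for any remaining set $R$ of uncovered pairs, there exist a vertex $v$ and a $(v,h,2)$-junction $J$ serving some $R' \subseteq R$ with density $c(J)/|R'| = O(\opt_I / |R|)$. I would prove it by taking an optimal integral solution $F^*$, selecting for each $(s_i,t_i)\in R$ two edge-disjoint $h$-hop $s_i$-$t_i$ paths $\pi_i^1, \pi_i^2$ in $F^*$ (which must exist by feasibility), and identifying candidate midpoint vertices along these paths. A counting/pigeonhole argument over vertices---charging edge contributions to the pairs whose $\pi_i^j$'s traverse them, then averaging---yields a vertex $v$ that serves as an approximate midpoint for a sufficiently large subset $R' \subseteq R$ with associated half-path cost $O(\opt_I \cdot |R'|/|R|)$. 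Care must be taken to preserve local edge-disjointness near $v$, so that the assembled subgraph is genuinely $(h/2,2)$-hop-connecting each $s_i$ (and each $t_i$) in $R'$ to $v$ rather than merely $1$-connecting them.

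Given the density lemma, the algorithmic step is to approximately find a minimum-density $(v,h,2)$-junction at each candidate root $v$: this decomposes into two single-source $(h/2,2)$-Fault-HCND instances (one for the $s_i$'s and one for the $t_i$'s, both rooted at $v$), and Theorem \ref{thm:hcfault_main} at $k=2$ provides an $O(\log^6 n)$ cost factor and $O(\log^3 n)$ hop factor per call. The minimum-density variant is reduced to the decision variant by a standard guess on the target density, losing an extra $O(\log r)$ factor, and the outer greedy set-cover loop runs for $O(\log r)$ iterations, each selecting the approximately min-density junction and removing covered pairs; this multiplies cost by another $O(\log r)$ and yields the claimed $(O(\log^6 n \log^2 r), O(\log^3 n \log r))$ bicriteria ratio once one enumerates hop scales. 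The main obstacle is the density lemma in the fault-tolerant regime: unlike the classical junction-tree argument, which charges a single short path per pair to a midpoint, here each pair carries \emph{two} edge-disjoint short paths that must meet at a common $v$ with enough local edge-disjointness to realize $(h/2,2)$-hop-connectivity; the averaging must accordingly be carried out over edge-disjoint path pairs, and the two subroutine calls for sources and sinks must be combined so that their union is a genuine fault-tolerant junction rather than one with shared bottleneck edges---this is the technical crux that limits the approach to $k=2$.
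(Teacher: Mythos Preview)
Your overall junction-tree framework matches the paper, but the density lemma has a fatal gap. You write that for each pair you can ``select two edge-disjoint $h$-hop $s_i$-$t_i$ paths $\pi_i^1,\pi_i^2$ in $F^*$ (which must exist by feasibility).'' This is precisely what feasibility does \emph{not} guarantee: $(h,2)$-hop-connectivity only asserts that after deleting any single edge an $h$-hop path survives, and the paper's Figure~\ref{fig:hc_fault_example} exhibits an instance where any two edge-disjoint $s$-$t$ paths force one of them to have hop-length $\Omega(h^2)$. So the very first step of your midpoint/pigeonhole argument does not go through, and with it the claim that both $s_i$ and $t_i$ can be $(h/2,2)$-hop-connected to a common midpoint; indeed, no uniform bound on the junction hop-length follows from your charging scheme. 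This is also why your accounting for the $O(\log r)$ hop factor is vague --- in your setup there is no mechanism that produces it.

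The paper's existence proof avoids this trap entirely. Instead of decomposing the optimum into disjoint path pairs, it fixes a terminal $s$, grows balls $B(s,hp)$ in the optimal solution, and works with the maximal $2$-edge-connected component $B'(s,hp)$ inside each ball. Using minimality and the $2$-edge-connectivity of each pair's witness subgraph $C_i$, it shows that if $B'(s,hp)$ intersects $C_i$ then $B'(s,h(p{+}1))$ captures it; a doubling argument then yields some $p\le\log_2 r+1$ where the ball captures at least as many pairs as it intersects, giving a junction of hop-length $O(h\log r)$ and density $O(\opt_I/r)$. This ball-growing step is where the $\log r$ hop blowup genuinely arises. The algorithmic side also differs: rather than guessing a density threshold, the paper writes a density LP for the junction, buckets terminals by their fractional $y_t$ values, and scales the LP solution on the heaviest bucket to obtain a feasible single-source fractional input for Theorem~\ref{thm:hcfault_main}.
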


There are several well-known advantages for considering LP based
algorithms even when there exist combinatorial algorithms. 
First, working with fractional solutions allows us to
extend our results to otherwise complex settings. For example, without
much additional overhead, we are able to extend our results to set
connectivity and \emph{prize collecting} versions of the problems. 
In prize collecting, each demand pair is
associated with penalty $\pi_i \ge 0$. The goal is to find a
subset $S \subseteq [k]$ and minimize the cost of connecting the
demand pairs in $S$ plus the penalty for the pairs not in $S$. It
is well-known that LP-based approximation algorithms for a
network design problem can be used to obtain a corresponding
approximation for its prize-collecting variant with only an
additional constant factor loss (\cite{bgsw93} and follow-up
work). Thus our results imply the same asymptotic approximation
ratios for the prize collecting variants. It is sometimes possible
to improve LP-based algorithms for prize-collecting versions via
more involved methods as was done recently for Steiner Forest 
\cite{AhmadiGHJM-prize}, 
however, it requires substantial problem-specific machinery.

Another important motivation is to address higher connectivity
variants.  Though the extension is nontrivial, we are able to use
ideas from the LP-based multicommodity algorithms to obtain algorithms
for the fault-tolerant variants of hop-constrained network design. We
believe that this work provides a stepping stone to handle the
challenges that remain in addressing 
fault-tolerant buy-at-bulk
as well as extending Theorem~\ref{thm:junction_main} for $k > 2$.
As we remarked earlier, all the fault-tolerant versions have natural
LP relaxations, however, bounding their integrality gap is
challenging. 

\subsection{Overview of Techniques}
\label{sec:techniques}

The proofs of Theorems \ref{thm:bab_main} and \ref{thm:hc_main} are
inherently connected, as multicommodity buy-at-bulk and
hop-constrained network design essentially reduce to each other if one
is willing to lose polylogarithmic factors in the approximation ratios
(see Remark \ref{rem:bab_reduction_direction}). 
Such a connection between the
problems has been pointed out in the past in the single-source setting
with respect to integral solutions \cite{hks09}; we extend this to an
LP based relation with a bit of technical work. 
Our main contribution is to prove Theorem \ref{thm:hc_main}.
We note that existing approximation algorithms for $h$-HCSF 
use distance-based embeddings \cite{hop_distance21}. Unlike single-cost 
network design problems, these hop-constrained tree embeddings 
cannot be directly used to argue about fractional solutions.
This is because the embeddings are \emph{partial}
trees that do not include all vertices, so there is no natural
projection of a fractional solution on $G$ to a solution on $T$.
Instead, we leverage 
a connection to the recent
\emph{congestion-based} hop-constrained partial tree embeddings
\cite{hop_congestion21}. 
We use this tool by solving the LP
relaxation for $h$-HCSF and using the fractional $x_e$ values to
define a capacitated graph $G'$ from which we sample trees.
This allows us to solve the relevant problems
on the tree and project the solution back to the input graph. 

Our second set of results is for the fault-tolerant setting of
hop-constrained problems. We approach the problems here
via \emph{augmentation}, where we start with an initial partial
solution and repeatedly augment to increase the connectivity of
terminal pairs whose requirement is not yet satisfied.
It is not clear how congestion-based tree embeddings are useful
for higher connectivity problems since polylogarithmically many paths
in the input graph can map to the same path in the tree embedding.
To overcome this,  we rely on a recent powerful approach of Chen et al. for Survivable
Set Connectivity \cite{cllz22} which has also found some extensions in
\cite{cj23}. The framework provides a clever way to
assign capacities to edges, avoiding the aforementioned difficulty
while ensuring that $O(\polylog n)$ rounds of an oblivious dependent
tree-rounding process are adequate to obtain a feasible
solution. Unlike
\cite{cllz22,cj23} who applied this framework with \racke's congestion
based tree embeddings, we need to use hop-constrained version
\cite{hop_congestion21}. Unfortunately, the framework does not
generalize in a nice way due to the additional complexity of hop-constraints.
We were able to overcome this difficulty in the single-source setting by analyzing the
diameter of components as the augmentation algorithm proceeds. The
multicommodity setting poses non-trivial challenges; thus we restrict
our attention to $k=2$ which is often relevant in practice. We use the
\emph{junction-tree} technique developed in \cite{chks09} for
buy-at-bulk network design and later extended by \cite{acsz11} for
fault-tolerant buy-at-bulk when $k=2$. This approach reduces the
multicommodity problem to the single-source setting. However, the
argument relies on the integral optimum solution and hence
Theorem~\ref{thm:junction_main} is only with respect to $\opt_{I}$.
Our key contribution for the multicommodity setting with $k=2$ is
showing the existence of a good junction structure in the fault-tolerant 
hop-constrained setting.

\subsection{Related Work}
\label{sec:related_work}

\subsubsection{Group Steiner Tree and Set Connectivity}

Group Steiner tree (GST), the single source variant of Set Connectivity, was first
introduced by Reich and Widmayer \cite{ReichW89}. Garg, Konjevod and Ravi studied
approximation algorithms for the problem and provided a randomized rounding
scheme, giving an $O(\log n \log r)$-approximation when the input graph $G$ is
a tree \cite{GargKR98}. This was extended to an $O(\log^2n \log r)$-approximation
for general graphs via tree embeddings \cite{bartal_tree,frt_tree}. These
results are essentially tight; there is no efficient $\log^{2-\eps}(r)$
approximation for GST \cite{HalperinK03}, and there is an
$\tilde \Omega(\log n \log r)$ lower bound on the integrality gap of GST
when the input graph is a tree \cite{HalperinKKSW07}.
Approximation algorithms when allowing for
slightly less efficient algorithms (running in quasi polynomial time)
have been studied, see \cite{ChekuriP05,GrandoniLL19,GhugeN22}.
Multicommodity Set Connectivity was first introduced by Alon et al.
\cite{aaabn06} in the online
setting. The first polylogarithmic approximation ratio for offline
Set Connectivity was given
by Chekuri, Even, Gupta, and Segev \cite{cegs11}.

In the higher connectivity setting, we are given an additional parameter $k$
and the goal is to connect each $S_i$-$T_i$ pair with $k$ edge-disjoint paths.
Gupta et al. showed that
Survivable Group Steiner Tree has a polylogarithmic approximation when
$k = 2$ \cite{gkr10}. Chalermsook, Grandoni,
and Laehkanukit \cite{CGL15} studied the higher connectivity version
of Set Connectivity. They gave a bicriteria approximation algorithm, where
the solution only contains $k / \beta$ edge-disjoint paths between each
pair and costs at most $\alpha$ times the optimal, for 
$\alpha, \beta \in \poly \log (n)$.
This approximation is via \racke congestion-based tree embeddings and
Group Steiner Tree rounding ideas from \cite{GargKR98}. In recent work by
Chen et al. \cite{cllz22}, similar ideas are used to give the first
true polylogarithmic approximation for Survivable Set Connectivity.

\subsubsection{Buy-at-Bulk Network Design}

Buy-at-bulk network design was introduced to the algorithms literature
by Salman et al. \cite{batb97} in 1997 in the single-source setting;
the multicommodity setting was first considered by Awerbuch and Azar 
\cite{awerbuch97}. The model was first introduced to capture 
multiple cable types with different capacities, in which the cost per
unit of bandwidth decreases as the capacity increases, giving us the 
general subadditive cost function we consider in this work.
This problem had been previously considered in practical 
settings and operations research; see 
\cite{GoldsteinR71,BienstockG96,BienstockG96} for a few such examples. 
Since then, it has been
extensively studied in a variety of different settings (uniform vs
non-uniform, node costs, special cases such as rent-or-buy, in the
online setting, and so on).  In this section, we limit our discussion
to relevant results in undirected graphs for the offline edge-cost
Buy-at-Bulk problem.

We begin with a discussion of the uniform setting: when each edge has
the same cost function $f_e$. In the multicommodity setting, Awerbuch
and Azar gave an $O(\log n)$ via probabilistic tree embeddings
\cite{awerbuch97,bartal_tree,frt_tree}; this bound improves to
$O(\log r)$ via a refinement
of the distortion in tree embeddings \cite{gnr10}. In the
single-source setting, the first constant factor approximation
algorithm was given by Guha, Meyerson, and Munagala \cite{gmm01},
later improved by Talwar~\cite{talwar02}, Gupta, Kumar, and
Roughgarden \cite{gmr03}, and finally Grandoni and Italiano
\cite{grandoni_italiano06}.  Andrews showed an
$\Omega(\log^{1/4 - \eps}(n))$ hardness of approximation factor for
multicommodity buy-at-bulk \cite{andrews04}.

In the nonuniform setting, Meyerson, Munagala, and Plotkin gave a
randomized $O(\log r)$-approximation \cite{mmp08} in the single-source
setting.  Chekuri, Khanna, and Naor \cite{ckn01} obtained a
deterministic algorithm by derandomizing the algorithm in \cite{mmp08}
via an LP relaxation which also established that the LP's integrality
gap is $O(\log r)$. The first nontrivial approximation for the general
multicommodity case was a simple randomized greedy algorithm given by
Charikar and Karagiozova \cite{bab_random_greedy} with an
approximation factor of $\exp(O(\sqrt{\ln n \ln \ln n}))$. Chekuri et
al. gave an $O(\log^4 r)$-approximation for nonuniform multicommodity
buy-at-bulk \cite{chks09}.  They also give a simple greedy
combinatorial $O(\log^3 r \log D)$-approximation.  In the case that
$D = \poly(n)$, Kortsarz and Nutov improved the approximation to
$O(\log^3 n)$ \cite{kortsarz_nutov11}.  On hardness of approximation,
multicommodity has an $\Omega(\log^{1/2 - \eps}(n))$ lower bound
\cite{andrews04} and single source has an $\Omega(\log \log n)$ lower
bound \cite{cgns08}.

We briefly discuss Buy-at-Bulk with protection.  Recall that the goal is
to route flow between demand pairs along $k$ edge-disjoint or
node-disjoint routes so that the routing is robust to edge/node
failures. Note that node-connectivity generalizes
edge-connectivity. This protected setting was first introduced by
Antonakapolous et al., who give an $O(1)$-approximation when $k = 2$ in
the single cable, node-connectivity setting \cite{acsz11}.  They also
show when $k = 2$ that an $\alpha$-approximation with respect to the LP 
for single-source implies an $O(\alpha \log^3 n)$-approximation for 
multicommodity via junction arguments, both
in the uniform and nonuniform settings. Chekuri and Korula subsequently
gave two results in the single-source node-connectivity setting:
$(\log n^{O(b)})$-approximation for $k = 2$ with $b$ cables and
$(2^{O(\sqrt n)})$-approximation for any fixed $k$ in the nonuniform
case \cite{chekuri_korula08}.  Finally, in the edge-connectivity
setting, Gupta, Krishnaswamy, and Ravi gave an
$O(\log^2n)$-approximation for the multicommodity uniform setting with
$k = 2$ \cite{gkr10}.  Obtaining a polylogarithmic approximation
algorithm for $k \geq 3$ in the uniform setting and $k \geq 2$ in
general remains an important open problem.

\subsubsection{Hop-Constrained Network Design}
Hop constrained network design was first introduced by Balakrishnan
and Altinkemer \cite{ba92}. It has since been shown that hop
constraints have many applications, including easier traffic
management \cite{acm81,monma_sheng86}, faster communication
\cite{deboeck_fortz18,akgun_tansel11}, improved tolerance to failure
\cite{woolston_albin88,raj12}, and more
\cite{lcm99,gpsv03,dahl98}. Unfortunately, hop constrained network
design problems have been shown to be hard; even MST with hop
constraints has no $o(\log n)$-approximation in polynomial time
\cite{bkp01}.  The majority of the approximation algorithms and IP
formulations for hop constrained network design problems have been
limited to simple connectivity problems, such as MST
\cite{dahl98,akgun_tansel11,pirkul_soni03,althaus05,kls05,ravi94,marathe98},
Steiner Tree \cite{kp97,kp06}, and $k$-Steiner Tree \cite{hks09,ks11}.
One reason for this limitation is the hardness of approximating more
complex problems. For example, hop-constrained Steiner Forest has no
polynomial time $o(2^{\log^{1-\eps}(n)})$-approximation \cite{dkr16}.
Note that the hardness results are when the hop constraints are
strict.  Much less is known on the hardness when hop constraints can
be relaxed by constant or poly-logarithmic factors.
Natural generalizations of Steiner Tree, including
Steiner Forest and Set Connectivity, were not studied in the
hop-constrained setting until recently.

Recent progress on hop-constrained tree embeddings has led to
significant progress in many hop-constrained network design
problems. This work was initiated by Haeupler, Hershkowitz, and Zuzic
\cite{hop_distance21}. They extended the line of work on
distance-based probabilistic tree embeddings (see
\cite{bartal_tree,frt_tree}) to the hop-constrained setting. In
particular, they show that although hop-constrained distances are
inapproximable by metrics, one can approximate hop-constrained
distances with partial trees that contain only a fraction of nodes of
the original graph. This led to the development of bicriteria
approximations for several offline problems ($k$-Steiner Tree, Relaxed
$k$-Steiner Tree, Group Steiner Tree, Group Steiner Forest), as well
as some online and oblivious problems. Note that bicriteria
approximations are necessary due to the hardness result mentioned
earlier. For Hop-Constrained Steiner Forest, they obtain a cost 
factor $\alpha = O(\log^3 n)$ and hop factor $\beta = O(\log^3 n)$ with 
respect to the optimal integral solution. For Hop-Constrained Set 
Connectivity, they obtain a cost factor $\alpha = O(\log ^6 n \log r)$ 
and hop factor $O(\log^3 n)$. 
Many of these results were then improved by Filtser \cite{filtser22}
via \emph{clan}-embeddings, in which embeddings contain multiple
copies of each vertex in the original graph. 
In particular for HCSF they improve the cost factor to $\tilde O(\log^2 n)$,
hop factor to $\tilde O(\log^2 n)$ and 
for HCSC they improve the cost factor to $\alpha = \tilde O(\log^4 n \log k)$, 
hop factor $\beta = \tilde O(\log^2 n)$.
Most recently, Ghaffari,
Haeupler, and Zuzic ~\cite{hop_congestion21} extended their
hop-constrained tree embedding results to give an oblivious routing
scheme with low congestion as well as distance dilation. This is
analogous to \racke's work with congestion-based tree embeddings
\cite{Racke08} without hop constraints.

Hop-constrained network design in the fault-tolerant model has not been 
considered in approximation before this work; however, there is 
a fair amount of literature on algorithms via IP-solvers. These models
have been shown to be of importance in several practical settings, including 
telecommunication networks \cite{Diarrassouba_Mahjoub_Almudahka_2024} and
transportation shipments \cite{BalakrishnanK17,Lhomme15}. 
Recall that 
we consider two generalizations of hop-constraints to higher connectivity. 
The first is Hop-Constrained Survivable Network Design (HC-SNDP), in which 
the goal is to route flow between terminal pairs along $k$ edge-disjoint 
paths, each of bounded hop-length. This problem was first considered in the 
single-pair setting \cite{HuygensMP04}; this is essentially min-cost flow 
if one allows for $O(k)$ hop distortion. There is extensive literature 
on IP formulations in the single-pair setting which we omit here. 
In the past decade, the single source setting \cite{MahjoubSU13} 
and multicommodity setting 
\cite{DiarrassoubaGMGP16,MahjoubPSU19,Diarrassouba_Mahjoub_Almudahka_2024,Botton13} 
have been studied as well. The second generalization is Fault-Tolerant 
Hop Constrained Network Design (Fault-HCND), in which the goal is to 
buy a subgraph that contains a path of bounded hop length between each 
set of terminal pairs despite the failure of any $k-1$ edges. This problem
was recently introduced by \cite{GouveiaL17} under the name 
``Network Design Problem with Vulnerability Constraints'' (NDPVC), 
where they provide several IP formulations for the $k=2$ case (protecting 
against a single edge failure). 
This work was improved in \cite{GouveiaML18} and extended to $k \geq 3$ 
by \cite{ArslanJL20}.

\mypara{Organization:} We describe the LP relaxations and relevant 
background on tools used in the paper in Section \ref{sec:prelim}. 
We prove Theorem \ref{thm:bab_main} by describing a 
reduction from Hop-Constrained Network Design to 
Buy-at-Bulk in the multicommodity setting in Section \ref{sec:bab}.
We then obtain polylogarithmic approximations for 
Hop-Constrained Network Design problems in Section 
\ref{sec:hc}. We discuss the fault-tolerant hop-constrained 
problem (Theorems \ref{thm:hcfault_main} and 
\ref{thm:junction_main}) in Sections \ref{sec:hc_fault}.
\section{Preliminaries}
\label{sec:prelim}

\subsection{Linear Programming Relaxations}

Following \cite{chks09}, we define path-based LP relaxations for both Buy-at-Bulk 
(see \bablp) and hop-constrained network design (see \hclp). 
For $i \in [r]$, let $\calP_i$ denote the set of all simple $s_i$-$t_i$ paths. 
For the hop-constrained setting, let $\hop(p)$ denote the hop-length of 
any path $p$, and let $\calP^h_i$ denote the set of all simple $s_i$-$t_i$ paths 
with hop-length at most $h$. In both relaxations, we define variables 
$x_e \in [0,1]$ for $e \in E$ and $f_p \in [0,1]$ for 
$p \in \bigcup_{i \in [r]} \calP_i$ in \bablp\ and 
$p \in \bigcup_{i \in [r]} \calP_i^h$ in \hclp; these are indicators 
for whether or not an edge $e$ or path $p$ is used. 
\hclp\ can be extended to the fault-tolerant setting, as described in Section 
\ref{sec:hc_fault}. 
The relaxations are described below. 

\vspace{1mm}
\begin{minipage}{0.45\textwidth}
  \begin{align*}
    &(\bablp) \\
    \min &\sum_{e \in E} c(e)x_e + 
    \sum_{i \in [r]} \delta(i) \sum_{p \in \calP_i} \ell(p) f_p \\
    \text{s.t.} &\sum_{p \in \calP_i} f_p = 1 \qquad \qquad \qquad \forall i \in [r] \\
    &\sum_{p \in \calP_i, e \in p} f_p \leq x_e \quad \forall e \in E, \forall i \in [r] \\
    &x_e, f_p \geq 0
  \end{align*}
\end{minipage}%
\vrule{}
\begin{minipage}{0.45\textwidth}
  \begin{align*}
    &(\hclp) \\
    \min &\sum_{e \in E} c(e)x_e \\
    \text{s.t.} &\sum_{p \in \calP^h_i} f_p = 1 &\forall i \in [r] \\
    &\sum_{p \in \calP^h_i, e \in p} f_p \leq x_e &\forall e \in E, \forall i \in [r] \\
    &x_e, f_p \geq 0
  \end{align*}
\end{minipage}

\begin{remark}
\label{rem:lp_separation}
  \hclp~and \bablp~can each be solved in polynomial time via 
  separation oracle on their respective duals.
\end{remark} 

\begin{lemma}
\label{lem:lp_assumption1}
	We assume that given any solution $(x, f)$ to \bablp~or \hclp~that 
	$x_e \geq \frac 1 {m^2}$ by increasing the approximation ratio by 
	at most $1 + 1/(m-1)$. 
\end{lemma}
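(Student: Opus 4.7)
The plan is to rescale any LP solution after dropping the fractional values on very small edges. Let $E_S = \{e \in E : x_e < 1/m^2\}$ denote the set of ``small'' edges. Since $|E_S| \leq m$, the total fractional capacity on small edges satisfies $\sum_{e \in E_S} x_e < m \cdot (1/m^2) = 1/m$. I would first use the capacity constraints of \bablp\ (respectively \hclp) together with a union bound to argue that, for each demand $i \in [r]$, the total flow routed on paths using at least one small edge is bounded:
\[
\sum_{p \in \calP_i : p \cap E_S \neq \emptyset} f_p \;\leq\; \sum_{e \in E_S} \sum_{p \in \calP_i : e \in p} f_p \;\leq\; \sum_{e \in E_S} x_e \;<\; \tfrac{1}{m}.
\]
Hence at least $(m-1)/m$ units of flow for each demand are routed through paths that avoid $E_S$ entirely. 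The identical argument works with $\calP_i^h$ in place of $\calP_i$ for \hclp.

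Next, I would define the modified solution $(x', f')$ by scaling up the ``large'' part of $(x, f)$ by the factor $m/(m-1) = 1 + 1/(m-1)$. Concretely, set $x'_e = \tfrac{m}{m-1} x_e$ for $e \notin E_S$ and $x'_e = 0$ for $e \in E_S$; set $f'_p = \tfrac{m}{m-1} f_p$ when $p$ avoids $E_S$ and $f'_p = 0$ otherwise. The demand constraint $\sum_p f'_p \geq 1$ follows directly from the $(m-1)/m$ lower bound above, and the capacity constraint $\sum_{p : e \in p} f'_p \leq x'_e$ is immediate for $e \notin E_S$ from the uniform scaling (and trivially holds for $e \in E_S$, where both sides are $0$). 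The cost of $(x', f')$ is exactly $\tfrac{m}{m-1}$ times the cost contribution of the large edges and paths in the original solution, and hence at most $(1 + 1/(m-1))\,\opt_{LP}$.

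Finally, the support property is immediate: for any $e$ with $x'_e > 0$, $x'_e = \tfrac{m}{m-1} x_e \geq \tfrac{m}{m-1} \cdot \tfrac{1}{m^2} \geq \tfrac{1}{m^2}$. So after restricting the instance to the subgraph $G' = (V, E \setminus E_S)$ (on which $(x', f')$ lives), every edge used by the LP solution has $x_e \geq 1/m^2$, which is exactly the assumption the lemma licenses. Since any integral solution feasible for $G'$ is also feasible for $G$, subsequent rounding arguments carry over with only the claimed multiplicative loss.

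There is no deep obstacle here; this is a routine rescaling argument. The only wrinkle worth flagging is the equality constraint $\sum_p f_p = 1$ in both LPs: after the scaling one may have $\sum_p f'_p > 1$, in which case I would rescale each demand's flow down to restore equality, which only decreases cost and preserves all capacity bounds. The argument is uniform across \bablp\ and \hclp\ since it never invokes the hop constraint beyond the fact that paths in $\calP_i^h$ are preserved under the ``avoid $E_S$'' restriction.
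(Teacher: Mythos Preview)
Your proposal is correct and follows essentially the same approach as the paper's proof: both bound the total flow through small edges by $1/m$ via the capacity constraints summed over $E_S$, zero out those edges and the paths using them, and scale the remainder by $1+1/(m-1)$ to restore feasibility. Your write-up is slightly more explicit (defining $E_S$, noting the restriction to $G'$, and flagging the equality-constraint fixup), but the argument is identical in substance.
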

\begin{proof}
	Let $(x, f)$ be an optimal solution to \bablp. Set $x_e = 0$ for all 
	$x_e < \frac 1 {m^2}$, $f_p = 0$ for all $p$ such that $e \in p$ for 
	some $x_e = 0$, and multiply all remaining $x_e$ and $f_p$ values by 
	$1 + 1/(m-1)$. It is easy to verify that the second constraint remains satisfied, 
	as both $x_e$ and $f_p$ values are scaled by the same amount. For the first, 
	note that after setting $f_p = 0$ if $x_e < \frac 1 {m^2}$, 
	$\sum_{p \in \calP_i} f_p$ decreases by at most 
	$\sum_{e: x_e < m^2} \sum_{p \in \calP_i, e \in p} f_p \leq 
	\sum_{e: x_e < m^2} x_e < \frac 1 m$. Then, we scale all remaining values 
	up by $1 + 1/(m-1)$, so the final sum is at least 
	$\sum_{p \in \calP_i} f_p \geq (1 - 1/m)(1 + 1/(m-1)) = 1$.
	If $\sum_{p \in \calP_i} f_p > 1$ for some $i$, 
	we can uniformly reduce all $p \in \calP_i$ until $\sum_{p \in \calP_i} f_p = 1$.  
	The argument for \hclp~is analogous.  
\end{proof}

\subsection{Hop-Constrained Congestion-Based Embeddings}

The results in this paper use the congestion-based hop-constrained embeddings
developed in \cite{hop_congestion21}. Given a graph $G = (V, E)$,
a \emph{partial tree embedding} $(T, M)$ is a rooted tree $T$ such that
$V(T) \subseteq V$, along with a mapping $M: E(T) \to 2^{E}$
that maps each edge $uv \in E(T)$ to a path $M_{uv}$ from $u$ to $v$ in $G$.
We let $P_T(u, v) \subseteq E(T)$ denote the unique $u$-$v$ path in $T$.
We extend the definition of $M_{uv}$ for $uv \notin E(T)$ to be
the concatenation of $M_{e_1}, \dots, M_{e_t}$,
where $e_1, \dots, e_t = P_T(u,v)$.
For $e \in E(G)$, let $\flow(M_{uv}, e)$ denote the number of times
$M_{uv}$ traverses $e$.
We use the following lemma on tree distributions in \cite{hop_congestion21}.

\begin{lemma}[\cite{hop_congestion21}]
\label{lem:d1_router}
	For every complete capacitated graph $G = (V, E, x)$
	and every $\eps \in (0, 1/3)$, there exists a distribution $\calT$
	over partial tree embeddings such that:
	\begin{enumerate}
		\item \emph{(Exclusion probability)}
		For each $v \in V$, $\Pr_{(T, M) \sim \calT}[v \in V(T)] \geq 1-\eps$,
		\item \emph{(Hop dilation)}
		For each $(T, M) \sim \calT$, $\forall u, v \in V(T)$,
		$\hop(M_{uv}) \leq O\left(\frac{\log^3 n}{\eps}\right)$,
		\item \emph{(Expected congestion)} For each $e \in E$ with capacity $x_e$,
		\\ $\E\left[\sum_{u,v \in V(T)} \flow(M_{uv}, e) \cdot x_{uv}\right]
		\leq O\left(\log n \cdot \log{\frac {\log n}\eps}\right) \cdot x_e$.
	\end{enumerate}
	Furthermore, if the ratio $\frac{\max_{e \in E} x(e)}{\min_{e \in E} x(e)}$
	is at most $\poly(n)$,
	the height of all trees in the support of $\calT$ is bounded by $O(\log n)$.
	Following \cite{hop_congestion21}, we refer to these distributions
	as $\droute$-routers.
\end{lemma}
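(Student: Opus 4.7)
The plan is to establish this lemma as a hop-constrained analog of \racke's oblivious tree-based routing, which naturally decomposes into a one-shot partial embedding primitive followed by multiplicative-weights boosting.

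First, I would construct a single random partial tree $(T,M)$ with per-sample hop dilation $O(\log^3 n/\eps)$, vertex inclusion probability $\geq 1-\eps$, and per-edge expected load at most $O(\log n)\cdot x_e$. The primitive is a recursive hop-constrained low-diameter decomposition, analogous to the partial tree embeddings of Haeupler, Hershkowitz, and Zuzic \cite{hop_distance21}, but carried out on the capacitated graph $G$. At each level one samples a probabilistic partition of $V$ into clusters whose induced subgraphs have hop-$\diam$ at most $O(\log n /\eps)$ times the preceding level, with the guarantee that two vertices at hop-distance $r$ remain in the same cluster with probability $\geq 1-\eps/\log n$. The tree $T$ is the laminar tree of these partitions, restricted to vertices not dropped at any level; the map $M_{uv}$ concatenates short cluster-internal paths in $G$. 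Union bounding the drop probability across the $O(\log n)$ recursion levels gives the exclusion bound, while multiplying diameters across levels yields the hop dilation $O(\log^3 n/\eps)$.

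Next I would bound the congestion of this one-shot embedding. For a fixed edge $e$ with capacity $x_e$, the only pairs $(u,v)\in V(T)^2$ whose path $M_{uv}$ can cross $e$ are those whose separation first occurs in a cluster containing $e$ at some level of the recursion. A standard FRT-style charging argument shows that in expectation each level contributes $O(1)\cdot x_e$ to $\sum_{u,v}\flow(M_{uv},e)\cdot x_{uv}$, giving total expected per-sample load $O(\log n)\cdot x_e$. This mirrors the classical $O(\log n)$ distance distortion in FRT, translated to the congestion setting.

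Finally I would boost this one-shot guarantee into the claimed distribution via multiplicative weights updates, following \racke. Initialize weights $w_e = 1/m$ on each edge and, for $N = O(\log^2 n / \eps^2)$ rounds, sample a one-shot partial tree embedding with respect to the reweighted capacities $x_e/w_e$ and update $w_e \leftarrow w_e \cdot \exp(\eta \cdot \load_{(T,M)}(e))$ with an appropriate $\eta$. A standard LP-duality / regret analysis shows that the uniform distribution over the $N$ sampled trees has expected congestion $O(\log n \cdot \log(\log n/\eps))\cdot x_e$ on every edge; the extra $\log(\log n/\eps)$ is the overhead for collapsing the per-sample worst-case into an average over $e$. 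The bounded-height addendum for the $\mathrm{poly}(n)$ capacity-ratio case follows because only $O(\log n)$ cluster-diameter scales are then relevant, so the recursion can be truncated.

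The main obstacle is the one-shot primitive. Standard FRT/Bartal decompositions place every vertex into the tree, but hop-constrained distances provably admit lower bounds against any full tree embedding, forcing the partial-tree formulation and a vertex-drop mechanism. Simultaneously controlling the exclusion probability, hop-dilation, and per-edge load inside the recursion is delicate: the per-level drop probability must be $\Theta(\eps/\log n)$ to sum to $\eps$ overall, while cluster-diameter growth per level must remain $O(\log n /\eps)$, and the padding analysis that yields the $O(1)\cdot x_e$ per-level congestion charge has to respect both. This simultaneous balance is the technical heart of \cite{hop_congestion21}.
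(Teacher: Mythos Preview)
This lemma is not proved in the paper at all: it is stated as a black-box result imported from \cite{hop_congestion21}, with the only accompanying text being ``We use the following lemma on tree distributions in \cite{hop_congestion21}.'' There is therefore no in-paper proof to compare your proposal against. Your sketch is a plausible high-level outline of how a congestion-preserving hop-constrained tree embedding might be built (partial-tree low-diameter decomposition plus multiplicative-weights boosting in the style of \racke), but assessing its fidelity to the actual construction in \cite{hop_congestion21} is outside the scope of this paper; for the purposes of the present work you should simply cite the lemma rather than attempt to reprove it.
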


For a partial tree embedding $(T, M)$ and an edge $e' \in E(T)$,
we let $A_{e'}, B_{e'} \subseteq V(T)$ denote the two components of $T \setminus e'$.
Let $E(A_{e'}, B_{e'})$ denote the set of edges of $G$ with one endpoint in $A_{e'}$ and
the other in $B_{e'}$; these are the edges on the cut $\delta_{G[V(T)]}(A_{e'})$.
We let $y(e') = \sum_{e \in E(A_{e'}, B_{e'})} x_{e}$; we call this the 
\emph{capacity} of a tree edge $e'$. For ease of notation, 
we refer to the \emph{inverse} of the mapping function $M$ 
as $M^{-1}(e) = \{e' \in E(T): e \in M_{e'}\}$.

\begin{lemma}
\label{claim:edge_load}
	Let $e \in E(G)$. For a randomly sampled tree $(T, M)$ from a $\droute$-router, \\
	$\E\left[\sum_{e' \in M^{-1}(e)} y(e')\right]
	\leq O(\log n \log \frac{\log n}\eps) \cdot x_e$.
\end{lemma}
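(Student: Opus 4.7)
The plan is to massage the left-hand side into the ``expected congestion'' quantity that already appears in item~3 of Lemma \ref{lem:d1_router}, and then invoke that bound directly. The key observation is that, because $G$ is complete, each pair $u,v \in V(T)$ corresponds to an actual edge $uv \in E(G)$ with capacity $x_{uv}$, and the cut $E(A_{e'}, B_{e'})$ is precisely the set of such pairs whose tree path $P_T(u,v)$ contains $e'$.

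First I would expand the definition of $y(e')$ and swap the order of summation. Using the above observation,
\begin{align*}
\sum_{e' \in M^{-1}(e)} y(e')
&= \sum_{e' \in M^{-1}(e)} \sum_{f \in E(A_{e'},B_{e'})} x_f
= \sum_{e' \in M^{-1}(e)} \sum_{\substack{u,v \in V(T)\\ e' \in P_T(u,v)}} x_{uv} \\
&= \sum_{u,v \in V(T)} x_{uv} \cdot \bigl|\{e' \in P_T(u,v) : e \in M_{e'}\}\bigr|.
\end{align*}

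Next I would relate the counting factor on the right to $\flow(M_{uv}, e)$. By definition, for $u,v \in V(T)$ with $P_T(u,v) = e_1,\ldots,e_t$, the path $M_{uv}$ is the concatenation $M_{e_1} \circ \cdots \circ M_{e_t}$, so
\[
\flow(M_{uv}, e) = \sum_{e' \in P_T(u,v)} \flow(M_{e'}, e)
\ge \bigl|\{e' \in P_T(u,v) : e \in M_{e'}\}\bigr|,
\]
since each term $\flow(M_{e'},e)$ is at least $1$ whenever $e \in M_{e'}$. Combining with the previous display,
\[
\sum_{e' \in M^{-1}(e)} y(e') \;\le\; \sum_{u,v \in V(T)} \flow(M_{uv}, e) \cdot x_{uv}.
\]

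Finally, taking expectations over $(T,M) \sim \calT$ and applying the expected-congestion bound from item~3 of Lemma \ref{lem:d1_router} to the right-hand side yields the claimed bound $O\!\left(\log n \log \tfrac{\log n}{\eps}\right) \cdot x_e$. There is no serious obstacle: the entire argument is a summation swap together with the fact that the tree-to-graph mapping for non-tree pairs is defined by concatenation, so that the congestion guarantee for $V(T)$-pairs already accounts for the contributions of every tree edge whose mapping uses $e$.
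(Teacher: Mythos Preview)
Your proposal is correct and follows essentially the same approach as the paper: expand $y(e')$, swap the order of summation to count over pairs $u,v\in V(T)$, identify the resulting count with (at most) $\flow(M_{uv},e)$ via the concatenation definition of $M_{uv}$, and apply the expected-congestion bound of Lemma~\ref{lem:d1_router}. If anything you are slightly more careful, writing the inequality $|\{e'\in P_T(u,v):e\in M_{e'}\}|\le \flow(M_{uv},e)$ where the paper asserts equality; your direction is the one actually needed.
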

\begin{proof}
	Fix $e \in E$;
	$\sum_{e' \in M^{-1}(e)} y(e')
	= \sum_{e' \in M^{-1}(e)} \sum_{uv \in E(A_{e'}, B_{e'})} x_{uv}$.
	If $uv \in E(A_{e'}, B_{e'})$ for some $e' \in E(T)$, then 
	$e'$ must be on $P_T(u,v)$.
	Therefore, $x_{uv}$ is counted once for every edge $e' \in P_{T(u,v)}$ 
	such that $e \in M_{e'}$. By definition, this is $\flow(M_{uv}, e)$.
	By the congestion guarantee on the distribution of partial tree embeddings,
	$\E\left[\sum_{u, v \in E(T)} \flow(M_{uv}, e) \cdot x_{uv}\right]
		\leq O\left(\log n \cdot \log \frac{\log n} \eps\right) \cdot x_e.$
\end{proof}

\begin{remark} We will use $\droute$-routers with $x_e$ values from \hclp
~as capacities. We can assume the height of the trees in the support of $\calT$
are at most $O(\log n)$, since $x_e$ values
are between $\frac 1 {m^2}$ and 1 by Lemma \ref{lem:lp_assumption1}. Second,
we will use Lemma~\ref{lem:d1_router} on graphs that are not
complete. This is without loss of generality; we can set $x_e = \frac 1 {m^2}$
for all $e \notin E$, so the total amount of flow carried on edges not in $E$ is
at most $\frac 1 {m}$. It is easy to verify that this does not affect the
results in this paper.
\end{remark}

\subsection{Group Connectivity and Oblivious Rounding}

Set Connectivity was first studied in the single-source setting, known
as Group Steiner Tree (GST), which has an
approximation algorithm via tree embeddings \cite{GargKR98}
(see Appendix \ref{sec:related_work}).
Chekuri et al. extended the work on GST to 
Set Connectivity, obtaining a polylogarithmic approximation ratio and
integrality gap \cite{cegs11}. Motivated by higher connectivity variants,
Chalermsook, Grandoni, and Laekhanukit~\cite{CGL15} describe a
rounding algorithm based on \racke's seminal congestion-based tree embeddings
\cite{Racke08} for Set Connectivity that is oblivious to the set pairs.
We use the following lemma summarizing their result,
and refer to the algorithm as $\treeround(T, x, f)$.

\begin{lemma}[\cite{CGL15,cllz22}]
\label{lem:setconnectivity-tree-rounding}
	Consider an instance of Set Connectivity on an $n$-node tree $T=(V,E)$
	with height $h$ and let $x: E \rightarrow [0,1]$. Suppose $A, B
	\subseteq V$ are disjoint sets and suppose $K \subseteq E$ such that
	$x$ restricted to $K$ supports a flow of $f \le 1$ between $A$ and
	$B$. There is a randomized algorithm that is oblivious to $A, B, K$
	(hence depends only on $x$ and value $f$) that outputs a subset $E' \subseteq E$
	such that (i) The probability that $E' \cap K$ connects $A$ to $B$ is
	at least a fixed constant $\phi$ and (ii) For any edge $e \in E$, the
	probability that $e \in E'$ is $\min\{1,O(\frac{1}{f} h \log^2 n) x(e)\}$.
\end{lemma}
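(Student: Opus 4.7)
The plan is to use GKR-style dependent rounding on $T$, with the fractional capacities scaled up so that the combined mass of surviving flow paths is large enough for a second-moment argument to guarantee an $A$-to-$B$ path in $E' \cap K$ with constant probability. Concretely, root $T$ at an arbitrary vertex; we may assume $x$ is monotone along root-to-leaf paths (that is, $x_{e^p} \ge x_e$ for every non-root edge $e$ with parent edge $e^p$), which is the situation in our applications where $T$ arises from a $\droute$-router and $x$ is the associated cut capacity. Let $\lambda = \Theta(h \log^2(n)/f)$ and set $\tilde x_e = \min\{1, \lambda \cdot x(e)\}$. Perform top-down dependent rounding: include each root-level edge $e$ in $E'$ independently with probability $\tilde x_e$; for each deeper edge $e$ with parent edge $e^p$, include $e$ in $E'$ with conditional probability $\tilde x_e / \tilde x_{e^p}$ given $e^p \in E'$, and with probability $0$ otherwise. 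This rule depends only on $T$ and $x$, so the algorithm is automatically oblivious to $A, B, K$.

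Property (ii) is immediate from a top-down induction: the marginal probability that $e \in E'$ equals $\tilde x_e$, which is $\min\{1, \lambda x(e)\} = \min\{1, O((h\log^2 n)/f)\cdot x(e)\}$ as required.

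For Property (i), decompose the fractional $A$-$B$ flow on $K$ of value $f$ into tree paths $Q_1, \dots, Q_t \subseteq K$, each going from some $u_j \in A$ to some $v_j \in B$, with nonnegative weights $\alpha_j$ summing to $f$. Each $Q_j$ uses at most $2h$ edges since $T$ has height $h$. Let $X_j$ be the indicator that every edge of $Q_j$ lies in $E'$; the event $\{X_j = 1\}$ guarantees that $A$ and $B$ are connected in $E' \cap K$. A telescoping computation of the conditional sampling probabilities along $Q_j$ (going from $u_j$ up to the LCA of $u_j, v_j$ and back down to $v_j$) shows that $\E[X_j]$ is on the order of $\tilde x$ at the relevant edges near the LCA; aggregating with weights $\alpha_j$ yields $\E[\sum_j \alpha_j X_j] = \Omega(1)$ for our choice of $\lambda$. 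A Paley--Zygmund/second-moment step then gives $\Pr[\sum_j X_j \ge 1] \ge \phi$ for some absolute constant $\phi$.

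The main obstacle is the second-moment bound on $\sum_j \alpha_j X_j$. Pairs of tree paths sharing common ancestors produce positively correlated indicators $X_j, X_{j'}$, and the strength of the correlation grows with the depth of the shared ancestry. Carefully bounding $\E[X_j X_{j'}]$ by grouping pairs of paths according to the location of their combined LCA in $T$, and then balancing against the tree height $h$ and the Paley--Zygmund denominator, is what forces the scaling $\lambda = \Theta(h \log^2(n)/f)$ and produces the logarithmic factors in Property (ii). This is the quantitative core of the GKR rounding as extended to Set Connectivity in \cite{CGL15, cllz22}.
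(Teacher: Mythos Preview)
The paper does not prove this lemma; it is quoted as a black box from \cite{CGL15,cllz22}, which in turn build on the Garg--Konjevod--Ravi (GKR) rounding \cite{GargKR98}. Your algorithmic proposal---scale capacities by $\lambda=\Theta(h\log^2 n/f)$ and perform top-down dependent rounding with conditional probability $\tilde x_e/\tilde x_{e^p}$---is indeed the procedure those references use, and your derivation of Property~(ii) via telescoping marginals is correct.

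The gap is in Property~(i). Your claimed intermediate step, that $\E\bigl[\sum_j \alpha_j X_j\bigr]=\Omega(1)$, is false in general. The weights $\alpha_j$ sum to $f$, so trivially $\E\bigl[\sum_j \alpha_j X_j\bigr]\le f$, and $f$ can be arbitrarily small. More concretely, take a star with root $r$, $A=\{r\}$, $B$ the $n-1$ leaves, and $x(e)=f/(n-1)$ on every edge. Each path $Q_j$ is a single edge with $\alpha_j=f/(n-1)$; after scaling, $\tilde x_e=\Theta(\log^2 n/(n-1))$, so $\E\bigl[\sum_j\alpha_j X_j\bigr]=\Theta\bigl(f\log^2 n/(n-1)\bigr)\to 0$. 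Thus Paley--Zygmund applied to $\sum_j\alpha_j X_j$ gives nothing. (Note the lemma's conclusion does hold in this example---some leaf is selected with probability $1-(1-\tilde x_e)^{n-1}\approx 1$---just not by your route.)

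The actual GKR analysis does use a second-moment idea, but not on flow-path indicators weighted by $\alpha_j$. For a single group $B$ and root $r$, one lower-bounds $\Pr[\text{some }v\in B\text{ connected to }r]$ by an inductive argument on the height of $T$ (or, equivalently, a Janson-type bound on the union of the events $\{v\text{ connected to }r\}$ over $v\in B$), obtaining $\Omega\bigl(f/(h\log n)\bigr)$ per round; the $h\log^2 n$ in the scaling comes from repeating $O(h\log n)$ times and a further $\log n$ from the union over levels. The extension in \cite{CGL15} to a pair $(A,B)$ and to obliviousness with respect to $K$ proceeds by combining two such root-to-set bounds at the least common ancestor level, not by a direct path-indicator calculation. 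Your sketch correctly names the ingredients but substitutes an incorrect first-moment estimate for the actual inductive bound; as written, the argument does not go through.
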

\section{LP-Based Reduction from Buy-at-Bulk to Hop-Constraints}
\label{sec:bab}

In this section, we consider Buy-at-Bulk network design. We will prove the 
following lemma:

\begin{lemma}
\label{lem:bab_reduction}
  Suppose we are given an $(\alpha, \beta)$-approximation for 
  HCSF on a graph with $r \cdot \poly(n)$ vertices with respect to 
  its optimal fractional solution.
  Then there exists an $O(\log D(\alpha \log n + \beta))$-approximation
  for multicommodity Buy-at-Bulk with respect to $\opt_{LP}$.
  This reduction can be extended to the Set Connectivity setting.
\end{lemma}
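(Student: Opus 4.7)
\medskip\noindent\textbf{Proof plan.}
The plan is to reduce multicommodity Buy-at-Bulk to $O(\log D \cdot \log n)$ instances of $h$-HCSF by bucketing the demand pairs according to their demand and their LP path-length, and in each bucket applying the hypothesized $(\alpha,\beta)$-approximation. The main technical step is a Markov-inequality argument that converts a fractional \bablp\ solution into a feasible fractional \hclp\ solution of comparable cost.

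First we solve \bablp\ to obtain a fractional solution $(x^*, f^*)$ of value $\opt_{LP}$, and let $L_i := \sum_{p \in \calP_i} \ell(p)\, f^*_p$ denote the average LP length for pair $i$, so $\opt_{LP} = \sum_e c(e)\, x^*_e + \sum_i \delta(i) L_i$. We bucket the pairs by demand into $\lceil \log D \rceil$ groups $D_j := \{i : \delta(i) \in [2^{j-1}, 2^j)\}$, and within each $D_j$ by LP length into $D_{j,k} := \{i \in D_j : L_i \in [2^{k-1}, 2^k)\}$. After the standard preprocessing that rounds edge lengths to polynomially bounded integers (losing at most a constant factor), only $O(\log n)$ length buckets are needed, giving $O(\log D \log n)$ sub-buckets in total.

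For each sub-bucket $D_{j,k}$ we build an $h$-HCSF instance on the graph $G^{j,k}$ obtained from $G$ by subdividing every edge $e$ into $\lceil \ell(e) \rceil$ unit-length edges of cost $c(e)/\lceil \ell(e) \rceil$; this has at most $r\cdot\poly(n)$ vertices, and in it hop-length equals routing length. We set the hop bound $h_{j,k} := 2^{k+1}$ and the terminal set $\{(s_i, t_i) : i \in D_{j,k}\}$. By Markov's inequality, for every $i \in D_{j,k}$ at least half of the LP flow $f^*_{i,\cdot}$ is supported on paths of length at most $2 L_i \le 2^{k+1} = h_{j,k}$. Restricting to these short paths and scaling by $2$ yields a feasible \hclp\ solution with edge variables $\tilde x_e \le 2 x^*_e$ and objective at most $2\,\opt_{LP}$. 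Invoking the $(\alpha,\beta)$-approximation therefore returns a subgraph $F_{j,k}$ with $c(F_{j,k}) \le 2\alpha\,\opt_{LP}$ in which every pair of $D_{j,k}$ is connected by a path of hop-length at most $\beta\, h_{j,k}$.

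Take $F := \bigcup_{j,k} F_{j,k}$. The fixed cost satisfies $c(F) \le O(\alpha \log D \log n)\,\opt_{LP}$ since we union over $O(\log D \log n)$ sub-bucket solutions. For the routing term, each pair $i \in D_{j,k}$ satisfies $\delta(i)\,\ell_F(s_i, t_i) \le 2^j \cdot \beta \cdot 2^{k+1} \le 8\beta\, \delta(i) L_i$ because $\delta(i) L_i \ge 2^{j+k-2}$; summing over all pairs yields total routing cost at most $8\beta \sum_i \delta(i) L_i \le 8\beta\, \opt_{LP}$. Combining, $F$ has Buy-at-Bulk objective at most $O(\alpha \log D \log n + \beta)\,\opt_{LP} = O(\log D\,(\alpha \log n + \beta))\,\opt_{LP}$, as required. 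The Set Connectivity extension is handled identically: replace terminal pairs by set pairs, use the analogous set-connectivity variant of \bablp, and invoke an HCSC approximation in place of HCSF; the bucketing and Markov arguments go through unchanged. The only part that requires care is the Markov-based conversion of a fractional \bablp\ solution into a feasible fractional \hclp\ solution on the length-subdivided graph with the correct scaling; the rest is bookkeeping over the two bucketings.
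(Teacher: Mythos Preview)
Your plan is correct and follows essentially the same route as the paper: bucket by demand, reduce to unit edge lengths via subdivision, bucket pairs by their fractional routing length $L_i$, and use Markov's inequality to convert the \bablp\ solution into a feasible \hclp\ solution at twice the cost. The paper carries out these reductions sequentially (first reduce to $\delta(i)=1$ via Claim~\ref{claim:bab_demand_assumption}, then to $\ell(e)=1$ via Claim~\ref{claim:bab_length_assumption}, then bucket by $h_i$ and invoke the HCSF oracle in Algorithm~\ref{algo:reduction}), whereas you phrase it as a simultaneous double bucketing; the difference is cosmetic, and your routing-cost bound of $8\beta\,\opt_{LP}$ is in fact the same as the paper's Lemma~\ref{lem:reduction_routingcost}.

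The one place where you are too casual is the ``standard preprocessing that rounds edge lengths to polynomially bounded integers (losing at most a constant factor).'' This is not standard: it is precisely the content of Claim~\ref{claim:bab_length_assumption}, and it only loses a constant factor \emph{because demands have already been made uniform}. The argument there is that if $\ell_{\max}$ is large then some pair sends $\ge 1/m^2$ flow through the longest edge (Lemma~\ref{lem:lp_assumption2}), so the routing cost is already $\ge \ell_{\max}/m^2$; raising all tiny lengths up to $\ell^* \approx \ell_{\max}/(rnm^2)$ then costs at most $rn\ell^*=O(\ell_{\max}/m^2)$ extra in routing when all $\delta(i)=1$. Without the demand normalization the increase would be $\sum_i \delta(i)\cdot n\ell^*$, which you cannot control against a lower bound that involves only a single $\delta(i^*)$. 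So the length preprocessing must be done \emph{inside} each demand bucket (after rescaling demands to $[1,2)$), and the $L_i$ you bucket on should be recomputed from the LP on the preprocessed graph rather than the original one. Once you make that ordering explicit, your argument goes through and the subdivided graphs indeed have $r\cdot\poly(n)$ vertices as required by the hypothesis.
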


Note that Theorem \ref{thm:bab_main} follows directly as a corollary of 
Lemma \ref{lem:bab_reduction} and Theorem \ref{thm:hc_main}.
We will discuss the reduction in the set connectivity setting, as this 
generalizes multicommodity.
Let $G = (V, E)$ be a graph with edge costs 
$c: E \to \R_{\geq 0}$, edge lengths $\ell: E \to \R_{\geq 0}$, 
and demand pairs $\{S_i, T_i\}_{i \in [r]}$ 
with demands $\delta(i)$. Recall that our goal is to find $F \subseteq E$ 
minimizing $c(F) + \sum_{i \in [r]} \delta(i) \ell(p_i)$, where $p_i$ 
is the shortest $S_i$-$T_i$ path in $F$.

The reduction consists of two main components. First, we show that we can 
assume  $\delta(i) = 1$ for all $i \in [r]$, $\ell(e) = 1$ 
for all $e \in E$ with an $O(\log D)$ loss in the approximation ratio. 
The idea is to partition
the set of terminal pairs based on their demands and solve each 
subinstance separately, and also subdivide each edge into paths 
of length $\ell(e)$, where each new edge has length 1. 
There are some technical difficulties in handling settings where 
the maximum edge length is large, as this reduction could introduce 
exponentially many variables. This can be handled via scaling tricks 
on the fractional solution. The proof of these assumptions 
is given in Section \ref{sec:bab_assumptions}.

The second component under these assumptions is given by Algorithm 
\ref{algo:reduction}. Note that $\ell(e) = 1$ implies that 
$\ell(p) = \hop(p)$ for all paths $p$. Therefore, we can use 
the LP to ``guess'' the optimal hop-requirement for each demand pair. 
We then partition demand pairs based on this hop-constraint and 
solve each subinstance separately. 

\begin{algorithm}[H]
\caption{Reducing BaB to Hop-Constrained Network Design}
\label{algo:reduction}
  \begin{algorithmic}
    \State $F \gets \emptyset$
    \State $(x, f) \gets$ feasible solution to \bablp
    \State For $i \in [r]$, $h_i \gets \sum_{p \in \calP_i} \ell(p)f_p$
    \For{$j = 1, \dots, \log n + 1$}
      \State $T_j \gets \{i: h_i \in [2^{j-1}, 2^j)\}$
      \State $F_j \gets (\alpha, \beta)$-approx for HCSC with 
      terminal pairs $T_j$, hop constraint $2^{j+1}$
      \State $F \gets F \cup F_j$
    \EndFor \\
    \Return F
  \end{algorithmic}
\end{algorithm}

\begin{lemma}
\label{lem:reduction_routingcost}
  The total routing cost given by Algorithm \ref{algo:reduction} 
  is at most $4\beta \cdot \opt_{LP}$.
\end{lemma}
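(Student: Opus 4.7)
The plan is to bound the routing cost pair by pair and then sum over the buckets. After the reductions of Section~\ref{sec:bab_assumptions} we may assume $\delta(i)=1$ and $\ell(e)=1$ for every edge, so $\ell(p)=\hop(p)$ for every path and $h_i=\sum_{p\in\calP_i}\ell(p)f_p$ is precisely the fractional hop-length that $(x,f)$ assigns to pair $i$. The routing term of the LP objective then equals $\sum_i h_i$, which gives the a priori bound $\sum_i h_i \le \opt_{LP}$.

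The first step is to argue that each subinstance handed to the HCSC subroutine is feasible with hop bound $2^{j+1}$, so that the invoked $(\alpha,\beta)$-approximation is guaranteed to return an $F_j$ with the required hop-length guarantee. For $i\in T_j$ we have $h_i<2^j$; since $h_i$ is a convex combination of path lengths weighted by $f_p$, a Markov-type argument shows that the $f$-mass on paths of length strictly greater than $2^{j+1}$ is less than $h_i/2^{j+1}<1/2$. Restricting $(x,f)$ to paths of length at most $2^{j+1}$ and scaling by $2$ therefore yields a feasible fractional \hclp\ solution with hop bound $2^{j+1}$, so the returned $F_j\subseteq F$ connects every $i\in T_j$ by a path of hop-length at most $\beta\cdot 2^{j+1}$.

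The second step combines this upper bound with the lower side of the bucket. For $i\in T_j$ the defining inequality $h_i\ge 2^{j-1}$ gives $\beta\cdot 2^{j+1}=4\beta\cdot 2^{j-1}\le 4\beta h_i$, i.e.\ pair $i$ contributes routing cost at most $4\beta h_i$ to the objective. Summing over buckets,
\[
\sum_i \delta(i)\,\ell_F(s_i,t_i)\;\le\;\sum_j\sum_{i\in T_j} 4\beta h_i\;=\;4\beta\sum_i h_i\;\le\;4\beta\cdot\opt_{LP}.
\]
The one point that needs mild care is that every pair with nonzero routing cost lies in some bucket: because $\calP_i$ consists of simple paths, $h_i\in[1,n)$ whenever $s_i\ne t_i$, and pairs with $s_i=t_i$ contribute zero; the range $j\in\{1,\ldots,\log n+1\}$ covers $[1,2n)$ and thus captures every nontrivial pair. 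The main technical point is the Markov step that produces a feasible fractional HCSC solution, but with the geometric bucketing of $h_i$ this is essentially forced; everything else is bookkeeping on the exponents.
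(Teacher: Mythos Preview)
Your proof is correct and follows essentially the same approach as the paper: bucket pairs by $h_i$, use $h_i\ge 2^{j-1}$ to convert the hop guarantee $\beta\cdot 2^{j+1}$ into $4\beta h_i$, and sum. The only organizational difference is that you fold the Markov-style feasibility argument for the HCSC subinstance into this lemma, whereas the paper defers that step to the companion fixed-cost lemma (Lemma~\ref{lem:reduction_fixedcost}) and here simply takes the $(\alpha,\beta)$-guarantee of $F_j$ as given.
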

\begin{proof}
  Fix $i \in [r]$. Since $\ell(p) = \hop(p)$, $\ell(p) \leq n$ 
  for all $p \in \calP_i$. Thus $h_i \leq n \sum_{p \in \calP_i} f_p = n$ 
  and $h_i \geq \sum_{p \in \calP_i} f_p = 1$, 
  so $h_i \in [2^0, 2^{\log n + 1})$.  
  Let $j$ be such that $i \in T_j$. 
  Since $F_j$ is an $(\alpha, \beta)$-approximation to $2^{j+1}$-HCSF, 
  $F_j$ contains an $S_i$-$T_i$ path of length at most 
  $\beta 2^{j+1}$. By construction of $T_j$, $h_i \geq 2^{j-1}$, so 
  $\beta 2^{j+1} \leq 4 \beta h_i$. Thus the total routing cost over 
  all pairs is at most $4\beta \sum_{i \in [r]} h_i 
  = 4\beta \sum_{i \in [r]} \sum_{p \in \calP_i} \ell(p) f_p 
  \leq 4\beta \cdot \opt_{LP}$.
\end{proof}

\begin{lemma}
\label{lem:reduction_fixedcost}
  The total fixed cost $\sum_{e \in F} c(e)$ given by Algorithm \ref{algo:reduction} 
  is at most $O(\alpha \log n) \cdot \opt_{LP}$.
\end{lemma}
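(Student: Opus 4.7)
The plan is to bound the fixed cost contributed by each group $T_j$ separately by $O(\alpha) \cdot \opt_{LP}$ and then sum over the $O(\log n)$ groups. Concretely, I would argue that the original fractional solution $(x,f)$ to \bablp\ can be restricted and scaled to yield a feasible fractional solution to \hclp\ for the subinstance on $T_j$ with hop constraint $2^{j+1}$, whose cost is at most a constant times $\sum_{e \in E} c(e) x_e \leq \opt_{LP}$. Then, since the $(\alpha,\beta)$-approximation for $h$-HCSC is with respect to \hclp, we get $c(F_j) \leq O(\alpha)\cdot \opt_{LP}$.

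The key step is a Markov-type truncation. For $i \in T_j$, the fractional flow satisfies $\sum_{p \in \calP_i} \ell(p) f_p = h_i < 2^j$, so the total $f_p$-mass on paths of hop-length at least $2^{j+1}$ is at most $h_i / 2^{j+1} < 1/2$. Define $\tilde f_p = 2 f_p$ for $p \in \calP_i$ with $\hop(p) \leq 2^{j+1}$ and $\tilde f_p = 0$ otherwise, and let $\tilde x_e = 2 x_e$. Then $\sum_{p \in \calP_i^{2^{j+1}}} \tilde f_p \geq 1$ (uniformly scale down if strictly greater, as in Lemma \ref{lem:lp_assumption1}), and the capacity constraints $\sum_{p \in \calP_i^{2^{j+1}},\, e \in p} \tilde f_p \leq \tilde x_e$ are inherited from the original constraints. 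So $(\tilde x, \tilde f)$ is feasible for \hclp\ on terminal set $T_j$ with hop bound $2^{j+1}$, and its cost is $\sum_{e \in E} c(e) \tilde x_e = 2 \sum_{e \in E} c(e) x_e \leq 2 \cdot \opt_{LP}$.

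Applying the hypothesized $(\alpha,\beta)$-approximation, we obtain $c(F_j) \leq \alpha \cdot 2 \cdot \opt_{LP} = O(\alpha)\cdot \opt_{LP}$. Summing over $j = 1, \dots, \log n + 1$, the total fixed cost is $\sum_j c(F_j) \leq O(\alpha \log n) \cdot \opt_{LP}$, as desired.

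I expect the only subtle point to be the bookkeeping around the cost reference: the $(\alpha,\beta)$-approximation guarantee is with respect to the optimum fractional solution of \hclp\ on the subinstance, so we need the scaled restriction $(\tilde x, \tilde f)$ to genuinely be \hclp-feasible for $T_j$ at hop bound $2^{j+1}$ with cost at most $O(\opt_{LP})$. The Markov step supplies exactly this, and no additional structure of the $h$-HCSC algorithm is needed. The routing-cost analysis (Lemma \ref{lem:reduction_routingcost}) uses the same partition, so the hop bounds $\beta \cdot 2^{j+1}$ from the same $F_j$ simultaneously drive both bounds without conflict.
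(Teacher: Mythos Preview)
Your proposal is correct and follows essentially the same approach as the paper: restrict the fractional solution to paths of hop-length at most $2^{j+1}$, use a Markov-type argument to show that at most half the flow is lost (since $h_i < 2^j$ for $i \in T_j$), scale by~$2$ to recover feasibility for \hclp, and then invoke the $(\alpha,\beta)$-approximation with respect to the LP. The only cosmetic difference is that the paper additionally caps $x_e' = \min\{2x_e,1\}$, which is unnecessary since \hclp\ has no upper-bound constraint on $x_e$; your uncapped version is equally valid.
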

\begin{proof}
  Fix an iteration $j$. Let $x_e' = \max\{2x_e, 1\}$, and $f_p' = \max\{2f_p, 1\}$. 
  We claim that $(x', f')$ is a valid solution to \hclp 
  ~with terminal pairs $T_j$ and hop constraint $h = 2^{j+1}$. 
  First, we show that for all $i \in T_j$, $\sum_{p \in \calP_i^h} f_p' = 1$. 
  Suppose for the sake of contradiction that $\sum_{p \in \calP_i^h} f_p' < 1$. 
  Then $\sum_{p \in \calP_i^h} f_p < \frac 12$. Since $f_p$ satisfies the 
  Buy-at-Bulk LP constraints, it must be the case that 
  $\sum_{p \in \calP_i, \ell(p) > h} f_p > \frac 12$, implying that 
  $\sum_{p \in \calP_i} \ell(p)f(p) > \frac h 2 = 2^j$. 
  This contradicts the assumption that $i \in T_j$. Thus 
  $\sum_{p \in \calP_i^h} f_p' \geq 1$. Note that if this sum is greater than 1, 
  we can reduce the $f_p'$ values until it is exactly 1 without violating 
  any constraints. 

  Next, we show that for all $i \in T_r$, $e \in E$, 
  $\sum_{p \in \calP_i^h, e \in p} f_p' \leq x_e'$. 
  First, suppose $x_e' = 1$. 
  Then, for all $i \in T_j$, $\sum_{p \in \calP_i^h, e \in p} f_p' 
  \leq \sum_{p \in \calP_i^h} f_p' = 1$ (by the earlier paragraph), 
  so the constraint is satisfied. 
  Instead, suppose $x_e' = 2x_e$. In this case,
  $\sum_{p \in \calP_i^h, e \in p} f_p' 
  \leq 2 \sum_{p \in \calP_i, e \in p} f_p \leq 2 x_e = x_e'$, 
  once again satisfying the constraint.
  Let $\opt_{LP}'$ denote the value of the optimal LP solution 
  with respect to \hclp~ for the HCSC instance given in iteration $j$.
  Then $\opt_{LP}' \leq \sum_{e \in E} c(e)x_e' \leq 2\opt_{LP}$. 
  By construction, $c(F_j) \leq \alpha \cdot \opt_{LP}'$. 
  Summing over all $j$, $c(F) \leq O(\alpha \log n) \opt_{LP}$.
\end{proof}

\begin{proof}[Proof of Theorem~\ref{thm:bab_main}]
  By Claims \ref{claim:bab_demand_assumption} and \ref{claim:bab_length_assumption},
  we can assume $\delta(i) = 1~\forall i \in [r], \ell(e) = 1~\forall e \in E$
  with an $O(\log D)$ loss in the approximation factor. 
  With these assumptions, by Lemmas \ref{lem:reduction_routingcost} 
  and \ref{lem:reduction_fixedcost}, an $(\alpha, \beta)$-approximation 
  for HCSC with respect to the LP
  gives an $O(\alpha \log n + \beta)$-approximation 
  for Set Connectivity Buy-at-Bulk with respect to $\opt_{LP}$.  
  It is easy to verify that the same reduction holds from
  HCSF to MC-BaB.
\end{proof}

\begin{remark}
\label{rem:bab_logD}
  We believe the $\log D$ factor is not needed in the upper bound on the 
  integrality gap. We are able to remove this factor in 
  some special cases, see Claim \ref{claim:removing_logD_factor}.
  In general, this has been difficult to avoid in several previous approaches 
  for nonuniform buy-at-bulk (see Section \ref{sec:related_work}). 
  One technical reason is that we rely on hop-constrained tree embeddings 
  in a black-box fashion and that work relies on the assumption that $\ell(e) = 1$ 
  for each edge. Handling non-uniform lengths that may not be polynomially 
  bounded in $n$ appears to be a technical challenge. 
\end{remark}

\begin{remark}
\label{rem:bab_reduction_direction}
  It is not difficult to see that a similar reduction can be made
  from $h$-HCSF to MC-BaB. One could use an iterative approach;
  set all $\ell(e) = 1$ and solve MC-BaB. Buy $h$-hop-bounded paths 
  connecting terminal pairs, double $\ell(e)$, and recurse on remaining 
  pairs. Thus new techniques directly 
  addressing buy-at-bulk could imply good approximations for 
  hop-constrained counterparts.
\end{remark}

\subsection{Length and Demand Assumptions}
\label{sec:bab_assumptions}

\begin{lemma}
\label{lem:lp_assumption2}
  Let $(x, f)$ be an optimal solution to \bablp. Losing a constant factor in
  the approximation ratio, we assume for all $e \in E$, there exists $i \in [r]$ such that
  $\sum_{p \in \calP_i, e \in p} f_p = x_e$. We do this without violating the
  assumption that $x_e \geq 1/m^2$ for all $e \in E$.
\end{lemma}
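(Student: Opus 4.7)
The plan is to iteratively modify the solution produced by Lemma~\ref{lem:lp_assumption1}, alternating between (i) enforcing tightness on the capacity constraints and (ii) removing edges whose capacity would otherwise drop below $1/m^2$. Begin with $(x,f)$ as in Lemma~\ref{lem:lp_assumption1}, so $x_e \ge 1/m^2$ for every $e$. Define $y_e := \max_{i \in [r]} \sum_{p \in \calP_i,\, e \in p} f_p$ and replace each $x_e$ by $y_e$. This preserves LP-feasibility, only decreases the objective, and yields tightness for every edge. The difficulty is that $y_e$ can fall below $1/m^2$.

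To remedy this, let $B = \{e : y_e < 1/m^2\}$, set $x_e = 0$ and $f_p = 0$ for every path $p$ using some $e \in B$, and then scale all surviving $f_p$ by a factor of $1 + 1/(m-1)$ in order to restore $\sum_{p \in \calP_i} f_p = 1$ (reducing uniformly if the sum overshoots, exactly as in the proof of Lemma~\ref{lem:lp_assumption1}). The essential estimate, analogous to that lemma, is that killing paths through $B$ decreases $\sum_{p \in \calP_i} f_p$ by at most $\sum_{e \in B} y_e < m \cdot (1/m^2) = 1/m$, so scaling by $1 + 1/(m-1)$ is sufficient. Because scaling $f_p$ re-breaks tightness on the remaining edges, I would iterate: reset $x_e \leftarrow y_e$, identify any newly-small edges, remove them and rescale, and repeat until $B = \emptyset$.

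For termination, each non-trivial iteration deletes at least one edge from the graph, so the process halts within $m$ iterations. For the cost bound, the two cheap steps (resetting $x_e \leftarrow y_e$ and zeroing out edges in $B$) only decrease the objective, while the scaling of $f_p$ inflates the routing cost (and, once the subsequent tightness reset is applied, the fixed cost) by exactly the factor $1 + 1/(m-1)$. Taking the product over at most $m$ iterations yields a total blowup of at most $(1 + 1/(m-1))^m \le e$, a constant. At termination, $B = \emptyset$ ensures $x_e \ge 1/m^2$ for every surviving edge, and the final tightness reset gives the desired $x_e = y_e$ for every $e$.

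The main subtlety is the bookkeeping between the two operations: after a scaling step the capacity constraint $\sum_{p \in \calP_i,\, e \in p} f_p \le x_e$ is temporarily violated (since $f_p$ scales but $x_e$ does not), and is repaired only by the subsequent tightness reset. The invariant to maintain is that the solution is LP-feasible immediately after every reset step, and the cost analysis must be performed with respect to these post-reset solutions so that the per-iteration factors multiply rather than accumulate in some other way. Threading this invariant through the iteration is the main technical point of the proof.
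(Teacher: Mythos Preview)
Your proposal is correct and follows essentially the same approach as the paper: alternate between tightening $x_e$ down to $\max_i \sum_{p \in \calP_i, e \in p} f_p$ and re-applying the Lemma~\ref{lem:lp_assumption1} transformation, observe that each non-trivial round zeroes out at least one edge so at most $m$ rounds occur, and bound the total cost blowup by $(1+1/(m-1))^m \le e$. Your discussion of the invariant and of when the capacity constraint is temporarily broken is a bit more explicit than the paper's, but the argument is the same.
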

\begin{proof}
  Let $(x, f)$ be an optimal solution to \bablp. 
  Recall that in order to assume that 
  $x_e \geq \frac 1 {m^2}$ for all $e \in E$, we set $x_e = 0$ for all 
  $x_e < \frac 1 {m^2}$ and account for this by scaling all remaining edges 
  up by $1+ 1/(m-1)$. Suppose there is some 
  $x_e > \max_{i \in [r]} \sum_{p \in \calP_i, e \in p} f_p$. 
  Then, we can decrease all 
  $x_e$ to $\max_{i \in [r]} \sum_{p \in \calP_i, e \in p} f_p$ without 
  violating any constraints, and then apply the above transformation 
  again to ensure that $x_e \geq \frac 1 {m^2}$. We repeat alternating between 
  the two transformations until both requirements are satisfied. 
  At each iteration, 
  at least one $x_e$ is set to 0. 
  Thus this process can be repeated at most $m-1$ times, 
  so the total increase to the approximation ratio is at most 
  $(1 + 1/(m-1))^m \leq e$.
\end{proof}

\begin{claim}
\label{claim:bab_demand_assumption}
  We can assume $\delta(i) = 1$ for all $i \in [r]$ with an 
  $O(\log D)$ factor loss in the approximation ratio.
\end{claim}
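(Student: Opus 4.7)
The plan is to bucket the demands into $O(\log D)$ geometric classes and, within each class, reduce to a unit-demand subinstance at the right scale. Define $T_j = \{i \in [r] : 2^j \leq \delta(i) < 2^{j+1}\}$ for $j = 0, 1, \dots, \lceil \log D \rceil$, so that the $T_j$ partition $[r]$ and within each class all demands agree up to a factor of two.

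For each group $T_j$, I would construct a subinstance $I_j$ on the same graph $G$ with terminal pairs $\{(s_i,t_i) : i \in T_j\}$, unit demands $\delta^{(j)}(i) = 1$, the original edge lengths $\ell$, and \emph{rescaled} fixed costs $c^{(j)}(e) = c(e)/2^j$. The key claim is that restricting the optimal LP solution $(x,f)$ for the original instance to pairs in $T_j$ yields a feasible \bablp\ solution for $I_j$ whose value is at most $O(\opt_{LP}/2^j)$. Feasibility is immediate since the constraints of \bablp\ only involve individual pairs. The fixed-cost part is $\sum_e c^{(j)}(e) x_e = (1/2^j)\sum_e c(e) x_e \le \opt_{LP}/2^j$, and the routing part $\sum_{i \in T_j} \sum_{p} \ell(p) f_p$ is bounded by $(1/2^j) \sum_{i \in T_j} \delta(i) \sum_{p} \ell(p) f_p \le \opt_{LP}/2^j$ since $\delta(i) \ge 2^j$ for $i \in T_j$.

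Given any $\alpha$-approximation $\mathcal{A}$ for unit-demand multicommodity Buy-at-Bulk with respect to $\opt_{LP}$, I would apply $\mathcal{A}$ to $I_j$ to obtain $F_j \subseteq E$ satisfying $c(F_j)/2^j + \sum_{i \in T_j} \ell_{F_j}(s_i,t_i) \le O(\alpha \cdot \opt_{LP}/2^j)$. Multiplying through by $2^j$ and using $\delta(i) < 2^{j+1}$ for $i \in T_j$ yields both $c(F_j) \le O(\alpha)\opt_{LP}$ and $\sum_{i \in T_j} \delta(i)\ell_{F_j}(s_i,t_i) \le O(\alpha)\opt_{LP}$ separately. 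Taking $F = \bigcup_j F_j$ and summing over the $O(\log D)$ groups, the total fixed cost and the total routing cost of $F$ are each $O(\alpha \log D) \cdot \opt_{LP}$, giving the claimed $O(\log D)$ loss in the approximation ratio.

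The main subtlety, and the step that requires the most care, is choosing the scaling of fixed costs in $I_j$ so that both halves of the LP cost drop by the same factor $2^j$. A naive reduction that leaves $c$ unchanged only rescales the routing part of the objective and loses a multiplicative factor of $2^j$ on the fixed-cost side, which would blow up the total cost by $D$ rather than $\log D$ after summing over buckets. The cost-balancing scaling ensures that each of the $O(\log D)$ subinstances contributes only an $O(1)$ factor, which then sum to the desired $O(\log D)$ overhead.
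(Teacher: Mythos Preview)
Your proposal is correct and follows essentially the same approach as the paper: geometric bucketing of demands into $O(\log D)$ classes, rescaling the fixed costs by $2^j$ in each class to reduce to a unit-demand instance, and summing the $O(\log D)$ subinstance solutions. Your write-up is in fact more careful than the paper's, which simply asserts the cost-scaling step without spelling out the LP-restriction argument you give.
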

\begin{proof}
  Recall that we assume $\min_{i \in [r]} \delta(i) = 1$,
  so $\delta(i) \leq D$ for all $i \in [r]$.
  We round each demand up to its closest power of 2, 
  losing a factor of at most 2 in the approximation ratio, so each 
  $\delta(i) = 2^j$ for $j \in \{0, \dots, \log D + 1\}$. 
  For each $j$, we solve the Buy-at-Bulk problem restricted 
  to pairs with demand $2^j$. In each sub-instance, 
  we assume $\delta(i) = 1$ by dividing all $c(e)$ by $2^j$. 
  By taking the union of all sub-instances, 
  an $\alpha$-approximation when $\delta(i) = 1$ gives an 
  $O(\alpha \log D)$-approximation in general. 
\end{proof}

As discussed in Remark \ref{rem:bab_logD}, we can avoid the $O(\log D)$ factor
loss in the approximation ratio in the case that $D$ or $L$ are polynomially 
bounded, as shown in the following claim.
\begin{claim}
\label{claim:removing_logD_factor}
  If $D = r \cdot \poly(n)$ or $L = \poly(n)$, we can assume $\delta(i) = 1$ for all 
  $i \in [r]$ with only a constant factor loss in the approximation 
  ratio by increasing the number of demand pairs to $r^2 \cdot \poly(n)$.
\end{claim}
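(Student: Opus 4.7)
The $O(\log D)$ factor in Claim~\ref{claim:bab_demand_assumption} came from partitioning pairs into $O(\log D)$ power-of-two demand classes and solving each subinstance separately. To avoid this, I would replace that partitioning with outright replication: first, by scaling assume $\min_i \delta(i) = 1$, and by rounding each $\delta(i)$ up to the next integer lose only a factor $2$ in the objective; then build a new instance $I'$ on the same graph, with the same costs and lengths, in which each original pair $(s_i, t_i)$ of demand $\delta(i)$ is replaced by $\delta(i)$ copies of $(s_i, t_i)$, each with unit demand. The new number of pairs is $r' = \sum_i \delta(i) \leq rD$.

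I would then verify that $I'$ is equivalent to $I$ both integrally and at the LP level. Integrally this is immediate since $\sum_i \delta(i) \ell_F(s_i, t_i)$ equals the sum of routing costs over all unit-demand copies for any subgraph $F$. At the LP level, any feasible $(x, f)$ for \bablp~on $I$ extends to $I'$ by assigning to each copy of pair $i$ the same path flow values $f_p$, producing the same objective; the per-pair capacity constraint $\sum_{p \ni e} f_p \leq x_e$ is automatically preserved because it already holds per pair. Hence an $\alpha$-approximation with respect to $\opt_{LP}$ on $I'$ yields an $\alpha$-approximation on $I$.

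When $D \leq r \cdot \poly(n)$, this gives $r' \leq r^2 \cdot \poly(n)$, so the polylogarithmic dependence on $r$ from Theorem~\ref{thm:bab_main} changes $\log r$ to $\log r' = O(\log(rn))$, a constant factor that can be absorbed into the overall polylog in $n$. When $L = \poly(n)$, I would instead first subdivide each edge $e$ into $\ell(e)$ unit-length edges (distributing $c(e)$ evenly among the $\ell(e)$ pieces); this yields an equivalent instance on $\poly(n)$ edges in which $\ell(e) = 1$, so $\ell(p) = \hop(p) \leq n$ for all simple paths and the LP routing cost of any pair $i$ is at least $\delta(i)$, which in turn allows us to bound $D$ polynomially against the LP value and rerun the replication argument.

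The main obstacle I expect is the subdivision step in the $L = \poly(n)$ case: since the \bablp~objective mixes per-edge costs $c(e) x_e$ with per-path routing $\ell(p) f_p$, one has to confirm that splitting $c(e)$ evenly across the $\ell(e)$ unit-length pieces and reinterpreting fractional flows along them leaves the LP optimum unchanged up to constants, and that the $x_e \geq 1/m^2$ assumption of Lemma~\ref{lem:lp_assumption1} is reestablished on the subdivided graph. Once subdivision is verified to be a constant-factor LP-preserving reduction, the demand replication proceeds identically to the $D$-bounded case.
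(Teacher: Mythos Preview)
Your treatment of the case $D = r \cdot \poly(n)$ is correct and matches the paper exactly: replicate each pair into $\delta(i)$ unit-demand copies, and verify (as you do) that \bablp\ is unchanged both integrally and fractionally.

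For the case $L = \poly(n)$, your subdivision idea is a reasonable starting point, but the argument as written has a genuine gap. After subdivision you correctly observe that the LP routing cost of pair $i$ is at least $\delta(i)$, hence $D \leq \opt_{LP}$. But ``bounding $D$ against the LP value'' does \emph{not} let you rerun replication: replication produces $\sum_i \delta(i) \leq rD \leq r \cdot \opt_{LP}$ pairs, and $\opt_{LP}$ is not polynomially bounded in $n,r$. What you actually need is a polynomial bound on the ratio $D/\min_i \delta(i)$, not on $D/\opt_{LP}$.

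The paper closes this gap by a different (and more direct) route that does not subdivide at all: it simply raises every demand below $D/(nrL)$ up to $D/(nrL)$, and argues that this increases the LP routing cost by at most $r \cdot (D/(nrL)) \cdot n\ell_{\max} \leq D\ell_{\min}$, which is at most the routing cost already paid by the maximum-demand pair (at least $D\ell_{\min}$). After this truncation the demand ratio is at most $nrL = r\cdot\poly(n)$, and one rescales and invokes the first case. Your subdivision route can be completed by the same truncation argument applied \emph{after} subdivision (where it becomes: raise demands below $D/(rn')$ to $D/(rn')$, costing at most $D$ while $\opt_{LP}\geq D$), but that step is the crux and is missing from your proposal.
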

\begin{proof}
  Suppose $D = r \cdot \poly(n)$. For each $i \in [r]$, 
  replace $i$ with $\delta(i)$ demand pairs from $s_i$ to $t_i$ of 
  demand 1 each. It is clear that the total cost stays the same 
  and the total number of demand pairs is at most $r \cdot D \leq r^2\poly(n)$.

  Suppose $L = \poly(n)$. Let $\ell_{\max}$, $\ell_{\min}$ 
  denote the maximum and minimum $\ell(e)$ values respectively.
  For all $i \in [r]$ with $\delta(i) < D/(nrL)$, we 
  increase $\delta(i)$ to 
  $D/(nrL)$. For any fractional solution $(x, f)$, the increase to 
  the routing cost is at most $\sum_{i \in [r], \delta(i) < D/(nrL)} 
  (D/(nrL) - \delta(i)) \sum_{p \in \calP_i} \ell(p)f_p$. 
  Note that $\ell(p) \leq n\ell_{\max}$, since any 
  simple path has at most $n-1$ edges. Furthermore, for all $i\in [r]$, 
  $\sum_{p \in \calP_i} f_p = 1$. Thus the increase to the routing cost 
  is at most $r n (D/(nrL)) \ell_{max} 
  \leq D \ell_{\min}$. 
  Let $i_{\max}$ be the index such that $\delta(i_{\max}) = D$.  
  The total routing cost must be at least 
  $D\sum_{p \in \calP_{i_{\max}}} \ell(p)f_p \geq D \ell_{\min}$. 
  Thus increasing the demands of small demand pairs to $D/(nrL)$ at most 
  doubles the total routing cost. We scale
  to set the minimum demand to 1, reducing to the case where 
  $D = r \cdot \poly(n)$.
\end{proof}

\begin{claim}
\label{claim:bab_length_assumption}
  Suppose $\delta(i) = 1$ for all demand pairs $i \in [r]$. Then we can assume 
  $\ell(e) = 1$ for all $e \in E$ by increasing the size of the graph 
  to $r \cdot \poly(n)$.
\end{claim}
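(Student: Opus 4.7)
The plan is to preprocess the instance via a series of scalings and truncations, losing only a constant factor in the LP value, to reduce to an instance in which every edge has length $1$ and the graph has at most $r \cdot \poly(n)$ edges. The main obstacle is the potentially unbounded spread of edge lengths: a naive subdivision of each edge into $\ell(e)$ unit-length sub-edges can blow up the graph exponentially, so the key step is to truncate very long edges and inflate very short ones using $\opt_{LP}$ as a reference scale.

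First I would solve \bablp~to obtain a fractional solution $(x,f)$ of value $\opt_{LP}$, and set $h_i = \sum_{p \in \calP_i} \ell(p) f_p$ for each pair $i$, so $\sum_i h_i \leq \opt_{LP}$. Applying Markov's inequality to the distribution $f$ on $\calP_i$, the flow on paths of length greater than $2 h_i$ sums to at most $1/2$. Zeroing out the flow on these long paths and scaling the remaining $f_p$ up by a factor of at most $2$ restores $\sum_p f_p = 1$; the resulting solution remains feasible, its value is at most twice $\opt_{LP}$, and every path in its support has length at most $2h_i \leq 2\opt_{LP}$. In particular, no edge $e$ with $\ell(e) > 2\opt_{LP}$ appears in the support, so I can delete all such edges from $G$ without changing the LP value.

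Next I would inflate short edges: for each remaining edge $e$ with $\ell(e) < \opt_{LP}/(nr)$, set $\ell(e) := \opt_{LP}/(nr)$. Because every simple path uses at most $n$ edges, this raises any single path's length by at most $\opt_{LP}/r$, so each pair's routing cost rises by at most $\opt_{LP}/r$ and the total routing cost rises by at most $\opt_{LP}$. The LP value again at most doubles, and all edge lengths now lie in $[\opt_{LP}/(nr),\ 2\opt_{LP}]$, giving a ratio at most $2nr$. I then uniformly rescale both $c(e)$ and $\ell(e)$ by the common factor $(nr)/\opt_{LP}$, which preserves the optimal fractional solution up to scaling and places every $\ell(e) \in [1, 2nr]$. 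Rounding each $\ell(e)$ up to $\lceil \ell(e) \rceil$ at most doubles it (since $\ell(e) \geq 1$), so each path length at most doubles and the LP value grows by at most another factor of $2$.

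Finally I would subdivide each edge $e$ into $\lceil \ell(e) \rceil$ unit-length sub-edges, giving cost $c(e)$ to one sub-edge and $0$ to the rest. The new graph has at most $\sum_e \lceil \ell(e) \rceil \leq m \cdot (2nr + 1) = r \cdot \poly(n)$ edges, every edge has length $1$, and there is a natural length- and cost-preserving bijection between paths in $G$ and paths in the subdivided graph. Hence the LP value of the final instance is within a constant factor of $\opt_{LP}$, as required. Composing with the HCSF-based Algorithm~\ref{algo:reduction}, a bicriteria guarantee on the subdivided instance translates back to the original one losing only this constant factor.
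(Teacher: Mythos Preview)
Your argument is correct and achieves the same constant-factor reduction to unit lengths with an $r\cdot\poly(n)$ blow-up, but the route is genuinely different from the paper's. The paper never computes $\opt_{LP}$ to build the reduced instance: it sets a single granularity $\ell^* = \max(\ell_{\max}/(rnm^2),1)$, replaces every edge by a path of $\ell(e)/\ell^*$ unit-length pieces, and then argues (crucially via Lemma~\ref{lem:lp_assumption2}, which guarantees some pair routes at least $1/m^2$ flow over the longest edge) that $\opt_{LP}\ge \ell_{\max}/m^2$, so the extra routing cost from inflating short edges up to $\ell^*$ is at most a constant times $\opt_{LP}$. In contrast, you first solve \bablp, use Markov on the per-pair length distribution to prune paths longer than $2h_i$ (hence delete edges longer than $2\opt_{LP}$), then inflate edges shorter than $\opt_{LP}/(nr)$, and only then subdivide. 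Your approach is arguably cleaner in that it avoids the iterative tightening of Lemma~\ref{lem:lp_assumption2} and gives a smaller size bound ($O(m\cdot nr)$ versus $O(m\cdot rnm^2)$ sub-edges); the paper's approach has the minor advantage that the reduced instance is defined purely from the input data without first solving the LP. One small presentational point: when you scale the surviving $f_p$ up by a factor of~$2$, feasibility requires you to scale $x_e$ by~$2$ as well---you implicitly use this when you say the value at most doubles, but it is worth stating explicitly.
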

\begin{proof}
  We begin by rounding up each $\ell(e)$ to its closest power of 2, 
  losing a factor of at most 2 in the approximation ratio. 
  The high level idea is to subdivide all 
  $\ell(e)$ into $\ell(e)$ copies of length 1 each. However, the problem 
  is that if $\ell(e)$ is large for some $e \in E$, this approach 
  may introduce too many new edges and vertices into the graph. To overcome 
  this issue, we essentially ignore all edges whose lengths are significantly 
  smaller than the maximum edge length. 

  Let $\ell_{\max} = \max_{e \in E} \ell(e)$. 
  Let $\ell^* = \max(\frac{\ell_{\max}}{rnm^2}, 1)$ rounded up to its closest 
  power of 2. We construct a new graph $G' = (V', E')$ with edge costs and lengths 
  $c'$ and $\ell'$ as follows: for all $e \in E$, if $\ell(e) \leq \ell^*$, then 
  we keep $e \in E'$ as is. Else, we replace $e$ with a path $p_e$ 
  consisting of $\frac{\ell(e)}{\ell^*}$ edges of cost 
  $\frac{c(e) \ell^*}{\ell(e)}$ each. We set $\ell'(e') = \ell^*$ for
  all $e' \in E$. For ease of notation, we let $p_e$ denote the path consisting 
  of a single edge $e$ for all edges with $\ell(e) \leq \ell^*$. 
  Notice that $c'(p_e) = c(e)$ for all $e$; this is clear if $\ell(e) \leq \ell^*$,
  and if $\ell(e) > \ell^*$, 
  then $c'(p_e) = \sum_{e' \in p_e} c'(e') = \frac{\ell(e)}{\ell^*} 
  \frac{c(e) \ell^*}{\ell(e)} = c(e)$. 
  Also notice that if $\ell(e) \geq \ell^*$, then $\ell'(p_e) = 
  \ell^*\frac{\ell(e)}{\ell^*} = \ell(e)$. 

  Let $(x, f)$ be an optimal LP solution on $G$. We start by showing that the 
  optimal LP solution on $G'$ is has cost at most thrice that of $(x, f)$. 
  We construct $(x', f')$ as follows: For each $e \in E$, 
  for all $e' \in p_e$, set $x_{e'}' = x_e$. For each $i \in [r]$, 
  path $p \in \calP_i$, it is easy to see that the same path exists in $G'$ 
  where each edge $e \in p$ is replaced by the path $p_e$; this is because all 
  new paths $p_e$ are disjoint. We will abuse notation and let $p$ denote 
  the path in $G$ and the corresponding path in $G'$. For all paths $p$, 
  we let $f_p' = f_p$. The new total fixed cost is 
  $\sum_{e' \in E'} c'(e')x_{e'}' = \sum_{e \in E} c(p_e) x_e
  = \sum_{e \in E} c(e) x_e$.  
  We use the fact that $\delta(i) = 1$ for all $i \in [r]$ to get 
  a routing cost of 
  $$\sum_{i \in [r]} \sum_{p \in \calP_i} f_p' \sum_{e \in p} \ell'(p_e)
  = \sum_{i \in [r]} \sum_{p \in \calP_i} f_p \sum_{e \in p} \max(\ell^*, \ell(e)) 
  \leq \sum_{i \in [r]} \sum_{p \in \calP_i} f_p \ell(p) 
  + \ell^* \sum_{i \in [r]} \sum_{p \in \calP_i} f_p|p|$$
  
  If $\ell^* = 1$ then $\ell'(p_e) = \ell(e)$ for all $e \in E$, so we assume 
  instead that $\ell_{\max} \geq rnm^2$. Since we round $\ell^*$ up to 
  its closest power of 2, $\ell^* \leq \frac{2\ell_{\max}}{rnm^2}$.
  By the LP constraints, $\sum_{p \in \calP_i} f_p = 1$ for all $i \in [r]$, 
  and since $\calP_i$ contains only simple paths in $G$, $|p| < n$. 
  Therefore, $\ell^* \sum_{i \in [r]} \sum_{p \in \calP_i} f_p|p| 
  \leq rn\ell^* \leq 2\ell_{max}/m^2$. We need to show that this is not too much 
  larger than the optimal routing cost in $G$. By Lemma \ref{lem:lp_assumption2},
  for $e^*$ such that 
  $\ell(e^*) = \ell_{\max}$, there is some $i \in [r]$ such that 
  $\sum_{p \in \calP_i, e^* \in p} f_p = x_{e^*} \geq \frac 1 {m^2}$. 
  Thus $\sum_{i \in [r]} \sum_{p \in \calP_i} f_p \ell(p) 
  \geq \sum_{i \in [r]} \sum_{p \in \calP_i, e^* \in p} f_p \ell_{\max} 
  \geq \ell_{\max}/m^2$. 
  Therefore, the total routing cost of $f'$ is at most 3 times the 
  routing cost of $f$. It is simple to verify that 
  $(x', f')$ satisfy the LP constraints for $G'$. 

  The other direction is simple: given an optimal LP solution 
  $(x', f')$ on $G'$, we can define an LP solution $(x, f)$ on $G$ 
  as follows: For each edge $e \in E$, we let 
  $x_e = \min_{e' \in p_e} x_{e'}'$. 
  For each path $p \in \calP_i$, we let $f_p = f_{p}'$. It is easy 
  to see that the fixed costs and routing costs of $(x, f)$ are at 
  most those of $(x', f')$ , and that all LP constraints are satisfied. 

  Therefore, we can assume that $\ell(e) = \ell^*$ for all $e$ with only 
  a constant factor loss in the approximation ratio. 
  To complete the proof, we divide all $\ell(e)$ and $c(e)$ by $\ell^*$, 
  since the minimization problem does not change when scaling all values by 
  a constant. This transformation adds at most 
  $r \cdot \poly(n)$ new vertices and edges. 
\end{proof}
\section{Approximation Algorithms for Hop-Constrained 
Network Design Problems via Oblivious Routing}
\label{sec:hc}

In this section we prove Theorem \ref{thm:hc_main}. We give
an overview of the algorithm in Section \ref{sec:hc_algo}, which we apply to 
Hop-Constrained Steiner Forest in Section \ref{sec:hc_sf} and 
Hop-Constrained Set Connectivity in Section \ref{sec:hc_sc}.

\subsection{Algorithm Overview}
\label{sec:hc_algo}

Let $G = (V, E)$ with edge costs $c(e)$ and hop-constraint $h$ 
denote the given input. Recall that our goal is to find a subgraph 
$F \subseteq E$ minimizing $c(F)$ in which each demand pair is connected 
with a path of hop-length at most $h$. We employ a tree based 
rounding scheme described in Algorithm \ref{algo:hop_general}. 
The algorithm solves \hclp, 
obtains a fractional solution $(x, f)$, 
and considers a $\droute$-router with capacities $x_e$ and exclusion 
probability $\eps = O(1/h)$. It repeatedly sample partial tree embeddings 
and buys tree paths through some randomized process to 
be described separately for each problem.  

\begin{algorithm}[H]
\caption{$(\alpha, \beta)$-approximation for Hop Constrained Network Design}
\label{algo:hop_general}
  \begin{algorithmic}
    \State $F \gets \emptyset$
    \State $\calT \gets$ $\droute$-router on $G$ with capacities $x_e$, 
    exclusion prob $\eps = \frac 1 {4(h+1)}$ 
    \For{$j = 1, \dots, \tau$}
      \State Sample $(T, M) \sim \calT$
      \State $F_j' \subseteq E(T) \gets$ 
      \text{ Some collection of tree paths}
      \State $F_j \gets \cup_{e' \in F_j'} M_{e'}$, this is the mapping of $F_j'$ to 
      the input graph $G$
      \State $F \gets F \cup F_j$
    \EndFor \\
    \Return $F$
  \end{algorithmic}
\end{algorithm}

\begin{lemma}
\label{lem:hc_cost_general}
  Suppose that in every iteration $j$, the probability that 
  $e' \in E(T)$ is added to $F_j'$ is at most $\gamma \cdot y(e')$. 
  Then the total cost of Algorithm \ref{algo:hop_general} is at most 
  $O(\gamma \tau \log^2n) \sum_{e \in E} c(e) x_e$.
\end{lemma}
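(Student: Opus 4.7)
The plan is to compute the expected cost of $F_j$ in a single iteration and then sum over the $\tau$ iterations. The only nontrivial ingredient is the congestion guarantee from the $\droute$-router, repackaged through Claim \ref{claim:edge_load}.

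First, I would bound $c(F_j)$ deterministically by $\sum_{e' \in F_j'} c(M_{e'})$, where $c(M_{e'}) := \sum_{e \in M_{e'}} c(e)$. This is simply because $F_j = \bigcup_{e' \in F_j'} M_{e'}$, so every edge of $F_j$ is charged at most once per tree edge covering it. Taking expectations (conditioning first on the sampled tree $(T,M)$ and then using the hypothesis that each $e' \in E(T)$ is included in $F_j'$ with probability at most $\gamma \cdot y(e')$), I would obtain
\[
\E[c(F_j) \mid T] \;\leq\; \sum_{e' \in E(T)} \gamma \, y(e') \, c(M_{e'}).
\]

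Next, I would swap the order of summation by writing $c(M_{e'}) = \sum_{e \in M_{e'}} c(e)$ and reindexing via the inverse map $M^{-1}(e) = \{e' \in E(T) : e \in M_{e'}\}$. This gives
\[
\E[c(F_j) \mid T] \;\leq\; \gamma \sum_{e \in E} c(e) \sum_{e' \in M^{-1}(e)} y(e').
\]
Now I would take the expectation over the random tree $(T,M) \sim \calT$ and apply Claim \ref{claim:edge_load}, which states that $\E\bigl[\sum_{e' \in M^{-1}(e)} y(e')\bigr] \leq O(\log n \cdot \log \tfrac{\log n}{\eps}) \cdot x_e$. Since the algorithm chooses $\eps = \tfrac{1}{4(h+1)} \geq 1/\poly(n)$, the factor $\log \tfrac{\log n}{\eps}$ is $O(\log n)$, so the bound becomes $O(\log^2 n) \cdot x_e$.

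Combining these steps yields $\E[c(F_j)] \leq O(\gamma \log^2 n) \sum_{e \in E} c(e) x_e$ per iteration. Summing over the $\tau$ iterations by linearity of expectation gives the stated bound $O(\gamma \tau \log^2 n) \sum_{e \in E} c(e) x_e$. There is no real obstacle here; the only thing to be careful about is ensuring the conditional expectation step is valid (the hypothesis on $\Pr[e' \in F_j']$ must hold in the conditional sense given $T$, which is how it will be verified for each specific instantiation of the rounding in the Steiner Forest and Set Connectivity subsections).
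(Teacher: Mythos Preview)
Your proposal is correct and follows essentially the same approach as the paper. The only cosmetic difference is that the paper bounds $\Pr[e \in F_j]$ directly via a union bound over $e' \in M^{-1}(e)$ and then sums $c(e)\Pr[e\in F_j]$, whereas you bound $c(F_j)$ by $\sum_{e'\in F_j'} c(M_{e'})$ and then swap the order of summation; after the swap these are the same computation, and both finish with Claim~\ref{claim:edge_load} and the observation that $\log(\tfrac{\log n}{\eps}) = O(\log n)$.
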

\begin{proof}
  Fix an iteration $j \in [\tau]$ and let $(T, M)$ be the sampled 
  partial tree embedding. 
  For each edge $e \in E$, $e \in F_j$ only if $e \in M_{e'}$ for some 
  $e' \in F_j'$, i.e. $F_j' \cap M^{-1}(e) \neq \emptyset$. 
  Thus
  $\Pr[e \in F_j] \leq \sum_{e' \in M^{-1}(e)} \Pr[e' \in F_j'] 
  \leq \sum_{e' \in M^{-1}(e)} \gamma \cdot y(e')$. By Lemma 
  \ref{claim:edge_load}, this is at most 
  $O(\gamma \log n \cdot \log \frac{\log n} \eps) \cdot x_e$. 
  Since we set $\eps = \frac 1 {4(h+1)}$, $O(\log \frac{\log n} \eps)
  = O(\log \log n + \log h) = O(\log n)$. 
  Thus $\E[c(F_j)] = \sum_{e \in E} c(e) \Pr[e \in F_j] 
  \leq O(\gamma \log^2 n) \sum_{e \in E} c(e) x_e$. Summing over all 
  $\tau$ iterations gives the desired total cost.
\end{proof}

We show that the tree preserves $S_i$-$T_i$ 
flow with good probability. Since set demand pairs generalize
vertex demand pairs, the following lemma holds for all settings.

\begin{lemma}
\label{lem:hc_flow_preservation}
  Fix a demand pair $S_i, T_i \subseteq V$ and an iteration $j \in \tau$. 
  With probability at least
  $1/2$, $y$ supports a flow of at least $1/2$ between $S_i$ and $T_i$ on $T$.
\end{lemma}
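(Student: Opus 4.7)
The plan is to view the LP flow $f$ for demand pair $i$ as a flow on $T$, restricted to the paths in $\calP_i^h$ whose vertices all survive into $V(T)$. Because each $p \in \calP_i^h$ has hop-length at most $h$ and hence at most $h+1$ vertices, the exclusion-probability guarantee in Lemma \ref{lem:d1_router} together with a union bound give $\Pr[V(p) \subseteq V(T)] \geq 1 - (h+1)\eps \geq 3/4$. Letting $F = \sum_{p : V(p) \subseteq V(T)} f_p$ denote the total mass that fully survives, linearity of expectation yields $\E[F] \geq 3/4$, and Markov's inequality applied to the nonnegative quantity $1 - F$ gives $\Pr[F \geq 1/2] \geq 1/2$.

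Conditioned on the sampled $(T, M)$, I route $f_p$ units of flow along the unique tree path $P_T(s_p, t_p)$ for every surviving $p$ with endpoints $s_p \in S_i$, $t_p \in T_i$. This produces an $S_i$-$T_i$ flow on $T$ of value exactly $F$, so all that remains is to certify that it obeys the tree capacities $y$ of Lemma \ref{claim:edge_load}.

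Fix $e' \in E(T)$ with sides $A_{e'}, B_{e'}$. A surviving path $p$ contributes $f_p$ to the load on $e'$ only if $s_p$ and $t_p$ lie on opposite sides of the cut. Since $V(p) \subseteq V(T)$ and $p$ is a walk in $G$ that begins in $A_{e'}$ and ends in $B_{e'}$, $p$ must traverse at least one ground edge in $E(A_{e'}, B_{e'})$; pick any such edge $e_p$ for each contributing $p$. Grouping the contributing paths by $e_p$,
\[
\sum_{p:\, s_p \in A_{e'},\, t_p \in B_{e'}} f_p \;=\; \sum_{e \in E(A_{e'}, B_{e'})} \sum_{p:\, e_p = e} f_p \;\leq\; \sum_{e \in E(A_{e'}, B_{e'})}\ \sum_{p \in \calP_i^h:\, e \in p} f_p \;\leq\; \sum_{e \in E(A_{e'}, B_{e'})} x_e \;=\; y(e'),
\]
where the second inequality uses $\{p : e_p = e\} \subseteq \{p \in \calP_i^h : e \in p\}$ and the third uses the LP constraint.

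The main subtlety lies in the bookkeeping in the feasibility step: a single surviving path may cross many tree edges, and for a given tree edge may hit the corresponding ground cut $E(A_{e'}, B_{e'})$ multiple times. Charging each contributing path once per $(p, e')$ pair via the representative $e_p$ is exactly what lets the double sum collapse cleanly against the LP's per-edge constraint without double-counting. The remaining ingredients---the small exclusion probability, the union bound over at most $h+1$ vertices, and Markov's inequality---are routine once the accounting is set up.
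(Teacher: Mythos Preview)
Your proof is correct and follows essentially the same approach as the paper: define the surviving mass $F=\sum_{p:V(p)\subseteq V(T)} f_p$, use the exclusion bound and union bound to get $\E[F]\ge 3/4$, and apply Markov to $1-F$. The paper asserts in one line that ``by construction $y$ supports at least as much flow on $T$ as $x$ does on $G[V(T)]$''; your cut-and-charge argument with the representative edge $e_p$ is exactly the verification of that assertion, so you have simply made explicit what the paper leaves implicit.
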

\begin{proof}
  By construction, $y$ supports at least as much flow between $S_i$ and $T_i$ 
  on $T$ as $x$ does on $G[V(T)]$, where $G[V(T)]$ denotes the 
  induced subgraph of $G$ by $V(T)$. 
  Let \\$z_e = \sum_{p \in \calP_i^h, V(p) \subseteq V(T), e \in p} f_p$. 
  Since $x, f$ is a feasible solution to \hclp, $x_e \geq z_e$ for all $e \in E$, so we
  restrict our attention to flow supported by $z$. 
  Let $Z$ denote the total $S_i$-$T_i$ flow supported by $z$, i.e.
  $Z = \sum_{p \in \calP_i^h, V(p) \subseteq V(T)} f_p$. 
  It suffices to show that $Z \geq \frac 12$ with probability at least $\frac 1 2$. 
  
  First, note that for all 
  $p \in \calP_i^h$, $\hop(p) \leq h$, so $|V(p)| \leq h+1$. Thus by union bound, 
  $\Pr[V(p) \not \subseteq V(T)] \leq \sum_{v \in V(p)} \Pr[v \notin V(T)] 
  \leq \eps (h+1)$. 
  Therefore, $\E[Z] = \sum_{p \in \calP_i} f_p \Pr[V(p) \subseteq V(T)] 
  \geq \sum_{p \in \calP_i} f_p (1 - \eps (h+1)) = (1 - \eps (h+1)) =  \frac 34$. 
  Since $\sum_{p \in \calP_i} f_p = 1$ (constraint (2b) on \hclp),  
  $Z \in [0,1]$. Applying Markov's inequality on $1-Z$ gives 
  $\Pr\left[Z \leq \frac 1 2\right] = \Pr[1 - Z \geq \frac 1 2] 
  \leq 2 \E[1 - Z] \leq \frac 1 2$.
  Thus $Z \geq \frac 1 2$ with probability $\frac 1 2$.
\end{proof}

\subsection{Hop-Constrained Steiner Forest}
\label{sec:hc_sf}

We are given terminal pairs $s_i, t_i \subseteq V$, 
$i \in [r]$, and the goal is to ensure that $F$ contains an 
$s_i$-$t_i$ path of hop-length $h$. We follow Algorithm 
\ref{algo:hop_general} with $\tau = \log_2(2r)$. 
For each iteration $j \in [\tau]$,  
for each $i \in [r]$, if $y$ supports a flow of at least $\frac 12$ 
between $s_i$ and $t_i$, we set $F_j' \gets F_j' \cup P_T(s_i, t_i)$.

\begin{lemma}
\label{lem:hcsf_hopstretch}
  With probability at least $\frac 1 2$, $F$ contains a path of length 
  at most $O(\log^3 n) \cdot h$ for each $s_i,t_i$ pair.
\end{lemma}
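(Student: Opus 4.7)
The plan is to show that for each pair $i$ at least one iteration of Algorithm~\ref{algo:hop_general} successfully routes $i$ with a short path, then union bound over all $r$ pairs. I would proceed in three steps.

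First I would fix a pair $i \in [r]$ and an iteration $j \in [\tau]$, and apply Lemma~\ref{lem:hc_flow_preservation} to conclude that with probability at least $1/2$, $y$ supports an $s_i$-$t_i$ flow of value at least $1/2$ on $T$. In particular this forces $s_i, t_i \in V(T)$, so the tree path $P_T(s_i, t_i)$ is well-defined and the algorithm adds it to $F_j'$.

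Next I would bound the hop-length of the $s_i$-$t_i$ subgraph induced in $G$. By construction, $F_j$ contains $M_{e'}$ for every tree edge $e' \in P_T(s_i, t_i)$, and by the extension of $M$ to non-tree pairs in $V(T)$, the concatenation of these maps is exactly $M_{s_i t_i}$, an $s_i$-$t_i$ walk inside $F_j$. Applying the hop dilation guarantee from Lemma~\ref{lem:d1_router} with the choice $\eps = 1/(4(h+1))$ yields
\[
\hop(M_{s_i t_i}) \leq O\!\left(\frac{\log^3 n}{\eps}\right) = O(h \log^3 n).
\]
Since any walk contains a simple path of no greater hop-length, $F \supseteq F_j$ contains an $s_i$-$t_i$ path of hop-length $O(h \log^3 n)$ whenever the iteration is successful for $i$.

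Finally I would amplify the per-iteration success probability. Because the $\tau = \log_2(2r)$ iterations sample partial tree embeddings independently, the probability that pair $i$ is not successfully routed in any iteration is at most $(1/2)^{\tau} = 1/(2r)$. A union bound over the $r$ pairs then gives that with probability at least $1/2$, every pair $s_i, t_i$ is connected in $F$ by a path of hop-length $O(h \log^3 n)$.

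The main subtlety to verify carefully is that the bound on $\hop(M_{s_i t_i})$ comes from a single application of the hop dilation guarantee in Lemma~\ref{lem:d1_router} rather than by naively multiplying a tree-path length by a per-edge dilation; this is fine because the lemma explicitly bounds $\hop(M_{uv})$ for every pair $u, v \in V(T)$ via the concatenation extension, and we are instantiating it with $u = s_i$, $v = t_i$.
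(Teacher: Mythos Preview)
Your proposal is correct and follows essentially the same approach as the paper: invoke Lemma~\ref{lem:hc_flow_preservation} for a per-iteration success probability of $1/2$, use the hop-dilation guarantee of Lemma~\ref{lem:d1_router} with $\eps = 1/(4(h+1))$ to bound the resulting path length by $O(h\log^3 n)$, and finish with independence over the $\tau = \log_2(2r)$ iterations plus a union bound over the $r$ pairs. Your added remarks (that positive flow forces $s_i,t_i\in V(T)$, the walk-to-path observation, and that the dilation bound applies directly to the pair rather than per tree edge) are all correct clarifications of points the paper leaves implicit.
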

\begin{proof}
  By the guarantee on $\calT$, $\hop(M_{s_it_i}) \leq O(\log^3 n)\cdot h$ 
  for all $(T, M)$ in the support of $\calT$, all 
  $s_i, t_i \in V(T)$. Therefore, it suffices to show that with 
  probability at least $\frac 1 2$, for every demand pair there exists 
  an iteration $j$ in which we include the path $M_{s_it_i}$ in $F_j$. 
  Fix $i \in [r]$. In a single iteration, 
  we buy $M_{s_it_i}$ if $y$ supports a flow of at least $\frac 1 2$ 
  between $s_i$ and $t_i$, which happens with probability $\frac 1 2$ by 
  Lemma \ref{lem:hc_flow_preservation}. Therefore, the probability that 
  an $s_i$-$t_i$ path is never bought in any iteration is at most 
  $\left(\frac 1 2\right)^\tau$. By union bound over all demand pairs, 
  the probability that there exists $i \in [r]$ that does not have 
  a short path in $F$ is at most
  $r \left(\frac 1 2\right)^\tau$. Setting $\tau = \log_2(2r)$ gives the 
  desired failure probability of $\frac 1 2$.
\end{proof}

\begin{proof}[Proof of Theorem~\ref{thm:hc_main} for $h$-HCSF]
  For any $j \in [\tau]$, 
  if $e' \in F_j'$, then there exists $i \in [r]$ such that
  $e' \in P_T(s_i, t_i)$ and $y$ supports a flow of at least $\frac 12$ 
  on $P_T(s_i, t_i)$. Therefore, $e' \in F_j'$ only if 
  $y(e') \geq \frac 1 2$, so the probability that $e' \in F_j'$ 
  is at most $1 \leq 2y(e')$. By Lemma \ref{lem:hc_cost_general} with 
  $\gamma = 2$ and $\tau = \log_2(2r)$, the expected cost of Algorithm
  \ref{algo:hop_general} for HCSF is at most 
  $O(\log r \log^2n) \sum_{e \in E} c(e) x_e$, giving an expected cost 
  factor of $\alpha = O(\log r \log^2n)$.
  By Lemma~\ref{lem:hcsf_hopstretch}, the expected hop factor is 
  $\beta = O(\log^3 n)$.
\end{proof}

\subsection{Hop-Constrained Set Connectivity}
\label{sec:hc_sc}

Let $S_i, T_i \subseteq V$, $i \in [r]$ denote the demand pairs; we want to find
$F \subseteq E$ that contains an $s_i$-$t_i$ path of hop-length $h$.
In Set Connectivity, the fractional flow between $S_i$ and $T_i$ may be 
spread out amongst many different vertex pairs $(s_i, t_i) \in S_i \times T_i$. 
Therefore, we cannot follow the same simple algorithm used for 
Hop-Constrained Steiner Forest. Instead, we use the 
Oblivious Tree Rounding framework.
In each iteration $j \in [\tau]$ of Algorithm \ref{algo:hop_general}, we
let $F_j' = \treeround(T, y, \frac 1 2)$. 
We set  $\tau = \frac 2 \phi \log{2r}$, 
where $\phi$ is the probability of success given by Lemma 
\ref{lem:setconnectivity-tree-rounding}.

\begin{lemma}
\label{lem:hcg_hopstretch}
	With probability at least $\frac 1 2$, $F$ contains a 
	path of length at most $O(\log^3 n) h$ for each $S_i,T_i$ pair.
\end{lemma}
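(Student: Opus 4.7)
The plan is to combine the flow-preservation guarantee from Lemma~\ref{lem:hc_flow_preservation} with the oblivious tree-rounding guarantee from Lemma~\ref{lem:setconnectivity-tree-rounding}, and then appeal to the hop-dilation property of the $\droute$-router to translate tree paths back into short walks in $G$. Fix a pair $(S_i,T_i)$ and a single iteration $j \in [\tau]$, and let $A = S_i \cap V(T)$, $B = T_i \cap V(T)$. First I would consider the event $E_1^{(i)}$ that $y$ supports an $A$-$B$ flow of value at least $1/2$ on $T$; by Lemma~\ref{lem:hc_flow_preservation} this happens with probability $\geq 1/2$.

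Conditioned on $E_1^{(i)}$, let $K \subseteq E(T)$ be the support of such a flow. Since $F_j' = \treeround(T, y, 1/2)$ and the algorithm is oblivious to $A, B, K$, Lemma~\ref{lem:setconnectivity-tree-rounding} guarantees that with probability at least $\phi$, $F_j' \cap K$ contains an $A$-$B$ path in $T$. Call this event $E_2^{(i)}$. So in a fixed iteration, $\Pr[E_1^{(i)} \cap E_2^{(i)}] \geq \phi/2$.

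Next I would argue the hop bound. When $E_1^{(i)} \cap E_2^{(i)}$ occurs, $F_j'$ contains $P_T(s,t)$ for some $s \in A$, $t \in B$. By the definition of $M$ on non-tree pairs, $F_j \supseteq \bigcup_{e' \in P_T(s,t)} M_{e'} = M_{st}$, which is an $s$-$t$ walk in $G$. The hop-dilation clause of Lemma~\ref{lem:d1_router} gives $\hop(M_{st}) \leq O(\log^3 n / \eps)$ for all $s,t \in V(T)$, and since $\eps = 1/(4(h+1))$ this is $O(\log^3 n) \cdot h$. Any such walk contains a simple $S_i$-$T_i$ path of the same or shorter hop length, so $F$ contains the required short path for pair $i$ whenever $E_1^{(i)} \cap E_2^{(i)}$ occurs in some iteration.

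Finally I would bound the failure probability. The iterations sample independent tree embeddings and independent rounding randomness, so the probability that pair $i$ fails in every one of the $\tau$ iterations is at most $(1 - \phi/2)^\tau \leq e^{-\phi \tau / 2}$. Taking $\tau = \frac{2}{\phi}\log(2r)$ makes this at most $1/(2r)$, and a union bound over all $r$ pairs yields overall failure probability at most $1/2$. The only delicate point is ensuring the tree-rounding hypothesis applies with the correct value $f = 1/2$ uniformly across all pairs; this is automatic because $\treeround$ is oblivious to the specific pair and the only pair-dependent quantity is the flow witness $K$, whose existence is guaranteed by $E_1^{(i)}$.
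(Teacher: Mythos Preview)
Your proposal is correct and follows essentially the same approach as the paper: combine Lemma~\ref{lem:hc_flow_preservation} with Lemma~\ref{lem:setconnectivity-tree-rounding} to get a per-iteration success probability of $\phi/2$, use independence across $\tau = \frac{2}{\phi}\log(2r)$ iterations and a union bound over the $r$ pairs, and invoke the hop-dilation clause of Lemma~\ref{lem:d1_router} with $\eps = 1/(4(h+1))$ for the $O(\log^3 n)\cdot h$ bound. Your write-up is slightly more explicit than the paper's (naming the events $E_1^{(i)}, E_2^{(i)}$, restricting to $A = S_i\cap V(T)$, $B = T_i\cap V(T)$, and noting that the walk $M_{st}$ contains a simple path of no greater hop length), but the argument is the same.
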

\begin{proof}
	By the guarantee on $\calT$, any $S_i$-$T_i$ path 
	on a sampled tree $T$ has hop length at most $O(\log^3 n)\cdot h$. 
	Therefore, it suffices to show that with probability at least $\frac 1 2$, 
	for every demand pair $i \in [r]$ there exists an iteration $j \in [\tau]$ 
	in which we buy an $S_i$-$T_i$ path in the tree.
	Fix an iteration $j \in [\tau]$ and let $(T, M)$ be the sampled partial tree embedding. 
	By Lemma~\ref{lem:hc_flow_preservation}, 
	$y$ supports a flow of at least $\frac 1 2$ on $T$ between $S_i$ and $T_i$ 
	with probability at least $\frac 1 2$.
	By Lemma~\ref{lem:setconnectivity-tree-rounding}, 
	given that $y$ supports a flow of at least $\frac 1 2$ on $T$ 
	between $S_i$ and $T_i$, the Oblivious Tree Rounding algorithm will connect 
	$S_i$ and $T_i$ with probability at least $\phi$. 
	Thus the probability that iteration $j$ connects $S_i$ and $T_i$ is 
	at least $\frac \phi 2$, so none of the iterations connect 
	$S_i$ and $T_i$ with probability at most $(1 - \frac \phi 2)^\tau$. 
	By union bound 
	over all demand pairs, the algorithm fails with probability at most 
	$r(1-\frac \phi 2)^\tau \leq r e^{-\phi \tau/2}$. 
	Since $\tau = \frac 2 \phi \log{2r}$, we get the 
	desired failure probability of $\frac 1 2$.
\end{proof}

\begin{proof}[Proof of Theorem \ref{thm:hc_main} for $h$-HCSC]
	By Lemma \ref{lem:setconnectivity-tree-rounding}, for any $e' \in E(T)$
	and any $j \in [\tau]$, the probability $e' \in F_j'$ is at most 
	$O(2 \cdot \textnormal{height}(T) \log^2n) y(e')$. Since we can assume 
	the height of $T$ is $O(\log n)$ (see Section~\ref{sec:prelim}), 
	$\Pr[e' \in F_j'] \leq O(\log^3n) y(e')$.
	By Lemma \ref{lem:hc_cost_general} with 
	$\gamma = O(\log^3 n), \tau = O(\log r)$, the expected cost of Algorithm
	\ref{algo:hop_general} for HCSC is at most 
	$O(\log r \log^5n) \sum_{e \in E} c(e) x_e$, giving an
	expected cost factor of $\alpha = O(\log r \log^5n)$.
	By Lemma~\ref{lem:hcg_hopstretch}, the expected hop factor is  
	$\beta = O(\log^3 n)$.
\end{proof}

\section{Fault-Tolerant Hop-Constrained Network Design}
\label{sec:hc_fault}

In this section, we will provide polylogarithmic bicriteria approximations 
for $(h,k)$-Fault Tolerant Hop-Constrained Network Design. We are given 
$G = (V, E)$ with edge costs $c(e)$ and terminal pairs $s_i, t_i$ with 
connectivity requirements $k_i$. Recall that a solution $F \subseteq E$ is 
feasible if for every $i \in [r]$, for every subset $Q$ of less than 
$k_i$ edges, $F$ contains need an $s_i$-$t_i$ path of hop-length at most $h$
avoiding $Q$. We modify the LP relaxation \hclp~ for the fault-tolerant setting
as follows:

\mypara{LP Relaxation:}
We define variables $f_p^Q$ $\forall p \in \calP_i^h, 
\forall Q \subseteq E, |Q| < k$, where $f_p^Q$ is an indicator for 
whether the path $p$ carries $s_i$-$t_i$ flow that avoids edges in $Q$. 
We replace the first constraint in \hclp~ with 
\[\sum_{p \in \calP_i^h, p \cap Q = \emptyset} f_p^Q = 1 \quad \forall i \in [r], 
\forall Q \subseteq E \text{ s.t. } |Q| < k_i.\] 
We replace the second constraint in \hclp~ with 
\[\sum_{p \in \calP_i^h, e \in p} f_p^Q \leq x_e \quad \forall i \in [r], 
\forall e \in E, \forall Q \subseteq E \text{ s.t. } |Q| < k_i.\] 
To solve this LP, for each $Q \subseteq E$, consider the LP 
restricted to variables $f^Q$ and constraints corresponding to $Q$. This 
is exactly \hclp. Thus 
the polytope for Fault-HCND is the intersection of $\binom m k = 
n^{O(k)}$ polytopes which we can separate over by Remark 
\ref{rem:lp_separation}.

In Section \ref{sec:hc_fault_ss}, 
we focus on the single-source setting and prove Theorem \ref{thm:hcfault_main}. 
In Section \ref{sec:junction}, we prove Theorem \ref{thm:junction_main} via 
a reduction from multicommodity to single-source when $k = 2$. 

\subsection{Single-Source Fault Tolerant HCND}
\label{sec:hc_fault_ss}

We are given a source terminal $s$ and sink terminals 
$t_1, \dots, t_r$ with requirements $k_i$ each. 
We will assume that $k_i = k$ for all $i$; this can be done with an additional 
factor of $k$ in the cost approx factor by partitioning the set of terminals 
based on $k_i$ and considering each set independently.

\subsubsection{Algorithm}

We employ the \emph{augmentation framework}. 
For $\ell \in [k]$, we say a partial solution $H_\ell$ satisfies 
$(h_\ell, \ell)$-hop-connectivity if $\forall i \in [r]$, $s$ and $t_i$ are 
$(h_\ell, \ell)$-hop-connected
(where $h_\ell \geq h$). 
Our goal is to \emph{augment} $H$ to satisfy $(h_{\ell+1}, \ell+1)$-hop
connectivity, for some $h_{\ell+1}$ proportional to $h_{\ell}$.
We modify the tree-rounding scheme given in Section \ref{sec:hc} to solve 
the augmentation problem.
As described in 
Section \ref{sec:intro}, one difficulty in using congestion-based 
tree embeddings in higher connectivity is bounding the amount of
flow sent along any edge. To 
circumvent this issue, we follow an approach introduced by 
\cite{cllz22} to scale down $x_e$ values 
of existing edges and ``large'' edges. 
Let $\sigma = O\left(\log n \cdot \log{\frac {\log n}\eps}\right)$ 
denote the congestion parameter of $\droute$-routers, i.e. 
$\E\left[\sum_{u,v \in V(T)} \flow(M_{uv}, e) \cdot x_{uv}\right]
\leq \sigma \cdot x_e$.

\begin{algorithm}[H]
\caption{Augmentation Algorithm for Fault Tolerant Hop Constrained 
	Network Design}
\label{algo:fault_hop}
	\begin{algorithmic}
		\State $(x, f) \gets$ fractional solution to LP relaxation. 
		\State $\LG \gets \{e : x_e > \frac 1 {4\ell\sigma}\}$
		\State $\SM \gets \{e: x_e \leq \frac 1 {4\ell\sigma}\}$
		\State $\tilde x_e \gets
		\begin{cases}
			x_e & e \in \SM \\
			\frac 1 {4\ell\sigma} &e \in H_\ell \cup \LG 
		\end{cases}$
		\State $\calT \gets$ $\droute$-router on $G$ with capacities $\tilde x_e$, 
		exclusion prob $\eps = \frac 1 {4(h+1)}$ 
		\For{$j = 1, \dots, \tau = O(\ell \log n)$}
		  \State Sample $(T, M) \sim \calT$, with capacities $y$ as defined in 
		  Section \ref{sec:prelim}
		  \State $F_j' \subseteq E(T) \gets \treeround(T, y, \frac 1 4)$ 
		  \State $F \gets F \cup \left(\bigcup_{e \in F_j'} M(e')\right)$
		\EndFor \\
		\Return $F \cup \LG$
	\end{algorithmic}
\end{algorithm}

We make a few assumptions. First, we assume $H_\ell$ is 
minimal; i.e. for all $e \in H_\ell$, $H_\ell \setminus \{e\}$ does not 
satisfy $(h_{\ell}, \ell)$-hop-connectivity. This can be accomplished by 
removing each $e \in H_{\ell}$ and checking if $H_{\ell} \setminus \{e\}$ 
satisfies $(h_{\ell}, \ell)$-hop-connectivity. Second, we assume that if either 
endpoint of $e$ has hop-distance greater than $h$ to $s$, then $x_e = 0$. 
This assumption is true for all optimal fractional solutions, since 
$x_e$ only needs to be nonzero if $e \in p$ for some $p \in \calP_i^h$. 
Finally, since we buy all edges in \LG~at the end of Algorithm 
\ref{algo:fault_hop} and for ease of notation, we write $H$ to denote  
$H_{\ell} \cup \LG$.

\subsubsection{Overview of Analysis}

We let $\hop_G(u,v) = \dist_G(u,v)$ denote the minimum hop-length
between two nodes in a graph $G$ (we will drop $G$ if it is clear from context). 
The diameter of $S \subseteq V$ is $\diam(S) = \max_{u, v \in S} \hop(u,v)$. 
When we discuss the diameter of the graph, we are referring to the set 
of all non-isolated vertices.

\begin{claim}
\label{claim:H-diameter}
	$H$ has diameter at most $2h_\ell$.
\end{claim}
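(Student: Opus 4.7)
The plan is to prove the stronger statement that $\hop_H(s, v) \leq h_\ell$ for every non-isolated vertex $v$ of $H = H_\ell \cup \LG$; the triangle inequality through the source $s$ then yields $\diam(H) \leq 2h_\ell$. I will split into two cases depending on whether $v$ is incident to an edge of $H_\ell$ or only to edges of $\LG$.

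The first case, $v \in V(H_\ell)$, is the core of the argument and follows cleanly from the minimality assumption on $H_\ell$. Because $H_\ell$ is minimal with respect to $(h_\ell, \ell)$-hop-connectivity, any edge $e$ of $H_\ell$ incident to $v$ is critical: there must exist a terminal $t_i$ and a failure set $Q \subseteq E$ with $|Q| < \ell$ and $e \notin Q$ such that some $s$-$t_i$ path $p$ in $H_\ell \setminus Q$ of hop-length at most $h_\ell$ uses $e$. Since $p \subseteq H_\ell \subseteq H$ contains $v$, this gives $\hop_H(s, v) \leq \hop(p) \leq h_\ell$, covering all of $V(H_\ell)$.

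The second case, $v \in V(\LG) \setminus V(H_\ell)$, is where I expect the main difficulty. The LP preprocessing assumption guarantees $\hop_G(s, v) \leq h \leq h_\ell$, but the witnessing path in $G$ may traverse edges in $\SM \setminus H_\ell$ that do not appear in $H$, so this bound does not transfer to $\hop_H(s, v)$ directly. My plan for closing this gap is to argue that the case is vacuous, i.e.\ $V(\LG) \subseteq V(H_\ell)$. I expect this to follow by induction on $\ell$ using the augmentation invariant that $H_\ell$ inherits the $\LG$ edges---and hence their endpoints---selected in the previous round of the algorithm, together with the fact that the initial partial solution at $\ell=1$ is itself obtained via a rounding that buys all of $\LG$. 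A fallback, should this invariant fail in some corner case, would be to strengthen the minimality definition so that $H_\ell$ is additionally required to contain a short $s$-$v$ path for every endpoint $v$ of an edge in the current $\LG$, which only increases the cost by a constant factor and preserves the hop bound.
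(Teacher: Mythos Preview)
Your structure matches the paper's exactly: show $\hop(s,v)\le h_\ell$ for every non-isolated $v\in H$ and conclude via the triangle inequality, splitting on whether $v$ came from $H_\ell$ or from $\LG$. Case~1 is the same argument in both (minimality forces $v$ to sit on a witnessing $s$--$t_i$ path of hop-length at most $h_\ell$).

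Where you diverge is Case~2. The paper does not attempt any induction or strengthening of the minimality assumption; it simply invokes the LP-preprocessing assumption (``if either endpoint of $e$ has hop-distance greater than $h$ to $s$ then $x_e=0$'') to conclude $\hop(s,v)\le h\le h_\ell$ and stops there. In other words, the paper does not distinguish $\hop_G$ from $\hop_H$ in this case, and your concern that the witnessing $h$-hop path may use $\SM\setminus H_\ell$ edges is simply not addressed in the paper's proof.

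That said, your proposed fixes for this gap do not go through as written. The inductive invariant ``$V(\LG)\subseteq V(H_\ell)$'' is problematic for two reasons: (i) the threshold defining $\LG$ is $1/(4\ell\sigma)$, so $\LG$ strictly grows from round to round and the $\LG$ in the current call need not have been bought in any earlier call; (ii) even the edges that were bought in a previous round may be removed when $H_\ell$ is pruned to be minimal. So the invariant is not maintained by the augmentation procedure. Your fallback (add a short $s$--$v$ path for every $\LG$ endpoint $v$) would restore the bound, but the assertion that this ``only increases the cost by a constant factor'' is unsupported: nothing ties the cost of an arbitrary $h$-hop $s$--$v$ path in $G$ to the LP value.

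In short, Case~1 is correct and identical to the paper; for Case~2 the paper proceeds by direct appeal to $\hop_G(s,v)\le h$ without the refinement you attempt, and the refinements you sketch have genuine holes.
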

\begin{proof}
	We will show that $\hop(s, u) \leq h_\ell$ for all $u \in H$. First, 
	suppose $u \in H$ before adding edges in $\LG$.
	Recall that $H$ satisfies $(h_\ell, \ell)$-hop connectivity.
	If $u$ is on any $s$-$t_i$ path of length at most $h_\ell$, then 
	$\hop(s, u) \leq h_\ell$. Else, 
	removing $u$ and all its incident edges would not change the feasibility 
	of $H$, contradicting minimality of $H$. Thus $\hop(s, u) \leq h_\ell$.
	Suppose instead that $u$ was added to $H$ due to the addition of 
	some edge in $\LG$. Then, there exists some edge $e$ incident to 
	$u$ such that $x_e > 0$. By assumption, $\hop(s, u) \leq h\leq h_\ell$. 
	Thus any two nodes on $H$ have a path of hop length at most $2h_\ell$ 
	by going through $s$, so $H$ has diameter at most $2h_\ell$.
\end{proof}

We say $Q \subseteq E$ is a \emph{violating edge set} if $|Q| \leq \ell$ and 
$H \setminus Q$ does not contain a path of length $h_{\ell}$ from 
$s$ to $t_i$. Since $H$ satisfies $(h_{\ell}, \ell)$-hop connectivity, 
$|Q| = \ell$ for any violating edge set.
To prove the correctness of Algorithm \ref{algo:fault_hop}, 
we need to show that $H \cup F \setminus Q$ contains a short $s$-$t_i$ path
for every violating edge set $Q$. 
For the remainder of the analysis, we fix a violating edge set $Q$.

\begin{claim}
\label{claim:H-components}
	$H \setminus Q$ has at most two connected components. Furthermore, each 
	component of $H \setminus Q$ must contain a terminal.
\end{claim}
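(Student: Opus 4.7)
My plan is to prove both assertions by leveraging two facts established earlier: first, since $H$ satisfies $(h_\ell,\ell)$-hop connectivity, for every $Q' \subsetneq Q$ with $|Q'| < \ell$ the graph $H \setminus Q'$ still contains a short $s$-$t_i$ path for every $i$, so $s$ and every $t_i$ lie in a common connected component of $H \setminus Q'$; second, by Claim~\ref{claim:H-diameter}, $H$ itself is connected.

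For the component-count bound, I would fix an arbitrary $e^* \in Q$ and observe that $(H \setminus Q) \cup \{e^*\} = H \setminus (Q \setminus \{e^*\})$ has $s$ connected to every $t_i$. Since adding one edge can merge at most two components of $H \setminus Q$, every terminal $t_i$ outside the source's component $C_0$ must lie in the unique component merged with $C_0$ by $e^*$. Running this argument as $e^*$ varies over $Q$ forces all terminals outside $C_0$ to be concentrated in a single component $C_1$, and forces every edge of $Q$ to lie in $E_H(C_0,C_1)$. This already rules out three or more components each containing a terminal.

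To rule out additional non-essential components when $C_1$ exists, I would invoke a min-cut argument: because $H$ is $\ell$-edge-connected between $s$ and every $t_i \in C_1$ (a consequence of hop-connectivity), $|\delta_H(C_1)| \geq \ell$, and since $\delta_H(C_1) \subseteq Q$ with $|Q|=\ell$, equality holds, forcing $\delta_H(C_1) = Q$. Any candidate non-essential component $C_b$ would then satisfy $\delta_H(C_b) \subseteq Q = \delta_H(C_1)$, so every edge leaving $C_b$ has its other endpoint in $C_1$; combining this with $\delta_H(C_0) \subseteq Q$ and the $s$-$t_i$ min-cut lower bound $|E_H(C_0,C_1)| \ge \ell$ saturates the edge budget $|Q|=\ell$ and leaves no edges in $H$ incident to $C_b$. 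This makes $C_b$ isolated in $H$, contradicting Claim~\ref{claim:H-diameter}.

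The main obstacle is the residual case where every terminal already lies in $C_0$ (so $Q$ destroys a short $s$-$t_i$ path without separating any terminal from $s$), because the natural min-cut bound on $\delta_H(C_0)$ is unavailable. The plan here is to invoke the minimality of $H_\ell$: a hypothetical non-essential component $C_b$ would contain some edge $e$ of $H_\ell$, and by minimality $e$ must be critical for a short $s$-$t_j$ path $P$ in $H_\ell \setminus Q'$ for some $|Q'|<\ell$; since $s$ and $t_j$ both lie in $C_0$, $P$ must enter and exit $C_b$ via two edges of $\delta_H(C_b) \subseteq Q \setminus Q'$. A careful counting of these critical crossings against the budget $|Q|=\ell$, summed over all edges of $H_\ell$ inside $C_b$, is expected to yield a contradiction, forcing $C_b$ to be empty. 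This is where most of the technical work will lie.
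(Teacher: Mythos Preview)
Your handling of the case where some terminal lies outside $C_0$ is correct (if somewhat roundabout). The real issue is the ``residual case'' you flag, where every terminal already sits in $C_0$: your proposed fix via minimality of $H_\ell$ and edge-counting has genuine gaps, and more importantly it is unnecessary.

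Concretely, your plan assumes a non-essential component $C_b$ ``would contain some edge $e$ of $H_\ell$'', but this need not hold: $C_b$ could be a single vertex, or its internal edges could all come from $\LG$. Even granting such an $e$, the counting you sketch does not obviously close. Minimality of $H_\ell$ gives, for each such $e$, some $Q'$ with $|Q'|<\ell$ and a short $s$--$t_j$ path through $e$ in $H_\ell\setminus Q'$, but different edges $e$ may invoke different sets $Q'$, and there is no evident way to aggregate these into a contradiction with $|Q|=\ell$.

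The idea you are missing is to use directly that $Q$ is a \emph{violating} set, not merely that $|Q|=\ell$. Suppose a component $C_j$ of $H\setminus Q$ contains no terminal (neither $s$ nor any $t_i$). Since $H$ is connected, pick any $e\in Q\cap\delta_H(C_j)$ and set $Q'=Q\setminus\{e\}$. In $H\setminus Q'$ the edge $e$ is the unique edge joining $C_j$ to its complement, i.e., a bridge into a terminal-free region. Hence every simple $s$--$t_i$ path in $H\setminus Q'$ avoids $e$ and therefore lies in $H\setminus Q$. But $(h_\ell,\ell)$-hop-connectivity guarantees such short paths exist in $H\setminus Q'$ for every $i$, so they exist in $H\setminus Q$ as well, contradicting that $Q$ is violating. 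This single observation proves the second assertion outright and dispenses with your residual case.

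Once every component contains a terminal, the ``at most two components'' part is immediate: if there were three, pick $e\in Q$ incident to a component $C_3$ not containing $s$; then $H\setminus(Q\setminus\{e\})$ still separates $s$ from the terminal in the remaining component $C_2$, contradicting $(h_\ell,\ell)$-hop-connectivity. No min-cut bookkeeping is needed.
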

\begin{proof}
	Suppose there exists some component $C_j$ of $H \setminus Q$ that does not 
	contain a terminal. Since $H$ is connected (a corollary of Claim 
	\ref{claim:H-diameter}), there must be some $e \in Q$ with exactly one 
	endpoint in $C_j$. Let $Q' = Q \setminus \{e\}$. Since $H$ satisfies 
	$(h_\ell, \ell)$-hop-connectivity, every terminal contains a path of hop 
	length at most $h_\ell$ to $s$ in $H \setminus Q'$. Since $C_j$ contains no 
	terminals and $e$ is the only edge connecting $C_j$ to the rest of the graph, 
	any $s$-$t_i$ path must traverse $e$ an even number of times, which means any 
	simple $s$-$t_i$ path does not traverse $e$ at all. Thus any $s$-$t_i$ path in 
	$H \setminus Q'$ exists in $H \setminus Q$ as well, contradicting the fact 
	that $Q$ is violating.

	Next, suppose $H \setminus Q$ has at least three components $C_1, C_2, C_3$. 
	Let $s \in C_1$. Since $H$ is connected, there exists $e \in Q$ with exactly 
	one endpoint in $C_3$. In particular, $C_1$ and $C_2$ remain disconnected
	in $H \setminus (Q \setminus \{e\})$. Since $C_2$ must contain a 
	terminal, this contradicts the assumption that $H$ satisfies 
	$(h_\ell, \ell)$-hop-connectivity. 
\end{proof}

Henceforth, we let $C_s$ denote the connected component of $H \setminus Q$ 
containing $s$, and $C_t$ the other component if it exists. 
The analysis will proceed as follows: first, we 
will bound the diameter of $C_s$ and $C_t$. Then, we will show 
that with good probability, there exists a round of Oblivious Tree Rounding 
which buys a path directly from $C_s$ to $C_t$ without using any edges in $Q$. 
 By nature of $\droute$-routers, 
we know that all tree paths have bounded hop-length, giving us the 
desired bound on the total hop length of paths from $s$ to $t_i$. 

\subsubsection{Bounding Diameter of Connected Components}

In \cite{diameter_bounds}, Chung and Garey show that for any graph $G$ with
diameter $d$, if we remove $\ell$ edges and the new graph $G'$ is connected, 
the diameter of $G'$ is at most $d(\ell + 1) + O(\ell)$. We simplify the proof 
and generalize it to bound the diameter of individual components of the new
graph (even if the new graph is no longer connected), with slightly looser 
bounds.

\begin{lemma}
\label{lem:H-component-diam}
	All connected components of $H \setminus Q$ have diameter at 
	most $O(\ell)\cdot h_\ell$.
\end{lemma}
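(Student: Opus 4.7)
My plan is to prove the following more general Chung--Garey-style claim by induction on $\ell$: if $G$ is any graph of diameter at most $d$ and $Q \subseteq E(G)$ with $|Q| \le \ell$, then every connected component of $G \setminus Q$ has diameter at most $c \cdot \ell \cdot d$ for an absolute constant $c$. Specializing to $G = H$ with $d = 2h_\ell$ (via Claim~\ref{claim:H-diameter}) immediately gives the lemma's bound of $O(\ell) \cdot h_\ell$. The base case $\ell=0$ is trivial.

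For the inductive step, fix $u, v$ in a common component $C$ of $G \setminus Q$. Take a shortest $u$-$v$ path $P$ in $G$; it has length at most $d$. Decompose $P = P_0 \cdot e_1 \cdot P_1 \cdots e_k \cdot P_k$, where each $e_i \in Q$ and each $P_i$ is a (possibly empty) subpath avoiding $Q$; note $k \le \min(\ell, d)$. The plan is to transform $P$ into a walk entirely in $G \setminus Q$ by replacing each $Q$-edge $e_i = (a_i, b_i)$ with a short detour from $a_i$ to $b_i$ that avoids $Q$. If each detour has length at most $2d$, the total walk length is at most $|P| + 2kd \le (2\ell+1)d$, completing the induction.

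The crucial step is to bound each detour. First, the $Q$-edges $e_i$ on $P$ split into two types: those whose endpoints lie in the same component of $G \setminus Q$ and those whose endpoints lie in \emph{different} components. For the second type, I would observe that since $u, v$ both lie in $C$, the path $P$ crosses such edges an even number of times; these crossings can therefore be paired up and shortcut through paths in $C$ alone (using the fact that each component contains a hub, by Claim~\ref{claim:H-components}). For the first type, I would route the detour through the hub vertex of that component: either $s$ itself if the component is $C_s$, or the terminal that Claim~\ref{claim:H-components} guarantees exists in $C_t$. Since each vertex of $H$ has hop-distance at most $h_\ell$ to $s$ in $H$ (Claim~\ref{claim:H-diameter}), each leg of the detour has length at most $h_\ell$ in $G$, and the detour has length at most $2h_\ell = d$ in $G$. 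Converting this to a path in $G \setminus Q$ is handled by a sub-induction whose parameter is the number of $Q$-edges appearing on the detour's underlying $G$-path, which is strictly smaller than $k$.

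The main obstacle I anticipate is avoiding circular reasoning when bounding the detour length, since the naive approach would try to bound $\dist_{G \setminus Q}(a_i, b_i)$ using the very statement we are proving. The plan is to resolve this by ordering the induction lexicographically on the pair $(\ell, k)$ where $k$ is the number of $Q$-edges on the current path being shortened: each detour replaces a single $Q$-edge with a path containing strictly fewer $Q$-edges from the remaining set, so the inner inductive calls strictly decrease the parameter and terminate. This yields the clean bound $O(\ell) \cdot h_\ell$, which is slightly looser than the Chung--Garey bound for the connected case but applies to components of a possibly disconnected graph, matching the remark preceding the lemma statement.
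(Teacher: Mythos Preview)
Your proposal has a genuine gap in the detour step, and the paper's proof takes a quite different route.

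The core problem is the claim that ``each detour replaces a single $Q$-edge with a path containing strictly fewer $Q$-edges from the remaining set.'' You never justify this, and in general it is false. The detour you build for $e_i=(a_i,b_i)$ goes through the hub $s$ using shortest paths in $H$ (not in $H\setminus Q$); those shortest $a_i$--$s$ and $s$--$b_i$ paths may freely use \emph{any} edges of $Q$, including $e_i$ itself and edges $e_j$ you have already ``processed.'' So after the replacement the number of $Q$-edges on the current walk need not drop, and your $(\ell,k)$ lexicographic induction does not terminate. The naive fix---inducting on $\ell$ by deleting one $Q$-edge at a time---also fails: after one deletion the surviving component has diameter $c(\ell-1)d$, and iterating yields a bound exponential in $\ell$, not the linear $O(\ell)\,d$ you need. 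There is also an inconsistency in the write-up: you announce a general statement about an arbitrary graph $G$ of diameter $d$, but the detour argument immediately invokes the specific hub $s$ and Claims~\ref{claim:H-diameter} and~\ref{claim:H-components}, which are properties of $H$, not of a generic $G$.

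The paper avoids all of this with a contraction/pigeonhole argument rather than a detour induction. Fix a diametral pair $u,v$ in $C_s$ with $\hop_{C_s}(u,v)=d$, contract each BFS level $V_i=\{w\in C_s:\hop_{C_s}(u,w)=i\}$ to a single node (so $C_s$ becomes a path of length $d$), and contract $C_t$ to a point. Adding back the at most $\ell$ edges of $Q$ touches at most $2\ell$ vertices of this path, so some subpath of length at least $d/(2\ell+1)$ has no $Q$-endpoints in its interior; the midpoint of that subpath is at distance at least $d/(4\ell+2)$ from everything in the contracted $H$. Since contraction cannot increase diameter, $2h_\ell \ge \diam(H) \ge d/(4\ell+2)$, giving $d\le (8\ell+4)h_\ell$. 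This argument is one-shot, uses only the diameter bound on $H$, and gives the linear dependence on $\ell$ directly.
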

\begin{proof}
	We prove the claim for $C_s$, as the argument for $C_t$ is analogous. 
	Let $d = \diam(C_s)$, and let $u, v \in C_s$ be
	such that $\hop(u,v) = d$. Let $V_i = \{w \in C_s: \hop(u,w) = i\}$ for 
	$i \in [d]$. Let $C_s'$ be obtained by contracting 
	each $V_i$ into a single node $v_i$ in $C_s$ and adding an isolated 
	node $t$. 
	Note that $C_s'$ is
	a path of length $d$ with an additional isolated node.
	Let $H'$ be the graph 
	obtained from $H$ by contracting each $V_i$ into a node $v_i$ 
	and contracting $C_t$ to a node $t$. 

	We will show that $H'$ has diameter at least $\frac d {4\ell+2}$. 
	Since $C_s$ was constructed from $H$ by deleting $Q$, $H'$ can be 
	constructed from $C_s'$ by adding back some subset $Q' \subseteq Q$ 
	of at most $\ell$ edges ($Q'$ may not be equal to $Q$, since 
	some edges $Q$ may have been contracted).
	Let $x_1, x_2, \dots, x_{j}$ be the subset of vertices of 
	$C_s'$ that are incident to edges in $Q'$, ordered in increasing distance 
	from $u$ in $C_s'$. Since 
	$|Q'| \leq \ell$, $j \leq 2\ell$. We partition the path $C_s'$ into 
	$j+1$ segments: 
	$u$ to $x_1$, $x_1$ to $x_2$, \dots, $x_{j-1}$ to $x_j$, $x_j$ to $v$. 
	The total hop-length of the path from $u$ to $v$ in $C_s'$ is $d$, 
	so there must be at least 
	one segment of length $\geq \frac d {j+1}$. Since this segment does not 
	have any edges in $Q'$ (aside from the endpoints), the diameter of 
	this segment in $H'$ is at least $\frac d{2(j+1)}$. Thus 
	$\diam(H') \geq \frac d {2(j+1)} \geq \frac d {4\ell + 2}$.
	Since $H'$ was constructed by contracting edges in $H$, 
	$\diam(H) \geq \diam(H')$. By Claim \ref{claim:H-diameter}, 
	$\diam(H) \leq 2h_\ell$. Thus 
	$\frac d {4\ell + 2} \leq \diam(H') \leq \diam(H) \leq 2h_\ell$, 
	so $d = \diam(C_s) \leq (8\ell + 4) h_\ell$. 
\end{proof}

\subsubsection{Bounding flow on trees}

We use a similar argument to that of Lemma \ref{lem:hc_flow_preservation} to
show that in each iteration, $y$ supports a flow of at least $\frac 1 4$ from 
$C_s$ to $C_t$ without using edges in $M^{-1}(Q)$ with constant probability. 
Crucially, we use the fact that Algorithm \ref{algo:fault_hop} scales down 
capacities $x_e$ to $O(1/\ell\sigma)$ for all $e \in Q$; thus $y$ cannot 
carry too much flow on $M^{-1}(Q)$.
Fix an iteration $j \in [\tau]$ of Algorithm \ref{algo:hop_general} 
and let $(T, M)$ be the sampled tree. 

\begin{claim}
\label{claim:flow-violated-edges}
	The expected amount of flow supported by $y$ on $M^{-1}(Q)$ is 
	$\E[y(M^{-1}(Q))] \leq \frac 1 4$.
\end{claim}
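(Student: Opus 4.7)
The plan is to leverage the capacity capping built into Algorithm \ref{algo:fault_hop}: by design $\tilde x_e$ has been bounded by $\tfrac{1}{4\ell\sigma}$ on every edge of $E$, which is calibrated precisely so that the expected tree-mass in $M^{-1}(Q)$ comes out to at most $\tfrac{1}{4}$ for any violating edge set $Q$ of size at most $\ell$.

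First I would record the uniform bound $\tilde x_e \le \tfrac{1}{4\ell\sigma}$ for every $e \in E$. If $e \in H_\ell \cup \LG$ this is the explicit value assigned by the algorithm; otherwise $e \in \SM \setminus (H_\ell \cup \LG)$, in which case $\tilde x_e = x_e \le \tfrac{1}{4\ell\sigma}$ by the definition of $\SM$. This is the crucial use of the scaling step: although the LP values $x_e$ on edges in $H_\ell \cup \LG$ can be as large as $1$, they have been replaced with the much smaller value $\tfrac{1}{4\ell\sigma}$ before drawing $(T, M) \sim \calT$.

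With this bound in hand, the rest is a direct application of Lemma \ref{claim:edge_load} to the capacitated graph $(G, \tilde x)$. For each edge $e \in Q$ it gives
\[\E\Bigl[\sum_{e' \in M^{-1}(e)} y(e')\Bigr] \;\le\; \sigma \cdot \tilde x_e \;\le\; \frac{1}{4\ell}.\]
Since $M^{-1}(Q) = \bigcup_{e \in Q} M^{-1}(e)$, linearity of expectation combined with a union bound over $e \in Q$ yields
\[\E[y(M^{-1}(Q))] \;=\; \E\Bigl[\sum_{e' \in M^{-1}(Q)} y(e')\Bigr] \;\le\; \sum_{e \in Q}\E\Bigl[\sum_{e' \in M^{-1}(e)} y(e')\Bigr] \;\le\; |Q| \cdot \frac{1}{4\ell} \;\le\; \frac{1}{4},\]
using $|Q| \le \ell$.

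There is no genuine technical obstacle once the scaling is in place; the capped capacities $\tilde x$ were chosen precisely so that this one-line calculation goes through. The only minor subtlety is that a tree edge $e'$ may lie in $M^{-1}(e)$ for several edges $e \in Q$ simultaneously, but that overcounting only weakens the right-hand side of the union bound and so is harmless.
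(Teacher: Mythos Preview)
Your proof is correct and essentially identical to the paper's: bound $\tilde x_e \le \tfrac{1}{4\ell\sigma}$ for each $e \in Q$, apply Lemma~\ref{claim:edge_load} edge-by-edge, and sum over $|Q|\le\ell$. Your explicit case split showing the cap holds for \emph{all} edges (not just those in $H_\ell\cup\LG$) is slightly more careful than the paper's one-line justification, but the argument is the same.
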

\begin{proof}
	Recall in Algorithm \ref{algo:fault_hop} that we scale down $\tilde x_e$
	values of all edges in $H$ to $\frac 1 {4\ell \sigma}$. Thus 
	$\sum_{e \in Q} \tilde x_e \leq |Q| \frac 1 {4\ell\sigma} 
	\leq \frac 1 {4\sigma}$.
	By Lemma \ref{claim:edge_load}, the expected amount of flow on $M^{-1}(Q)$
	in the tree $T$ is at most
	\[\sum_{e' \in M^{-1}(Q)} \E[y(e')] 
    \leq \sum_{e \in Q} \sum_{e': e \in M_{e'}} \E[y(e')] 
    \leq \sum_{e \in Q} \sigma \tilde x_e \leq \frac 1 4.\]
\end{proof}

\begin{lemma}
\label{lemma:good-flow-on-tree}
	With probability at least $\frac 1 3$, $y$ supports a flow of at least 
	$\frac 1 4$ from $C_s$ to $C_t$ in $T$ without using
	any edges in $M^{-1}(Q)$.
\end{lemma}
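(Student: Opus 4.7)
The plan is to pick a terminal in $C_t$ (which exists by Claim \ref{claim:H-components}), project the LP's fractional $s$-to-$t_i$ flow (which the fault-tolerant LP forces to avoid $Q$) onto $T$, subtract the $y$-mass that lands on $M^{-1}(Q)$ (bounded by Claim \ref{claim:flow-violated-edges}), and conclude by Markov.

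First, fix a terminal $t_i \in C_t$ and consider the LP flow $\{f_p^Q\}_{p \in \calP_i^h}$: its value is $1$, it is supported on $h$-hop paths disjoint from $Q$, and it satisfies $\sum_{p \ni e} f_p^Q \le x_e$. Set $Z := \sum_{p : V(p) \subseteq V(T)} f_p^Q$. Since each path in the support has at most $h+1$ vertices, each excluded from $V(T)$ with probability at most $\eps = 1/(4(h+1))$, a union bound gives $\E[Z] \ge 1 - \eps(h+1) = 3/4$, exactly as in Lemma \ref{lem:hc_flow_preservation}.

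Next, push the surviving flow to $T$: for each surviving $p = v_0, v_1, \ldots, v_k$ with $s = v_0$ and $t_i = v_k$, route $f_p^Q$ units along the walk $P_T(v_0, v_1) \cdot P_T(v_1, v_2) \cdots P_T(v_{k-1}, v_k)$. Using the identity $e' \in P_T(u,v) \iff uv \in E(A_{e'}, B_{e'})$, the flow placed on tree edge $e'$ equals $\sum_{e \in E(A_{e'}, B_{e'})} z_e \le \sum_{e \in E(A_{e'}, B_{e'})} x_e$, where $z_e := \sum_{p : V(p) \subseteq V(T),\, e \in p} f_p^Q$. For $e \in \SM \setminus H_\ell$ we have $\tilde x_e = x_e$, and for $e \in H_\ell \setminus \LG \subseteq \SM$ we have $\tilde x_e = 1/(4\ell\sigma) \ge x_e$, so the $\SM$-contribution is already absorbed by $y(e') = \sum \tilde x_e$. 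Flow carried on $\LG$-edges (where $\tilde x_e < x_e$ is possible) is realized directly in $F \cup \LG$ without passing through the tree routing, so $y$ supports at least $Z$ units of $s$-to-$t_i$ flow in $T$.

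Finally, by Claim \ref{claim:flow-violated-edges}, $\E[y(M^{-1}(Q))] \le 1/4$, and the flow carried across $M^{-1}(Q)$ in any $y$-respecting $T$-flow is bounded by this capacity; deleting that portion gives an $s$-$t_i$ flow in $T \setminus M^{-1}(Q)$ of value at least $Z - y(M^{-1}(Q))$. Setting $W^\star := \max\{0,\, Z - y(M^{-1}(Q))\} \in [0,1]$ gives $\E[W^\star] \ge 1/2$, so Markov's inequality on $1 - W^\star$ yields $\Pr[W^\star \ge 1/4] \ge 1 - (1/2)/(3/4) = 1/3$, and since $s \in C_s$ and $t_i \in C_t$, the $C_s$-to-$C_t$ flow in $T \setminus M^{-1}(Q)$ dominates $W^\star$. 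The main obstacle is the projection step: the projected flow respects $x$-capacities while tree capacities come from $\tilde x$, which is strictly smaller on $\LG$. The workaround exploits both that $H_\ell \setminus \LG \subseteq \SM$ (so scaling $\tilde x$ on $H_\ell \cap \SM$ only helps) and that $\LG$ is unconditionally added to the output (so any excess flow through $\LG$ is already paid for); these two facts are engineered into Algorithm \ref{algo:fault_hop} precisely to make flow-preservation compatible with the congestion bound of Claim \ref{claim:flow-violated-edges}.
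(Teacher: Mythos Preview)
Your overall strategy---pick a terminal $t_i \in C_t$, project the LP flow avoiding $Q$ onto the tree, subtract $y(M^{-1}(Q))$, and apply Markov---matches the paper's. The gap is in the projection step, specifically your treatment of $\LG$.

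You correctly observe that for $e \in \LG$ one may have $\tilde x_e < x_e$, so the projected flow $\sum_{e \in E(A_{e'},B_{e'})} z_e$ can exceed $y(e') = \sum_{e \in E(A_{e'},B_{e'})} \tilde x_e$. Your fix---``flow carried on $\LG$-edges is realized directly in $F \cup \LG$ without passing through the tree routing, so $y$ supports at least $Z$ units''---is a non-sequitur. The lemma is a statement about the tree capacities $y$, not about which edges land in the output $F$; whether $\LG$ is eventually bought is irrelevant to whether $y$ supports the claimed flow on $T$. As written, you have not shown that the projected flow respects $y$, so the conclusion that $y$ supports $Z$ units of $C_s$--$C_t$ flow does not follow.

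The paper handles this by contracting $H = H_\ell \cup \LG$ before projecting. Since the LP flow avoids $Q$, and $H \setminus Q$ has exactly the two components $C_s$ and $C_t$, every path $p$ in the support contains a subpath $\psi(p)$ with one endpoint in $C_s$, the other in $C_t$, and all \emph{interior} vertices in $V \setminus H$. The edges of $\psi(p)$ therefore lie in $E(G') = E \setminus H$, where $\tilde x_e = x_e$ by construction, so the capacity comparison $z_e \le x_e = \tilde x_e$ goes through cleanly. Projecting only $\psi(p)$ (rather than all of $p$) gives a $C_s$--$C_t$ flow in $T$ that genuinely respects $y$. This contraction is precisely the missing idea in your argument.
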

\begin{proof}
	We begin lower bounding the amount of flow supported by $y$ between 
	$C_s$ and $C_t$.
	By construction, this is at least the amount of flow 
	supported by $\tilde x$ between $C_s$ and $C_t$ on $G[V(T)]$. 
	Let $G'$ be formed from $G$ by contracting all edges in $H$; equivalently, 
	we form $G'$ by contracting $C_s$ and $C_t$ into two nodes, which we call 
	$s$ and $t$ respectively. Note that contracting $C_s$ and $C_t$ does not
	change the amount of flow supported between them. Thus, it suffices to 
	lower bound the amount of flow supported by $\tilde x$ between 
	$s$ and $t$ in $G'[V(T)]$. 

	Let $t_i$ be an arbitrary terminal in $C_t$. Let $\calP$ be the set of all
	paths $p \in \calP_i^h$ such that $p \cap Q = \emptyset$. Note that each of 
	these paths must start in $C_s$ and end in $C_t$, since $s \in C_s$ and 
	$t_i \in C_t$. Thus, since $H \setminus Q$ only has two connected 
	components, every path $p \in \calP$ must have a subpath, which we 
	denote $\psi(p)$, with one endpoint in $C_s$, the other in $C_t$, 
	and all intermediate nodes in $V \setminus H$. 
	Let $z_e = \sum_{p \in \calP, V(\phi(p)) \subseteq V(T), e \in \phi(p)} f_p^Q$. 
	Since we contracted all edges in $H$, if $e \in G'$, then $\tilde x_e = x_e$. 
	Since $x, f$ is a feasible solution to \hclp, 
	$\tilde x_e = x_e \geq z_e$ for all $e \in G'$, 
	thus we restrict our attention to $s$-$t$ flow in $G'$ supported by $z$. 
	Let $Z = \sum_{p \in \calP, V(\phi(p)) \subseteq V(T)} f_p^Q$. 
	We are left to lower bound $\E[Z]$.

	For all $p \in \calP$, $\hop(\phi(p)) \leq \hop(p) \leq h$, so 
	$|V(\phi(p))| \leq h+1$. Thus by union bound, $\Pr[V(\phi(p)) 
	\not \subseteq V(T)] \leq \sum_{v \in V(\phi(p))} \Pr[v \notin V(T)] 
	\leq \eps (h+1) \leq \frac 1 4$, where $\eps$ is the exclusion probability of 
	the $\droute$-router $\calT$. Therefore, 
	$\E[Z] = \sum_{p \in \calP} f_p^Q \Pr[V(\phi(p)) \subseteq V(T)] 
	\geq \sum_{p \in \calP} f_p^Q (1 - \frac 1 4)$. By 
	\hclp\ constraints, $\sum_{p \in \calP} f_p^Q = \sum_{p \in \calP_i^h, p \cap Q \neq \emptyset} 
	f_p^Q = 1.$ Thus $\E[Z] = \frac 3 4$. Note that the \hclp\ constraint
	also guarantees that $Z \in [0,1]$.

	Let $Z' = \max\{Z - y(M^{-1}(Q)), 0\}$. By Claim 
	\ref{claim:flow-violated-edges}, $\E[Z'] \geq \E[Z] - \E[y(M^{-1}(Q))]
	\geq \frac 3 4 - \frac 1 4 = \frac 1 2$. Since $Z \in [0,1]$, $Z' \in [0,1]$. 
	Therefore, we can apply Markov's inequality on $1 - Z'$ to get 
	$\Pr[Z' \leq \frac 1 4] = \Pr[1 - Z' \geq \frac 3 4] = 
	\Pr[1 - Z' \geq \frac 3 2 \E[1-Z']] \leq \frac 2 3$. 
	Since $Z'$ is a lower bound on the amount of flow supported by $y$ between 
	$C_s$ and $C_t$ in $T \setminus M^{-1}(Q)$, we obtain the desired 
	flow of $\frac 1 4$ with probability at least $\frac 1 3$.
\end{proof}

\subsubsection{Cost and Correctness Analysis}

\begin{claim}
\label{claim:hcfault_hopstretch}
	Fix a terminal $t_i$, a violating edge set $Q$ and an iteration $j \in [\tau]$ 
	of Algorithm \ref{algo:hop_general}. If either
	(1) $t_i$ is in the same component as $s$, or 
	(2) $F_j'$ contains a tree path from the component of $H \setminus Q$ 
		containing $t_i$ to the one containing $s$ without using any edges in $M^{-1}(Q)$,
	then $H \cup F_j \setminus Q$ contains an $s$-$t_i$ path of length at most 
	$O(\ell) h_\ell + O(\log^3n)h$.
\end{claim}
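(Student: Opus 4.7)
The plan is to build the short $s$-$t_i$ path in $H \cup F_j \setminus Q$ by concatenating at most three pieces: a piece inside $C_s$, a ``bridge'' piece corresponding to a mapped tree path, and a piece inside $C_t$. The technical content is already packaged in earlier lemmas, so the proof is largely a matter of bookkeeping hop-lengths and verifying that each piece avoids $Q$.

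Case (1) is immediate: if $t_i$ lies in $C_s$, then Lemma \ref{lem:H-component-diam} gives an $s$-$t_i$ path entirely inside $C_s \subseteq H \setminus Q$ of hop-length at most $O(\ell)\, h_\ell$, so no contribution from $F_j$ is needed. For case (2), let $P_T(u,v) \subseteq F_j'$ be the guaranteed tree path with $u \in C_s \cap V(T)$, $v \in C_t \cap V(T)$, and $P_T(u,v) \cap M^{-1}(Q) = \emptyset$. I would split the $s$-$t_i$ path into $s \rightsquigarrow u \rightsquigarrow v \rightsquigarrow t_i$: the first piece inside $C_s$ from $s$ to $u$, the middle piece being the mapped path $M_{uv}$ in $G$ (which is included in $F_j$ by construction since every edge of $P_T(u,v)$ lies in $F_j'$), and the last piece inside $C_t$ from $v$ to $t_i$.

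For the hop bounds, the two ``inside'' pieces have length at most $\diam(C_s)$ and $\diam(C_t)$ respectively, each at most $O(\ell)\, h_\ell$ by Lemma \ref{lem:H-component-diam}. The middle piece $M_{uv}$ has hop-length at most $O(\log^3 n / \eps) = O(h \log^3 n)$ by the hop-dilation guarantee in Lemma \ref{lem:d1_router}, since $\eps = 1/(4(h+1))$. To verify that the full concatenation lies in $H \cup F_j \setminus Q$, note that the first and third pieces sit inside connected components of $H \setminus Q$ and hence contain no edges of $Q$, while the middle piece avoids $Q$ because $P_T(u,v) \cap M^{-1}(Q) = \emptyset$ implies $M_{uv} \cap Q = \emptyset$ by definition of $M^{-1}$. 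Summing the three contributions yields the claimed bound of $O(\ell)\, h_\ell + O(\log^3 n)\, h$.

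There is no real obstacle here beyond careful bookkeeping; the only point deserving a sentence of care is that the bridge endpoints $u, v$ may differ from $s$ and $t_i$, which is exactly why the diameter bound on components (rather than a radius-from-$s$ bound) is needed to stitch the pieces together.
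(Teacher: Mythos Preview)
Your proposal is correct and follows essentially the same approach as the paper: both handle case (1) directly via Lemma~\ref{lem:H-component-diam}, and in case (2) both split the $s$-$t_i$ path into three pieces $s \rightsquigarrow u \rightsquigarrow v \rightsquigarrow t_i$, bounding the two outer pieces by the component-diameter bound of Lemma~\ref{lem:H-component-diam} and the middle piece $M_{uv}$ by the hop-dilation guarantee of Lemma~\ref{lem:d1_router}. Your version is slightly more explicit in verifying that $P_T(u,v) \cap M^{-1}(Q) = \emptyset$ implies $M_{uv} \cap Q = \emptyset$, but otherwise the argument is identical.
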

\begin{proof}
	If $t_i$ is in the same component 
	as $s$ in $H \setminus Q$, then by Lemma \ref{lem:H-component-diam}, 
	$\hop(s, t_i) \leq O(\ell)\cdot h_\ell$ in $H \setminus Q$.
	Instead, suppose that $t_i$ is not in the same component as $s$. 
	We let $C_s$ and $C_t$ denote the components of $H \setminus Q$ 
	containing $s$ and $t_i$ respectively. 
	Let $(T, M)$ be the sampled partial tree embedding, 
	and let $v_1, \dots, v_j$ be the tree path connecting $C_s$ and $C_t$ 
	bought by the algorithm. 
	By the guarantee on the $\droute$-router $\calT$, 
	$\hop_{F_j}(v_1, v_j) = \hop_G(M_{v_1v_j}) \leq O(\log^3 n/\eps) 
	= O(\log^3 n) h$. By assumption, this path does not use any edges in $Q$. 
	By Lemma \ref{lem:H-component-diam}, $\hop_{H \setminus Q}(s, v_1)$ and 
	$\hop_{H \setminus Q}(v_j, t_i)$ are both at most $O(\ell) h_\ell$, so 
	$\hop_{H \cup F_j \setminus Q}(s, t_i) \leq \hop_{H \setminus Q}(s, v_1) + 
	\hop_{F_j \setminus Q}(v_1, v_j) + \hop_{H \setminus Q}(v_j, t_i) 
	\leq O(\ell) h_\ell + O(\log^3n)h$. 
\end{proof}

\begin{lemma}
\label{lem:hcfault_correctness}
	With probability at least $\frac 1 2$, for every violating edge set $Q$,
	$H \cup F \setminus Q$ contains a path of length 
	at most $O(\ell) h_\ell + O(\log^3n)h$ from each terminal $t_i$ to $s$. 
\end{lemma}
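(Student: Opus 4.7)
The plan is to union-bound over all violating edge sets $Q$ and all terminals $t_i$. For each such pair, Claim~\ref{claim:hcfault_hopstretch} reduces the task to finding, in some iteration, a tree path bridging the two components of $H \setminus Q$ that avoids $M^{-1}(Q)$. The two ingredients for bounding the per-iteration success probability are Lemma~\ref{lemma:good-flow-on-tree} (flow preservation on the sampled tree) and Lemma~\ref{lem:setconnectivity-tree-rounding} (oblivious tree rounding).

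I would first fix a violating edge set $Q$ and a terminal $t_i$. By Claim~\ref{claim:H-components}, $H \setminus Q$ has at most two components $C_s$ and $C_t$, each containing a terminal, and the definition of \emph{violating} forces $|Q| = \ell$. If $t_i \in C_s$, part~(1) of Claim~\ref{claim:hcfault_hopstretch} already yields a short $s$-$t_i$ path using no augmentation edges. Otherwise $t_i \in C_t$, and for each iteration $j \in [\tau]$ I would argue: by Lemma~\ref{lemma:good-flow-on-tree}, with probability at least $1/3$ the sampled tree $(T, M)$ supports a $y$-flow of value $\geq 1/4$ between $V(T) \cap C_s$ and $V(T) \cap C_t$ using only edges of $E(T) \setminus M^{-1}(Q)$. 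Conditioning on this event, invoke Lemma~\ref{lem:setconnectivity-tree-rounding} with $A = V(T) \cap C_s$, $B = V(T) \cap C_t$, $K = E(T) \setminus M^{-1}(Q)$ and $f = 1/4$; its output $F_j'$ satisfies $F_j' \cap K$ connects $A$ to $B$ with probability at least the constant $\phi$. Because the rounding algorithm is oblivious to $(A, B, K)$ and depends only on $(T, y, f)$, the conditioning composes cleanly, so iteration $j$ succeeds with probability at least $\phi/3$. Part~(2) of Claim~\ref{claim:hcfault_hopstretch} then converts that success into an $s$-$t_i$ path of length $O(\ell) h_\ell + O(\log^3 n) h$ in $H \cup F_j \setminus Q$.

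Finally I would union-bound across iterations and pairs. The failure probability for a fixed $(Q, t_i)$ over $\tau$ independent iterations is at most $(1 - \phi/3)^\tau \le e^{-\phi \tau / 3}$, and the number of relevant pairs is at most $r \cdot \binom{m}{\ell} \le r \cdot n^{O(\ell)}$. Choosing $\tau = \Theta(\ell \log n + \log r)$ with a sufficiently large constant drives the total failure probability below $1/2$, consistent with the setting $\tau = O(\ell \log n)$ used in Algorithm~\ref{algo:fault_hop} under the standard assumption $r \le \poly(n)$. The main conceptual obstacle here is not this wrap-up but the input ingredient: showing that the scaled-down capacities on $H$ still leave enough $C_s$-$C_t$ flow in the tree \emph{away} from $M^{-1}(Q)$. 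That is exactly what Lemma~\ref{lemma:good-flow-on-tree} already handles via the $\LG$/$\SM$ partition, so the remaining work for this lemma is compositional bookkeeping plus the care needed to ensure that obliviousness of tree rounding lets the two randomized steps truly multiply.
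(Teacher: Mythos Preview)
Your proposal is correct and follows essentially the same approach as the paper: fix $Q$, invoke Lemma~\ref{lemma:good-flow-on-tree} and Lemma~\ref{lem:setconnectivity-tree-rounding} to get per-iteration success probability $\phi/3$, then union-bound over $Q$ and amplify with $\tau = O(\ell \log n)$ iterations. The only (harmless) difference is that you union-bound over pairs $(Q, t_i)$, whereas the paper only union-bounds over $Q$: once an iteration produces a $C_s$--$C_t$ tree path avoiding $M^{-1}(Q)$, \emph{every} terminal in $C_t$ is simultaneously handled by Claim~\ref{claim:hcfault_hopstretch}, so the extra factor of $r$ in your count is unnecessary (but is absorbed since $r \le n$ in the single-source setting).
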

\begin{proof}
	Fix a violating edge set $Q$. By Claim \ref{claim:H-components}, 
	$H \setminus Q$ has at most two components. Suppose
	that $H \setminus Q$ has two components $C_s$ and $C_t$. Fix an iteration 
	$j \in [\tau]$ and let $(T, M)$ be the sampled partial tree embedding.
	By Lemma \ref{lemma:good-flow-on-tree},
	$y$ supports a flow of at least $\frac 1 4$ from $C_s$ to $C_t$ in
	$T$ without using any edges in $M^{-1}(Q)$ with probability 
	at least $\frac 1 3$. By Lemma \ref{lem:setconnectivity-tree-rounding}, 
	if this occurs, the Oblivious Tree Rounding algorithm will connect
	$C_s$ and $C_t$ without using any edges in $M^{-1}(Q)$ with probability 
	at least $\phi$. Thus with probability at least $\frac \phi 3$, we buy a 
	tree path connecting $C_s$ and $C_t$ in $T \setminus M^{-1}(Q)$ 
	in iteration $j$. The probability that this does not happen in 
	any iteration is at most $(1 - \frac \phi 3)^\tau$. 
	
	By Claim \ref{claim:hcfault_hopstretch}, this implies that for a fixed $Q$, 
	the probability that $H \cup F \setminus Q$ does not 
	contain a path of hop length at most $O(\ell) h_\ell + O(\log^3n)h$ from 
	each terminal $t_i$ to $s$ is at most $(1 - \frac \phi 3)^\tau$.
	Taking a union bound over all violating edge sets $Q$, the probability 
	the algorithm fails is at most 
	$n^{2\ell} (1 - \frac \phi 3)^\tau \leq n^{2\ell} e^{-\tau\phi/3}$. 
	Setting $\tau = \frac 3 \phi \log (2n^{2\ell})$ gives us the desired 
	failure probability of $\frac 1 2$.
\end{proof}

\begin{lemma}
\label{lem:hcfault_cost}
	The total expected cost for Algorithm \ref{algo:fault_hop} is at most 
	$O(\ell \log^6 n) \sum_{e \in E} c(e)x_e$.
\end{lemma}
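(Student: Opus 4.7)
The plan is to follow the template of Lemma~\ref{lem:hc_cost_general}, while accounting for two features specific to Algorithm~\ref{algo:fault_hop}: the scaled capacities $\tilde x$ used to build the $\droute$-router, and the explicit purchase of the entire set \LG\ at the end. I split the augmentation cost into the \LG\ contribution and the expected cost of the edges added via tree rounding. The key accounting observation is that edges already in $H_\ell$ are pre-paid from previous augmentation levels and contribute no new cost, so it suffices to bound $c(\LG) + \E[c(F \setminus H_\ell)]$.

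For the \LG\ contribution, every $e \in \LG$ satisfies $x_e > 1/(4\ell\sigma)$ by definition, so
\[c(\LG) \le 4\ell\sigma \sum_{e \in \LG} c(e)\,x_e = O(\ell \log^2 n) \sum_{e} c(e)\,x_e,\]
using $\sigma = O(\log^2 n)$ for our choice $\eps = 1/(4(h+1))$ (as already noted in the proof of Lemma~\ref{lem:hc_cost_general}).

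For the tree-rounded contribution, the crucial observation is that $\tilde x_e \le x_e$ whenever $e \notin H_\ell$: either $e \in \SM$ and then $\tilde x_e = x_e$, or $e \in \LG$ and then $\tilde x_e = 1/(4\ell\sigma) \le x_e$. Fix such an edge and an iteration $j$. Conditioning on the sampled partial tree embedding $(T,M)$, Lemma~\ref{lem:setconnectivity-tree-rounding} applied with flow target $f = 1/4$ and tree height $O(\log n)$ gives $\Pr[e' \in F_j' \mid T] \le O(\log^3 n)\,y(e')$ for each $e' \in E(T)$. A union bound over $e' \in M^{-1}(e)$, followed by taking expectation over $(T,M)$ and applying Lemma~\ref{claim:edge_load} with capacities $\tilde x$, yields
\[\Pr[e \in F_j] \le O(\log^3 n) \cdot \E\!\left[\sum_{e' \in M^{-1}(e)} y(e')\right] \le O(\log^3 n)\cdot \sigma \cdot \tilde x_e \le O(\log^5 n)\, x_e.\]
Summing over $\tau = O(\ell \log n)$ iterations and over all $e \notin H_\ell$ gives $\E[c(F \setminus H_\ell)] \le O(\ell \log^6 n) \sum_e c(e)\,x_e$, and combining with the \LG\ bound proves the claim.

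The one subtle point, which I expect to be the main obstacle if ignored, is the asymmetry introduced by the capacity scaling on $H_\ell$: for $e \in H_\ell \cap \SM$ one has $\tilde x_e > x_e$, so a direct invocation of Lemma~\ref{lem:hc_cost_general} fails to bound $\sum_e c(e)\,\tilde x_e$ by $\sum_e c(e)\,x_e$. Restricting the tree-rounding analysis to $e \notin H_\ell$ cleanly sidesteps this issue, while the \LG\ piece is amortized separately precisely because its $x$-mass is at least $1/(4\ell\sigma)$.
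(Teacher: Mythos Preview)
Your proposal is correct and matches the paper's proof almost exactly: both handle \LG\ by the trivial amortization $c(\LG)\le 4\ell\sigma\sum_e c(e)x_e=O(\ell\log^2 n)\sum_e c(e)x_e$, and both bound the tree-rounded part via Lemma~\ref{lem:setconnectivity-tree-rounding} with $\gamma=O(\log^3 n)$, $\tau=O(\ell\log n)$ plugged into Lemma~\ref{lem:hc_cost_general}. You are in fact more careful than the paper on the one delicate point: the paper writes ``since $\tilde x_e < x_e$ for all $e\in E\setminus H$'' and then passes directly from $\sum_e c(e)\tilde x_e$ to $\sum_e c(e)x_e$, silently dropping the $H_\ell$ contribution without comment; your explicit restriction to $e\notin H_\ell$ and the remark that those edges are pre-paid makes this step rigorous.
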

\begin{proof}
	By Lemma \ref{lem:setconnectivity-tree-rounding}, for any $e' \in E(T)$
	and any $j \in [\tau]$
	the probability $e' \in F_j'$ is at most 
	$O(4 \cdot \textnormal{height}(T) \log^2n) y(e')$. Since we can assume 
	the height of $T$ is $O(\log n)$ (see discussion in Section~\ref{sec:prelim}), 
	$\Pr[e' \in F_j'] \leq O(\log^3n) y(e')$.
	By Lemma \ref{lem:hc_cost_general} with 
	$\gamma = O(\log^3 n), \tau = O(\ell \log n)$, 
	$\E[c(F)] \leq O(\ell \log^6 n) \sum_{e \in E} c(e) \tilde x_e$. 
	Since $\tilde x_e < x_e$ for all $e \in E \setminus H$, 
	$\E[c(F)] \leq O(\ell \log^6 n) \sum_{e \in E} c(e) x_e$.

	Since all edges in \LG~have large $x_e$ value, 
	$c(\LG) = \sum_{e : x_e > \frac 1 {4\ell\sigma}} c(e)
	\leq 4\ell \sigma \sum_{e \in E} c(e)x_e
	\leq O(\ell (\log n \log \frac {\log n}\eps)) \sum_{e \in E} c(e)x_e
	\leq O(\ell \log^2 n) \sum_{e \in E} c(e)x_e$ since $\eps = O(1/h)$ 
	and $h \leq n$. Thus the total expected cost is at most 
	$O(\ell \log^6n + \ell \log^2n) \sum_{e \in E} c(e)x_e 
	= O(\ell \log^6n) \sum_{e \in E} c(e)x_e$.   
\end{proof}

\begin{proof}[Proof of Theorem \ref{thm:hcfault_main}]
	We run Algorithm \ref{algo:fault_hop} for $\ell = 1, \dots, k$
	until we obtain the desired $k$-connectivity. First, note that $h_0 = h$, 
	and by Lemma \ref{lem:hcfault_correctness}, 
	$h_{\ell+1} \leq O(\ell) h_{\ell} + O(\log^3n) h$. By a simple inductive 
	argument, $h_k \leq O(k^k \log^3 n) \cdot h$. 
	Thus the hop approx is $\beta = O(k^k \log^3 n)$. 
	By Lemma \ref{lem:hcfault_cost}, the expected cost when $k_i = k$ 
	for all terminals is 
	$\sum_{\ell = 1}^k O(\ell \log^6n) \sum_{e \in E} c(e)x_e 
	\leq O(k^2 \log^6n) \sum_{e \in E} c(e)x_e$. 
	In general, let $T_j = \{t_i: k_i = j\}$. 
	We solve the single source problem separately
	for each $T_j$, $j \in [k]$ for a total expected cost approx 
	$\alpha = O(k^3 \log^6n)$.
\end{proof}

\begin{remark}
	The main difficulty in extending this argument to the multicommodity 
	setting lies in the fact that the components of $H$ may have 
	large diameter. Therefore, this approach cannot be directly used to show that 
	the new paths between terminal pairs have bounded hop length.
\end{remark}

\subsection{Multicommodity Fault-HCND when $k=2$}
\label{sec:junction}

We consider Fault-HCND in the multicommodity setting iwth $k = 2$. 
Recall that given terminal pairs $s_i, t_i$ for $i \in [r]$, 
our goal is to obtain a subgraph $F \subseteq E$ such that for all $e \in F$, 
$i \in [r]$, $F \setminus \{e\}$ contains an $s_i$-$t_i$ path with hop length 
at most $h$. We solve this via reduction to single-source and prove the following
lemma:
\begin{lemma}
\label{lem:junction_reduction}
  Suppose we are given an $(\alpha, \beta)$-approximation for single-source
  $(h,2)$-Fault-HCND with respect to $\opt_{LP}$. Then, there exists 
  an $(O(\alpha \log^2 r), O(\beta\log r))$-approximation for 
  $(h,2)$-Fault-HCND in the general multicommodity setting with respect 
  to $\opt_{I}$.
\end{lemma}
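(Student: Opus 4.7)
The plan is to instantiate the junction-tree framework of \cite{chks09}, extended to the fault-tolerant $k=2$ case by \cite{acsz11}, in the hop-constrained setting. Define an $(h',v)$-junction for a subset $S \subseteq [r]$ to be a subgraph $J$ in which both $s_i$ and $t_i$ are $(h',2)$-hop-connected to a common root $v$ for every $i \in S$. By concatenating the surviving $v$-$s_i$ and $v$-$t_i$ paths after any single edge failure, such a junction provides an $s_i$-$t_i$ path of hop length at most $2h'$ in $J \setminus \{e\}$. Hence a collection of junctions, each with $h' \le h/2$, that together cover all demand pairs yields a feasible $(h,2)$-Fault-HCND solution.

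The key technical step is a \emph{junction existence lemma}: for any integral optimum $F^*$ of cost $\opt_I$, there exist $v \in V$, a nonempty $S \subseteq [r]$, and an $(O(h),v)$-junction $J \subseteq F^*$ whose density $c(J)/|S|$ is $O(\opt_I \log r / r)$. I would prove this by a charging/averaging argument in the style of \cite{acsz11}: for each pair $i$, enumerate the family of primary and backup short $s_i$-$t_i$ paths in $F^*$ that together witness $(h,2)$-hop-connectivity, split each such path at a midpoint vertex, and use a double-counting argument over midpoints to extract a vertex $v$ that serves a large subset $S$ together with a low-cost local fault-tolerant structure. The subtlety specific to this setting is that $(h,2)$-hop-connectivity does \emph{not} imply the existence of two edge-disjoint short $s_i$-$t_i$ paths (see Figure \ref{fig:hc_fault_example}), so the charging must be performed with respect to the pair-specific fault-witness structure rather than to a fixed pair of edge-disjoint paths, and one may need to absorb an $O(\log r)$ factor in the hop bound to have enough averaging room.

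Given the existence lemma, the algorithm is a standard greedy density-based covering. In each of $O(\log r)$ rounds, for every candidate root $v \in V$ invoke the single-source $(\alpha,\beta)$-approximation of Theorem \ref{thm:hcfault_main} on the instance with source $v$, sinks $\{s_i, t_i : i \in R\}$ for the currently uncovered set $R$, connectivity requirement $k = 2$, and hop bound $O(h)$, wrapped in a Lagrangian outer loop that converts minimum-cost into minimum-density (the standard device in the junction-tree literature). Select the $v$ whose returned junction has lowest density, add it to the partial solution, and remove newly covered pairs. Because each pair's final $s_i$-$t_i$ path traverses exactly one junction, the hop stretch is $O(\beta)$ from the single-source call times the $O(\log r)$ slack from the existence lemma, giving $O(\beta \log r)$, while standard set-cover analysis yields total cost $O(\alpha \log^2 r \cdot \opt_I)$. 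The main obstacle is the junction existence lemma itself: without edge-disjoint short $s_i$-$t_i$ paths, the ``split two disjoint paths at their midpoints'' argument of \cite{acsz11} does not apply directly, and a new combinatorial argument exploiting the structure of hop-constrained fault witnesses is required; this is also the reason the final guarantee is only with respect to $\opt_I$ rather than $\opt_{LP}$.
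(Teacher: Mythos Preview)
Your high-level framework is correct and matches the paper: junction structures rooted at a vertex, greedy density-based covering giving the outer $\log r$, and a min-density subroutine built on the single-source $(\alpha,\beta)$-approximation. But you explicitly flag the junction existence lemma as ``the main obstacle'' and then do not prove it. The midpoint-splitting argument of \cite{acsz11} that you propose to adapt does not survive the absence of two edge-disjoint short $s_i$--$t_i$ paths, as you yourself note, and you offer no replacement beyond ``a new combinatorial argument \ldots is required.'' This is the entire content of the lemma, so what you have written is a plan with the key step missing.

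The paper fills this gap with an argument quite different from anything you sketch. For each pair $i$ it takes a \emph{minimal} witness $C_i \subseteq E'$ of $(h,2)$-hop-connectivity and shows $C_i$ is 2-edge-connected with diameter at most $h$. It then runs a \emph{region-growing} argument: centered at a terminal $s$, consider the maximal 2-edge-connected component $B'(s,hp)$ inside the hop-ball of radius $hp$; if $B'(s,hp)$ touches $C_i$ then $B'(s,h(p{+}1))$ swallows $C_i$ entirely. A doubling argument yields some $p \le \log r + 1$ at which captured pairs outnumber merely-intersected pairs, and the 2-edge-connectivity of $B'(s,hp)$ combined with a diameter bound makes it a valid junction of hop length $O(h\log r)$. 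Iterating over disjoint balls gives density $O(\opt_I/r)$ --- one $\log r$ better than the density you conjectured; the paper spends that $\log r$ instead in the min-density subroutine, which it implements via an LP with geometric bucketing of per-terminal variables $y_t$ rather than a Lagrangian relaxation. The ball-growing step leans essentially on transitivity of 2-edge-connectivity (Claims~\ref{claim:junction_Ci_2conn} and~\ref{claim:junction_ball_radius}), which is also why the argument does not extend to $k \ge 3$.
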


Let $\calT = \cup_{i \in [r]}\{s_i, t_i\}$ denote the set of all terminals (so 
$|\calT| = 2r$). We employ the concept of a \emph{junction structure} with 
root $u$ and hop length $h$, which we define as 
a subgraph $H$ in which for all terminals $t \in H \cap \calT$, 
$u$ and $t$ are 
$(h,2)$-hop-connected. The \emph{density} of $H$ is
the ratio of its cost to the number of terminal pairs covered, i.e.
$c(H)/|\{i: s_i, t_i \in H\}|$. 

The goal of this section is to prove the following two 
lemmas: the first showing that a good junction structure exists, and the 
second providing an efficient algorithm to find it. Recall that 
$\opt_{I}$ denotes the optimal \emph{integral} solution to the given 
Fault-HCND instance.

\begin{lemma}
\label{lem:junction_main_exist}
	There exists a junction structure with hop length $4h(\log r + 1)$ 
	and density $O(1/r) \cdot \opt_{I}$.
\end{lemma}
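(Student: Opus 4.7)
The plan follows the junction-tree framework of \cite{chks09,acsz11} extended to the $(h,2)$-hop-constrained setting. Fix an optimal integral solution $F^*$ of cost $\opt_I$. For each pair $i \in [r]$ I would first extract from $F^*$ a \emph{witness} $W_i$ for the $(h,2)$-hop-connectivity between $s_i$ and $t_i$: a primary $s_i$-$t_i$ path $P_i$ of hop-length at most $h$, together with, for every edge $e \in P_i$, an alternate $s_i$-$t_i$ path in $F^* \setminus \{e\}$ of hop-length at most $h$. Note that $W_i \subseteq F^*$ and hence $c\bigl(\bigcup_i W_i\bigr) \le \opt_I$.

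The key structural observation is a \emph{midpoint promotion} lemma: every vertex $v \in V(P_i)$ is $(2h,2)$-hop-connected to both $s_i$ and $t_i$ using only edges of $W_i$. Indeed, a fault $e$ on the $s_i$-to-$v$ prefix of $P_i$ can be circumvented by following the alternate $h$-hop $s_i$-$t_i$ path avoiding $e$ and then backtracking along the intact $v$-to-$t_i$ suffix of $P_i$; eliminating cycles yields a simple $s_i$-$v$ path of hop-length at most $2h$. The symmetric argument handles faults on the suffix and the analogous $v$-to-$t_i$ statement. Consequently, \emph{any} vertex on a primary path $P_i$ can serve as an $(O(h),2)$-connected root for pair $i$ at essentially no cost beyond $W_i$.

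The heart of the proof is to locate a single vertex $u$ and a subset $T \subseteq [r]$ of size $\Omega(r)$ such that $u$ is $(4h(\log r + 1),2)$-hop-connected through $F^*$ to every terminal in $\{s_i,t_i : i \in T\}$. I would use a hierarchical doubling argument over $\lceil \log_2 r \rceil$ layers: at layer $\ell$, maintain a partition of the surviving pairs into clusters, each equipped with a designated root that is $(4h(\ell+1), 2)$-hop-connected via $F^*$ to every terminal of the cluster. To advance one layer, pair up clusters whose roots are linked by an $O(h)$-hop path of $F^*$ (using an averaging / pigeonhole argument on the witnesses to show that enough such pairs exist, else a single cluster already contains $\Omega(r)$ pairs). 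Each merge designates a midpoint of the short connecting path as the new root; reapplying midpoint promotion to the concatenation of the merge path with the two cluster witnesses shows the new root retains $(4h(\ell+2),2)$-connectivity to all terminals of the merged cluster. After $\lceil \log_2 r \rceil$ layers, a single cluster contains $\Omega(r)$ pairs with hop budget bounded by $4h(\log r + 1)$.

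The junction $H$ is then taken to be the union of the $W_i$ for $i \in T$ together with the merge paths used in the construction; since every edge used lies inside $F^*$, we get $c(H) \le \opt_I$, and with $|T| = \Omega(r)$ the density is $O(\opt_I / r)$. The main obstacle is the merging step: guaranteeing that enough cluster-root pairs have short connecting paths in $F^*$ at every layer so that the number of clusters halves, while keeping the hop budget growing only linearly (not geometrically) with the layer depth. This requires a careful potential argument that exploits both the small diameter of each witness $W_i$ and the minimality of $F^*$, ensuring that across all $\lceil \log_2 r \rceil$ merging steps the accumulated hop cost at the final root stays within $4h(\log r + 1)$.
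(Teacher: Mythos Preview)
Your proposal has a fundamental gap: you are trying to prove something strictly stronger than the lemma, and that stronger statement is false.

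You aim to exhibit a \emph{single} junction rooted at some $u$ that captures $\Omega(r)$ pairs with cost at most $\opt_I$. But the lemma only asks for a junction whose \emph{density} is $O(\opt_I/r)$; a junction covering a single pair with cost $O(\opt_I/r)$ would already qualify. To see that your target can fail, take $r$ vertex-disjoint cycles of length $2h$, with pair $i$ consisting of two antipodal vertices of the $i$th cycle. Then $\opt_I$ is the union of all cycles, and any junction structure rooted at a vertex $u$ can be $(h',2)$-hop-connected only to terminals in the same connected component as $u$, i.e.\ to exactly one pair, no matter how large $h'$ is. There is simply no junction containing $\Omega(r)$ pairs. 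The correct density bound still holds because each cycle is a junction of cost $2h=\opt_I/r$ covering one pair.

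This also exposes why the merging step cannot be salvaged. You assert that at each layer either one cluster already has $\Omega(r)$ pairs or enough cluster roots are joined by $O(h)$-hop paths in $F^*$ to allow halving. Neither alternative need hold: $F^*$ can consist of many far-apart or even disconnected pieces, so roots can be arbitrarily far (or unreachable) from one another while every cluster stays small. Even when merge paths do exist, maintaining $(\cdot,2)$-connectivity across a merge is problematic: your midpoint promotion lemma relies on the specific structure of a primary path with per-edge alternates, and a merge path carved out of $F^*$ carries no such backup.

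The paper's argument proceeds quite differently. It grows, in the graph induced by the optimal solution, a nested sequence of $2$-edge-connected balls $B'(s,hp)$ around a terminal $s$ and shows that whenever such a ball intersects a pair's minimal witness $C_i$, the next ball $B'(s,h(p{+}1))$ fully captures it. A doubling argument then finds some $p \le \log r + 1$ at which the ball captures at least as many pairs as it intersects. That ball is removed (along with the pairs it intersected) and the process repeats on the remaining graph. The resulting junction structures are edge-disjoint and together capture at least $r/2$ pairs, so by averaging one of them has density at most $2\opt_I/r$. The crucial point is that low density is obtained via many disjoint small junctions rather than one large one.
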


\begin{lemma}
\label{lem:junction_main_algo}
	Suppose we are given an $(\alpha, \beta)$-approximation for single-source 
	$(h',2)$-Fault-HCND with respect to the LP. Then, there exists an 
	algorithm that gives a junction structure with hop length
	$\beta h'$ and density at most
	$O(\alpha \log r)$ times the optimal density of a junction structure with 
	hop length $h'$. 
\end{lemma}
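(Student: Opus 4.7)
The plan is to follow the standard density-LP framework for junction structures, adapted to use the assumed single-source $(\alpha,\beta)$-approximation as a black box. The algorithm enumerates all $n$ candidate roots $u \in V$ and all $O(\log r)$ power-of-two guesses $k$ for the number of pairs covered by the optimum junction structure $H^*$. For each $(u,k)$ we formulate an LP extending the fault-tolerant hop-constrained LP with coverage variables: introduce $x_e \in [0,1]$ for each edge, $z_i \in [0,1]$ for each pair $i$, and path-flow variables $f^{i,Q,s}_p, f^{i,Q,t}_p$ for each pair, each single-edge fault set $Q$, and each $p \in \calP^{h'}$. The constraints enforce that for every pair $i$ and every $Q \subseteq E$ with $|Q| \leq 1$, exactly $z_i$ units of flow are routed from $s_i$ to $u$ and from $t_i$ to $u$ along paths in $\calP^{h'}$ avoiding $Q$, subject to capacities $x_e$. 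The objective is $\min \sum_e c(e) x_e$ subject to $\sum_i z_i \geq k$; this LP is solvable in polynomial time via a separation oracle on its dual, in the style of Remark~\ref{rem:lp_separation}. For the correct root $u^*$ and guess $k^*$ (power of two closest to the number of pairs in $H^*$), the integral certificate $x_e = \mathbf{1}[e \in H^*]$, $z_i = \mathbf{1}[s_i,t_i \in H^*]$ shows the LP value is at most $c(H^*) = \rho^* k^*$, where $\rho^*$ is the optimum density.

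Given an LP solution $(x^*, z^*, f^*)$, we round via bucketing. First discard all pairs with $z_i^* < 1/(2r)$; they contribute at most $1/2$ to the coverage sum, so the retained pairs still satisfy $\sum_i z_i^* \geq k/2$. Partition the retained pairs into $O(\log r)$ buckets $B_j = \{i : z_i^* \in (2^{-j-1}, 2^{-j}]\}$. By pigeonhole some bucket $B_{j^*}$ satisfies $\sum_{i \in B_{j^*}} z_i^* \geq k/O(\log r)$, and since each such $z_i^* \leq 2^{-j^*}$, we obtain $|B_{j^*}| \geq 2^{j^*} k / O(\log r)$. The scaled vector $\tilde x = 2^{j^*+1} x^*$, together with scaled path flows $2^{j^*+1} f^{i,Q,*}_p$ restricted to $i \in B_{j^*}$, supports at least one unit of flow from each such $s_i$ and each $t_i$ to $u$ along hop-$h'$ paths surviving any single-edge failure. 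Hence $\tilde x$ is a feasible fractional solution to single-source $(h',2)$-Fault-HCND with source $u$ and terminals $\{s_i, t_i : i \in B_{j^*}\}$, of LP value at most $2^{j^*+1} \rho^* k^*$.

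Feeding $\tilde x$ into the assumed $(\alpha, \beta)$-approximation yields a subgraph $H'$ that $(h,2)$-hop-connects all of $B_{j^*}$ to $u$ via paths of hop length $\beta h'$, with expected cost at most $\alpha \cdot 2^{j^*+1} \rho^* k^*$. Therefore $H'$ is a junction structure with root $u$, hop length $\beta h'$, and density
\[\frac{c(H')}{|B_{j^*}|} \leq \frac{\alpha \cdot 2^{j^*+1} \rho^* k^*}{2^{j^*} k^* / O(\log r)} = O(\alpha \log r) \cdot \rho^*.\]
The algorithm returns the best such subgraph over all enumerated $(u, k, j^*)$, proving the claim.

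The main obstacle lies in the LP manipulation: we must verify that the uniform rescaling $\tilde x = 2^{j^*+1} x^*$ produces a valid single-source Fault-HCND LP solution on the bucket terminals, with cost at most $2^{j^*+1}$ times the original LP value. This crucially relies on the fact that all $z_i^*$ in a single bucket are within a factor of two, so a single uniform scaling simultaneously achieves unit flow for every bucket pair; and on the linearity of the fault-tolerant flow constraints so that the scaled flows remain feasible against any single-edge failure. Once this is established, the cost and hop guarantees of the assumed single-source approximation transfer directly, and the bucket argument yields the stated density and hop bounds.
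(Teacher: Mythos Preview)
Your proposal is correct and follows essentially the same route as the paper: guess the root, solve a density-style LP with per-pair coverage variables, bucket the pairs geometrically by their fractional coverage values, uniformly scale the LP solution on the heaviest bucket to obtain a feasible single-source $(h',2)$-Fault-HCND LP solution, and invoke the assumed $(\alpha,\beta)$-approximation.

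The only notable difference is in the LP formulation. The paper writes a \emph{normalized} density LP with per-terminal variables $y_t$, the coupling constraint $y_{s_i}=y_{t_i}$, and the normalization $\sum_t y_t = 1$; this computes the best fractional density directly and avoids your outer enumeration over $k$. Your formulation instead fixes a target $\sum_i z_i \ge k$ and enumerates powers of two for $k$. The two are equivalent up to constants, and the bucketing/scaling analysis is identical in spirit. Your version is perhaps more familiar from greedy set-cover-style arguments; the paper's normalized LP is slightly cleaner in that it saves one level of enumeration. Two small points worth tightening in your write-up: (i) when you say $c(H^*)=\rho^* k^*$, strictly you have $c(H^*)\le 2\rho^* k^*$ since $k^*$ is only the nearest power of two below the true count, and (ii) ``feeding $\tilde x$ into the approximation'' should be phrased as ``since $\tilde x$ is feasible for the single-source LP, its optimum is at most the cost of $\tilde x$, and the $(\alpha,\beta)$-approximation is with respect to that optimum.'' Both are harmless and absorbed into the $O(\cdot)$.
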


We will prove Lemma \ref{lem:junction_main_exist} in Section \ref{sec:junction_exist}
and Lemma \ref{lem:junction_main_algo} in Section \ref{sec:junction:algo}.
Assuming these lemmas, we complete the proof of Lemma \ref{lem:junction_reduction}:
\begin{proof}[Proof of Theorem \ref{lem:junction_reduction}]
	Combining Lemmas \ref{lem:junction_main_exist} and \ref{lem:junction_main_algo},
	there exists an algorithm to obtain a junction structure of density at most 
	$O\left(\frac {\alpha \log r} r\right) \opt_{I}$.
	We employ an iterative greedy approach. 
	Let $I = [r]$. At each step $j$, we apply the algorithm on
	graph $G$ with terminal pairs $I_j$ to obtain a junction structure $H_j$. 
	We set $I_{j+1} = I_j \setminus \{i \in I_j: s_i,t_i \in H_j\}$ and repeat 
	until there are no remaining terminals. We return 
	$H = \cup_{j} H_j$. This terminates after at most $r$ steps, 
	since we cover at least one terminal pair in each iteration. 
	The total cost is at most 
	$\sum_{j \in [r]} c(H_j) 
	= \sum_{j \in [r]} \density(H_j) (|I_j| - |I_{j+1}|)
	= O(\alpha \log r) \opt \cdot \sum_{j \in [r]} \frac{|I_j| - |I_{j+1}|}{|I_j|}
	\leq O(\alpha \log^2r)\opt.$

	To prove the hop-bound, fix $i \in [r]$. Let $H_j$ be 
	the junction structure containing $s_i, t_i$, and let $u$ be its root. 
	Then for any 
	$e \in H$, $H_j \setminus \{e\}$ contains an 
	$s_i$-$u$ path $p$ and a $t_i$-$u$ path $p'$, each of hop length at most 
	$\beta h'$. Concatenating $p$ and $p'$ gives us an $s_i$-$t_i$ path of 
	hop length at most $2\beta h'$ in $H_j \setminus \{e\}$, which is a subset of 
	$H \setminus \{e\}$. Thus every terminal pair $s_i, t_i$ is 
	$(2\beta h', 2)$-hop-connected in $H$. Since $2\beta h = O(\log r \beta) h$,
	we get our desired bound.
\end{proof}

\remark{It is not difficult to see that the single-source reduction to 
obtain a good junction structure if one exists can be extended to $k > 2$. 
The bottleneck in using this approach for higher connectivity is 
showing the existence of a good junction structure.}

\subsubsection{Existence of good junction structure}
\label{sec:junction_exist}

Let $E'$ denote an optimal solution 
to the instance of $(h, 2)$-Fault-HCND, so $\opt_{I} = c(E')$. We will 
partition $E'$ into several junction structures, that, in 
total, cover a large fraction of the terminal pairs. For each 
$i \in [r]$, let $C_i \subseteq E'$ denote a minimal set of edges 
such that $s_i$ and $t_i$ are $(h,2)$-hop-connected. 
We use the following properties of $C_i$.

\begin{claim}
\label{claim:junction_Ci_paths}
	For all $i \in [r]$, all $v \in C_i$, there exists an $s_i$-$t_i$ path 
	of hop length at most $h$ that contains $v$.
\end{claim}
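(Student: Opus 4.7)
The plan is to derive the claim directly from the minimality of $C_i$ via a short case analysis. Fix $i \in [r]$ and an element $v \in C_i$; since $C_i \subseteq E'$ is an edge set, I interpret $v$ as an edge, and the goal is to exhibit a short $s_i$-$t_i$ path in $C_i$ that traverses $v$.

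First I would invoke the minimality of $C_i$: since $C_i \setminus \{v\}$ fails to make $s_i, t_i$ $(h,2)$-hop-connected, unpacking the definition yields a witness edge $e'$ such that $C_i \setminus \{v, e'\}$ contains no $s_i$-$t_i$ path of hop length at most $h$. (Note $e' \neq v$; if no such $e'$ existed, minimality would be violated.)

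Next I would use the fact that $C_i$ itself is $(h,2)$-hop-connected to extract a short $s_i$-$t_i$ path in the presence of the single failure $e'$. That is, there exists $p \subseteq C_i \setminus \{e'\}$ with $\hop(p) \leq h$. Now split on whether $p$ uses $v$. If $p$ avoids $v$, then $p \subseteq C_i \setminus \{v, e'\}$, contradicting the choice of $e'$. Hence $p$ must contain $v$, and $p$ is the desired short $s_i$-$t_i$ path through $v$.

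The only real work is in the first step, namely extracting the witness $e'$ from the failure of minimality; the rest is a one-line dichotomy. I do not anticipate a substantive obstacle, since the claim is essentially a direct consequence of edge-minimality combined with the universal-failure quantifier in the definition of $(h,2)$-hop-connectivity.
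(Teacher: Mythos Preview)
Your argument is correct and is essentially the same as the paper's, just unpacked in the forward direction rather than stated as a contrapositive. The paper simply observes that if no short $s_i$-$t_i$ path used $v$, then $C_i \setminus \{v\}$ would still be $(h,2)$-hop-connected, contradicting minimality; your version makes the witness edge $e'$ explicit and then exhibits the path, but the content is identical. Your reading of $v$ as an edge is the natural one given $C_i \subseteq E'$, and the argument goes through equally well for vertices (as the paper implicitly uses it in the subsequent diameter claim).
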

\begin{proof}
	The claim follows by minimality of $C_i$, since if $v \notin p$ for all 
	$p \in \calP_i^h$, then $s_i$ and $t_i$ would remain $(h,2)$-hop-connected 
	in $C_i \setminus \{v\}$.
\end{proof}

\begin{claim}
\label{claim:junction_Ci_diameter}
	For all $i \in [r]$, the diameter of $C_i$ is at most $h$. 
\end{claim}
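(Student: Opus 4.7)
The plan is to fix an arbitrary pair $u, v \in C_i$ and produce a $u$-$v$ walk of hop length at most $h$ inside $C_i$; any such walk contains a simple path of no greater length, which gives $\hop_{C_i}(u,v) \le h$.

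First, I would apply Claim \ref{claim:junction_Ci_paths} to obtain two $s_i$-$t_i$ paths $p_u, p_v \subseteq C_i$ with $\hop(p_u), \hop(p_v) \le h$ and $u \in p_u$, $v \in p_v$. Let $a$ be the hop-length of the $s_i$-to-$u$ prefix of $p_u$, so that the $u$-to-$t_i$ suffix has hop-length at most $h - a$; similarly let $b$ be the hop-length of the $s_i$-to-$v$ prefix of $p_v$, with the $v$-to-$t_i$ suffix of length at most $h - b$.

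The key observation is an averaging argument over two natural $u$-$v$ walks in $C_i$. Concatenating the $u$-to-$s_i$ reverse prefix of $p_u$ with the $s_i$-to-$v$ prefix of $p_v$ yields a walk of length $a + b$, while concatenating the $u$-to-$t_i$ suffix of $p_u$ with the reversed $v$-to-$t_i$ suffix of $p_v$ yields a walk of length at most $(h-a) + (h-b) = 2h - a - b$. The two lengths sum to at most $2h$, so one of them is at most $h$. Extracting a simple path from the shorter walk finishes the proof.

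I do not anticipate any real obstacle here: the claim is essentially a consequence of the fact that every vertex of $C_i$ lies on some short $s_i$-$t_i$ path (Claim \ref{claim:junction_Ci_paths}), combined with a one-line averaging over the two natural ways to route from $u$ to $v$ through the endpoints $s_i$ or $t_i$. Minimality of $C_i$ is used only indirectly, through Claim \ref{claim:junction_Ci_paths}.
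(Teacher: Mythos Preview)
Your proposal is correct and mirrors the paper's proof almost exactly: the paper also takes two arbitrary vertices, uses Claim~\ref{claim:junction_Ci_paths} to bound their distances to $s_i$ and $t_i$ by $a,\,h-a$ and $b,\,h-b$ respectively, and then observes that routing through $s_i$ or through $t_i$ gives walks of length $a+b$ and $2h-(a+b)$, the smaller of which is at most $h$. The only cosmetic difference is that you work with explicit paths $p_u,p_v$ while the paper phrases everything in terms of hop distances.
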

\begin{proof}
	Fix $i \in [r]$. Let $v, w$ be two arbitrary vertices in $C_i$. 
	Let $a = \hop(s_i, v)$, $b = \hop(s_i, w)$. As a simple corollary of 
	Claim \ref{claim:junction_Ci_paths}, $\hop(s_i, v) + \hop(t_i, v) \leq h$.
	Thus $\hop(t_i, v) \leq h - a$ and $\hop(t_i, w) \leq h-b$. 
	Since we can connect $v$ to $w$ by going through $s_i$ or $t_i$, 
	$\hop(v, w) \leq \min(a+b, (h-a) + (h-b)) = \min(a+b, 2h-(a+b)) \leq h$.
	Therefore, every pair of nodes has a path in $C_i$ of hop length at most $h$.
\end{proof}

\begin{claim}
\label{claim:junction_Ci_2conn}
  For all $i \in [r]$, $C_i$ is 2-edge-connected.
\end{claim}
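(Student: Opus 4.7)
The plan is to argue by contradiction. First observe that $C_i$ is automatically connected: by Claim~\ref{claim:junction_Ci_paths} every vertex of $C_i$ lies on an $s_i$-$t_i$ path of hop length at most $h$ in $C_i$, and hence lies in the same component as $s_i$ and $t_i$. Suppose for contradiction that $C_i$ has a bridge $e$, and let $A, B \subseteq C_i$ denote the edge sets of the two components of $C_i \setminus \{e\}$. My goal is to derive a contradiction either with the $(h,2)$-hop-connectivity hypothesis on $C_i$, with Claim~\ref{claim:junction_Ci_paths}, or with the minimality of $C_i$.

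The first step is to locate $s_i$ and $t_i$ on the same side of the bridge. If $s_i$ and $t_i$ were on opposite sides, then $C_i \setminus \{e\}$ would contain no $s_i$-$t_i$ path at all, directly contradicting $(h,2)$-hop-connectivity. So without loss of generality both lie in $V(A)$. The second step is the structural observation that any simple $s_i$-$t_i$ path in $C_i$ lies entirely inside $A$: a simple path traverses any bridge at most once, and a single crossing of $e$ would strand it on the wrong side with no way to return to $t_i \in V(A)$. In the degenerate case $B = \emptyset$, the endpoint $v$ of $e$ lying in $V(C_i) \setminus V(A)$ is incident only to $e$, and hence cannot lie on any simple $s_i$-$t_i$ path, immediately contradicting Claim~\ref{claim:junction_Ci_paths}.

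It remains to handle the generic case $B \neq \emptyset$, and here I claim that $A$ alone already $(h,2)$-hop-connects $s_i$ and $t_i$, which contradicts minimality since $A \subsetneq C_i$. For any candidate failure edge $f \in C_i$, the $(h,2)$-hop-connectivity of $C_i$ provides a simple $s_i$-$t_i$ path in $C_i \setminus \{f\}$ of hop length at most $h$; if $f \in A$ this path lies in $A \setminus \{f\}$ by the structural observation, and if $f \in B \cup \{e\}$ I may instead take any simple short $s_i$-$t_i$ path in $A$ (such a path exists, e.g.\ the one witnessing hop-connectivity for the choice $f = e$). Hence $A$ witnesses $(h,2)$-hop-connectivity, a contradiction.

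The main obstacle I expect is the careful bookkeeping around simple versus non-simple paths: the $(h,2)$-hop-connectivity definition a priori only guarantees the existence of \emph{some} path of hop length at most $h$, while the bridge argument crucially uses simplicity. This is harmless because any walk of hop length at most $h$ can be shortcut to a simple path of no greater hop length, but it should be invoked explicitly so that the structural observation really does force the short paths to live inside $A$.
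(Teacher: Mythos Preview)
Your proof is correct, but it takes a different route from the paper's. The paper argues directly: for every vertex $v \in C_i$, it uses the $s_i$-$t_i$ path through $v$ supplied by Claim~\ref{claim:junction_Ci_paths} together with the backup path from $(h,2)$-hop-connectivity to show that $v$ stays connected to $s_i$ after deleting any single edge, then concludes by transitivity of 2-edge-connectivity. Your argument instead assumes a bridge, places $s_i,t_i$ on one side, and derives a contradiction with either Claim~\ref{claim:junction_Ci_paths} (degenerate case) or minimality (generic case). Both proofs draw on the same ingredients; the paper's is a shade more economical since it never invokes minimality again after Claim~\ref{claim:junction_Ci_paths}. Your argument could likewise be streamlined: once you know all simple $s_i$-$t_i$ paths live in $A$, \emph{any} vertex on the $B$ side already violates Claim~\ref{claim:junction_Ci_paths}, so the case split and the separate minimality argument for $B \neq \emptyset$ are not actually needed.
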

\begin{proof}
	Fix $i \in [r]$. 
	We will show that for every $v \in C_i$, $v$ is 2-edge-connected to $s_i$. 
	The claim then follows from transitivity of edge-connectivity. 
	Let $v \in C_i$, and let $p$ be the $s_i$-$t_i$ path containing $v$ 
	given by Claim \ref{claim:junction_Ci_paths}. Let $e \in C_i$. 
	If $e \notin p$, then $v$ is connected to $s_i$ in $C_i \setminus \{e\}$. 
	Similarly, if $e \in p$ but $e$ is further on $p$ from $s_i$ than $v$, 
	then $v$ is connected to $s_i$ in $C_i \setminus \{e\}$. 
	Lastly, suppose $e \in p$ and $e$ is closer to $s_i$ than $v$ is. Then, 
	$v$ must be connected to $t_i$ in $C_i \setminus \{e\}$. By 
	$(h,2)$-hop-connectivity of $s_i$ and $t_i$, there exists some path 
	$p'$ connecting $s_i$ to $t_i$ in $C_i \setminus \{e\}$. Therefore, 
	$v$ is still connected to $s_i$ in $C_i \setminus \{e\}$ by concatenating 
	the paths $p$ and $p'$. Thus $v$ is 2-edge-connected to $s_i$ as desired.
\end{proof}

We consider balls centered at terminals.
Let $B(u, \ell) = \{v \in G: \hop_G(u,v) \leq \ell\}$, and let 
$B'(u, \ell)$ denote the maximal 2-edge-connected component of 
$G[B(u, \ell)]$ containing $u$. 

\begin{claim}
\label{claim:junction_ball_radius}
	For every $v \in B'(u, \ell)$, there exists a $v$-$u$ path of hop length 
	at most $\ell$ in $B'(u, \ell)$. 
\end{claim}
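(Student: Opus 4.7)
The plan is to show that the shortest $v$-$u$ path in $G$ already lies entirely inside $B'(u, \ell)$, which immediately gives a path of hop length at most $\ell$ in $B'(u, \ell)$.

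Let $p$ be a shortest $v$-$u$ path in $G$. Since $v \in B'(u, \ell) \subseteq B(u, \ell)$, we have $|p| \leq \ell$. For any vertex $w$ on $p$, the prefix of $p$ from $u$ to $w$ is itself a path of length at most $\ell$, so $w \in B(u, \ell)$. Thus $p$ is a path in $G[B(u, \ell)]$; the only remaining question is whether every intermediate vertex of $p$ lies in the 2-edge-connected component $B'(u, \ell)$ rather than merely in $B(u, \ell)$.

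To check this, I would fix $w$ on $p$ and argue that $w$ is 2-edge-connected to $u$ in $G[B(u, \ell)]$, which by maximality of the 2-edge-connected component containing $u$ forces $w \in B'(u, \ell)$. Suppose for contradiction that some single-edge cut $\{e\}$ separates $u$ from $w$ in $G[B(u, \ell)]$, with sides $U \ni u$ and $W \ni w$. There are two cases depending on which side $v$ lies on. If $v \in W$, then $\{e\}$ separates $v$ from $u$ in $G[B(u, \ell)]$, contradicting the hypothesis $v \in B'(u, \ell)$ (which requires two edge-disjoint $v$-$u$ paths in $G[B(u, \ell)]$). If instead $v \in U$, then $p$ starts in $U$, passes through $w \in W$, and ends at $v \in U$; going $U \to W \to U$ requires using cut edges twice, but $p$ is a simple path and can traverse $e$ at most once, again a contradiction.

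The argument is short and does not need max-flow or min-cut machinery beyond the observation that a simple path uses any edge at most once. The only delicate point, and the step I would be most careful about in the write-up, is pinning down the definition of ``maximal 2-edge-connected component of $G[B(u, \ell)]$ containing $u$'': under the standard definition (equivalence classes of the 2-edge-connectivity relation on vertices, with edges inherited from the induced subgraph), maximality gives for free that any $w \in B(u, \ell)$ which is 2-edge-connected to $u$ in $G[B(u, \ell)]$ must in fact lie in $B'(u, \ell)$, and any edge of $p$ with both endpoints in $B'(u, \ell)$ lies in the induced subgraph. With this clarified, the two-case contradiction above directly yields that $p \subseteq B'(u, \ell)$, completing the claim.
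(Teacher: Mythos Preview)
Your proof is correct and follows essentially the same approach as the paper: take a shortest $u$--$v$ path $p$, observe it lies in $B(u,\ell)$, and then show every vertex $w$ on $p$ is 2-edge-connected to $u$ in $G[B(u,\ell)]$ by leveraging the 2-edge-connectivity of $v$ with $u$ together with the fact that $p$ is simple. The only cosmetic difference is that the paper argues directly (remove an edge $e$ and case on whether $e$ lies on the $u$--$w$ subpath of $p$), whereas you phrase the same two cases as a cut-based contradiction; these are equivalent, and your version is arguably cleaner about working in $G[B(u,\ell)]$ rather than $B'(u,\ell)$ itself.
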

\begin{proof}
	Let $v \in B'(u, \ell)$. Since $B'(u, \ell) \subseteq B(u, \ell)$, 
	there exists some $u$-$v$ path $p$ of hop length at most $\ell$ in $G$.
	It is clear by definition that $p \subseteq B(u, \ell)$.  
	We will show that $p \subseteq B'(u, \ell)$. 
	Let $w \in p$. We will show that $w$ is 2-connected to $u$ in $B'(u, \ell)$.
	To do so, let $e \in B'(u, \ell)$. Let $p'$ denote the subset of $p$ 
	from $u$ to $w$. If $e \notin p'$, then $w$ and $u$ are connected in 
	$B'(u, \ell) \setminus \{e\}$ via $p'$. If $e \in p'$, then $w$ must 
	still be connected to $v$ via the section of $p$ from $w$ to $v$. 
	Since $v \in B'(u, \ell)$, there exists some path $p''$ in 
	$B'(u, \ell) \setminus \{e\}$ connecting $u$ to $v$. Thus $w$ is connected 
	to $u$ via $v$ in $B'(u, \ell) \setminus \{e\}$. Since $w$ was arbitrary, 
	all nodes in $p$ must be 2-connected to $u$, so $p \subseteq B'(u, \ell)$. 
\end{proof}

\remark{Claims \ref{claim:junction_Ci_2conn} and \ref{claim:junction_ball_radius}
rely on properties of 2-connectivity that do not generalize to higher $k$, 
limiting us from using this type of junction-based argument when $k > 2$}.

We say $B'(u, \ell)$ \emph{captures} $i \in [r]$ 
if $C_i \subseteq B'(u, \ell)$. We say $B'(u, \ell)$ \emph{intersects} 
$i \in [r]$ if $i$ is not captured but there is some node 
$v \in C_i \cap B'(u, \ell)$.
We start by considering balls of the form $B(s_1, hp)$, where $p \in \N$.
For ease of notation, we let $s$ denote $s_1$.

\begin{lemma}
\label{lem:junction_captured_terminals}
	Suppose $i \in [r]$ is intersected by $B'(s, hp)$ for some $p \in \N$. 
	Then $i$ is captured by $B'(s, h(p+1))$. 
\end{lemma}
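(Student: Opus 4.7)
The plan is to show both (i) that every vertex $w \in C_i$ lies in the ball $B(s, h(p+1))$, and (ii) that every such $w$ is $2$-edge-connected to $s$ within $G[B(s, h(p+1))]$, so that $w$ belongs to the maximal $2$-edge-connected component containing $s$, which is $B'(s, h(p+1))$.

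First, I would pick a vertex $v \in C_i \cap B'(s, hp)$, which exists by the hypothesis that $i$ is intersected. By Claim \ref{claim:junction_ball_radius}, there is an $s$-$v$ path of hop length at most $hp$ lying inside $B'(s, hp)$. Let $w$ be an arbitrary vertex of $C_i$. By Claim \ref{claim:junction_Ci_diameter}, the diameter of $C_i$ is at most $h$, so there is a $v$-$w$ path in $C_i$ of hop length at most $h$. Concatenating these two paths yields $\hop_G(s, w) \leq hp + h = h(p+1)$, so $w \in B(s, h(p+1))$. Since $w \in C_i$ was arbitrary, this gives $C_i \subseteq B(s, h(p+1))$, and in particular both $B'(s, hp)$ and $C_i$ are subgraphs of $G[B(s, h(p+1))]$.

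Next, I would upgrade this containment to $2$-edge-connectivity with $s$. Inside $G[B(s, h(p+1))]$ we have two subgraphs through which $s$ and $w$ are linked: $B'(s, hp)$, in which $s$ and $v$ are $2$-edge-connected (by definition of $B'$), and $C_i$, in which $v$ and $w$ are $2$-edge-connected by Claim \ref{claim:junction_Ci_2conn}. By transitivity of $2$-edge-connectivity applied within $B'(s, hp) \cup C_i \subseteq G[B(s, h(p+1))]$, we conclude that $s$ and $w$ are $2$-edge-connected in $G[B(s, h(p+1))]$. Hence $w$ belongs to the maximal $2$-edge-connected component of $G[B(s, h(p+1))]$ containing $s$, which is exactly $B'(s, h(p+1))$. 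Therefore $C_i \subseteq B'(s, h(p+1))$, establishing that $i$ is captured.

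The main obstacle is the containment argument on the $2$-edge-connected envelope: merely knowing $v \in B'(s, hp)$ and $w \in C_i$ lies in $B(s, h(p+1))$ is not enough — one must verify that the union of $B'(s, hp)$ and $C_i$ actually sits inside the induced subgraph on $B(s, h(p+1))$ before invoking transitivity, which is precisely why the diameter bound of Claim \ref{claim:junction_Ci_diameter} and the ``internal shortest path'' guarantee of Claim \ref{claim:junction_ball_radius} are both needed. Once these are in place, the transitivity of $2$-edge-connectivity (which fails in an analogous form for $k$-edge-connectivity with $k \geq 3$, matching the remark after Claim \ref{claim:junction_ball_radius}) closes the argument.
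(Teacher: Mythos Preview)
Your proposal is correct and follows essentially the same argument as the paper: pick $v \in C_i \cap B'(s,hp)$, use Claims~\ref{claim:junction_Ci_diameter} and~\ref{claim:junction_ball_radius} to place all of $C_i$ inside $B(s,h(p+1))$, then combine Claim~\ref{claim:junction_Ci_2conn} with the definition of $B'$ and transitivity of $2$-edge-connectivity inside $G[B(s,h(p+1))]$ to conclude $C_i \subseteq B'(s,h(p+1))$. The only cosmetic wrinkle is that the containment $B'(s,hp) \subseteq G[B(s,h(p+1))]$ does not follow ``in particular'' from $C_i \subseteq B(s,h(p+1))$ but rather from the trivial inclusion $B'(s,hp)\subseteq B(s,hp)\subseteq B(s,h(p+1))$; this is harmless.
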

\begin{proof}
	Since $i$ is intersected by $B'(s, hp)$, there exists some 
	$v \in C_i \cap B'(s, hp)$. Let $w \in C_i$. 
	By Claim \ref{claim:junction_Ci_diameter}, $\hop(w,v) \leq h$. 
	Since $v \in B'(s, hp)$, by Claim \ref{claim:junction_ball_radius}, 
	$\hop(v, s) \leq hp$. 
	Thus $\hop(w, s) \leq \hop(w,v) + \hop(v, s) \leq h(p+1)$. Since 
	$w \in C_i$ was arbitrary, $C_i \subseteq B(s, h(p+1))$.
	To show that $C_i \subseteq B'(s, h(p+1))$, consider any 
	$w \in C_i$. By Claim \ref{claim:junction_Ci_2conn}, $w$ is 2-connected 
	to $v$ in $G[C_i] \subseteq G[B(s, h(p+1))]$. Since $v \in 
	B'(s, hp)$, $v$ is 2-connected to $s$ in $G[B(s, hp)] 
	\subseteq G[B(s, h(p+1))]$. Thus by transitivity of edge-connectivity, 
	$w$ and $s$ are 2-connected in $G[B(s, h(p+1))]$, so 
	$w \in B'(s, h(p+1))$. Since $w$ was arbitrary, $i$ 
	is captured by $B'(s, h(p+1))$.
\end{proof}

\begin{claim}
\label{claim:junction_good_ball}
	There exists some $p \leq \log_2 r+ 1$ where $B'(s, hp)$ captures 
	at least as many pairs as it intersects.  
\end{claim}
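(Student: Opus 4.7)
The plan is to track how many terminal pairs are captured by the balls $B'(s,hp)$ as $p$ grows and run a doubling argument, using Lemma \ref{lem:junction_captured_terminals} as the engine. For $p \in \mathbb{N}$, let $a_p$ be the number of pairs captured by $B'(s,hp)$ and $b_p$ the number intersected (by definition these two sets are disjoint and sum to at most $r$). The claim will follow from three ingredients: (i) $a_1 \geq 1$, (ii) the recursion $a_{p+1} \geq a_p + b_p$, and (iii) a pigeonhole-style contradiction.

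For (i), note that $s = s_1$, so by Claim \ref{claim:junction_Ci_diameter} we have $C_1 \subseteq B(s,h)$, and by Claim \ref{claim:junction_Ci_2conn} every vertex of $C_1$ is 2-edge-connected to $s$ inside $G[C_1] \subseteq G[B(s,h)]$. Hence $C_1 \subseteq B'(s,h)$, so pair $1$ is captured by $B'(s,h)$ and $a_1 \geq 1$. For (ii), I would first verify the monotonicity $B'(s,hp) \subseteq B'(s,h(p+1))$: any two edge-disjoint $s$-$v$ paths inside $G[B(s,hp)]$ persist in the larger induced subgraph $G[B(s,h(p+1))]$, so any $v$ certifying membership in $B'(s,hp)$ also lies in $B'(s,h(p+1))$. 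This monotonicity immediately gives that captured pairs stay captured. Combined with Lemma \ref{lem:junction_captured_terminals}, which promotes every pair merely intersected at step $p$ to a captured pair at step $p+1$, and using the disjointness of the two categories at step $p$, we obtain $a_{p+1} \geq a_p + b_p$.

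For (iii), suppose toward contradiction that $b_p > a_p$ for every $p \in \{1,2,\ldots,\lceil \log_2 r\rceil + 1\}$. Then by (ii), $a_{p+1} > 2a_p$ for each such $p$, and iterating from $a_1 \geq 1$ yields
\[
a_{\lceil \log_2 r\rceil + 1} \;>\; 2^{\lceil \log_2 r\rceil}\,a_1 \;\geq\; r,
\]
which contradicts $a_p \leq r$. Hence there exists some $p \leq \log_2 r + 1$ with $a_p \geq b_p$, as required.

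The only non-routine step is the monotonicity $B'(s,hp) \subseteq B'(s,h(p+1))$, which I expect to be the main (and only) obstacle to a clean write-up; it follows from the definition of the maximal 2-edge-connected component containing $s$, but one has to be explicit that edge-disjoint certifying paths inside a subgraph remain valid in any supergraph.
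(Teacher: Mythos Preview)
Your proof is correct and follows essentially the same doubling argument as the paper: establish $a_1 \geq 1$ via Claims \ref{claim:junction_Ci_diameter} and \ref{claim:junction_Ci_2conn}, use Lemma \ref{lem:junction_captured_terminals} to get $a_{p+1} \geq a_p + b_p$, and conclude by doubling. You are actually more careful than the paper in isolating and justifying the monotonicity $B'(s,hp) \subseteq B'(s,h(p+1))$ (needed so that captured pairs stay captured), which the paper leaves implicit; your range in the contradiction step is one larger than necessary (the hypothesis for $p = \lceil \log_2 r\rceil + 1$ is never used), but this is harmless.
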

\begin{proof}
	By Lemma \ref{lem:junction_captured_terminals}, if $B'(s, hp)$ 
	captures less pairs than it intersects, then the number of 
	captured pairs in $B'(s, h(p+1))$ is at least double the number of 
	captured pairs in $B'(s, hp)$. Furthermore, $B'(s, h)$ captures 
	at least one pair (namely, $i = 1$), since Claims \ref{claim:junction_Ci_2conn}
	and \ref{claim:junction_Ci_diameter} tell us that 
	$C_1 \subseteq B'(s_1, h)$. 
	Therefore, if there is no $p \leq \log_2 r$ 
	in which $B'(s, hp)$ captures at least as many pairs as it intersects, 
	then $B'(s, h(\log_2 r + 1))$ must have captured all the pairs.
\end{proof}

\begin{lemma}
\label{lem:junction_validity}
	$B'(s, hp)$ is a valid junction structure with root $s$ and hop 
	length $4hp$. 
\end{lemma}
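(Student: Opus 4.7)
The plan is to verify, for each terminal $t \in B'(s, hp) \cap \calT$, that $s$ and $t$ satisfy the $(4hp, 2)$-hop-connectivity condition inside the subgraph $B'(s, hp)$. The $Q = \emptyset$ case is immediate from Claim~\ref{claim:junction_ball_radius}, which already produces a $t$-$s$ path of hop length at most $hp \leq 4hp$ in $B'(s, hp)$. So the only real work is to handle the deletion of a single edge $e \in B'(s, hp)$.

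I would exploit two facts. First, $B'(s, hp)$ is $2$-edge-connected by construction, since it is defined as the maximal $2$-edge-connected subgraph of $G[B(s, hp)]$ containing $s$. Second, Claim~\ref{claim:junction_ball_radius} says that every vertex of $B'(s, hp)$ lies within hop-distance $hp$ of $s$ inside $B'(s, hp)$. Fix any BFS tree $T$ of $B'(s, hp)$ rooted at $s$; every vertex then has depth at most $hp$ in $T$. If $e$ is a non-tree edge, then $T$ is untouched and the $T$-path from $t$ to $s$ has hop length at most $hp \le 4hp$, as desired.

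The remaining case is $e = u_1 u_2 \in T$, where $u_1$ is the parent of $u_2$. Removing $e$ splits $T$ into $T_1 \ni s$ and $T_2$ rooted at $u_2$. If $t \in V(T_1)$, the $T_1$-path from $t$ to $s$ has length at most $hp$. Otherwise $t \in V(T_2)$, and here $2$-edge-connectivity of $B'(s, hp)$ supplies a bridging edge $f = w_1 w_2 \in B'(s, hp) \setminus \{e\}$ with $w_1 \in V(T_1)$ and $w_2 \in V(T_2)$. Concatenating the $T_1$-path from $s$ to $w_1$ (length $\leq hp$), the edge $f$ (length $1$), and the $T_2$-path from $w_2$ to $t$ (length $\leq 2hp$, since both endpoints have depth $\leq hp$ in $T$ and $T_2$ is a subtree rooted at $u_2$) yields an $s$-$t$ walk in $B'(s, hp) \setminus \{e\}$ of hop length at most $3hp + 1 \leq 4hp$.

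The only subtle step is the bridging-edge construction: it is exactly where $2$-edge-connectivity of $B'(s, hp)$ is essential, and where the bound degrades from the radius $hp$ to the stated $4hp$. Everything else is a routine BFS-depth accounting.
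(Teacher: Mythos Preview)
Your proof is correct but takes a different route from the paper's. The paper simply observes that $B'(s,hp)$ has diameter at most $2hp$ (by Claim~\ref{claim:junction_ball_radius}) and is $2$-edge-connected, then invokes the Chung--Garey bound \cite{diameter_bounds} (also used in Lemma~\ref{lem:H-component-diam}) that removing one edge from a $2$-edge-connected graph at most roughly doubles the diameter, yielding $4hp$. You instead reprove this special case from scratch: fix a BFS tree of depth $\le hp$, and when the deleted edge is a tree edge use $2$-edge-connectivity to find a single bridging edge crossing the induced cut, then stitch together tree paths. Your argument is fully self-contained and in fact gives the slightly sharper bound $3hp+1$; the paper's argument is shorter because it reuses machinery already present elsewhere in the paper. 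Both are valid, and your BFS-plus-bridging-edge construction is essentially the standard proof of the $\ell=1$ case of the Chung--Garey result.
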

\begin{proof}
	By Claim \ref{claim:junction_ball_radius}, the diameter of $B'(s, hp)$
	is at most $2hp$. Consider any edge $e \in B'(s, hp)$. Since removing 
	an edge from a 2-connected graph can increase its diameter by at most 2
	(see \cite{diameter_bounds}), $B'(s, hp) \setminus \{e\}$ has diameter 
	at most $4hp$. Thus all terminals have a path to $s$ in 
	$B'(s, hp) \setminus \{e\}$ of hop-length at most $4hp$. 
\end{proof}

\begin{proof}[Proof of Lemma \ref{lem:junction_main_exist}]
By Claim \ref{claim:junction_good_ball}, we can let $H_1^* = B'(s, hp)$ for 
some $p$ such that $B'(s, hp)$ captures at least as many pairs as it intersects.
Then, we remove all pairs that $H_1^*$ captures or intersects, remove all 
edges in $B'(s, hp)$, and repeat
the process on the remaining terminals (choosing $s$ to be some other remaining
terminal) and remaining graph. Notice that terminals that are not intersected 
by $B'(s, hp)$ are not affected by its removal, since all of $C_i$ is still 
contained in the remaining graph. We continue this process until all terminal 
pairs have been either captured or intersected. This gives us junction structures
$H_1^*, H_2^*, \dots$ with $\sum_{j} c(H_j^*) \leq c(E') = \opt_{I}$. 
Since we capture at least as many terminal pairs as we intersect in 
each iteration, the total number of captured 
terminals is at least $\frac r 2$. Since the average density of all 
junction structures is at most $2\opt_{I}/r$, there must be some 
$H_{\tau}^*$ that has density at most $2\opt_{I}/r$. 
By Lemma \ref{lem:junction_validity} and Claim \ref{claim:junction_good_ball}
$H_{\tau}^*$ is a valid junction structure with  hop length at most 
$4h(\log r + 1)$. 
\end{proof}

\subsubsection{Finding a good junction structure}
\label{sec:junction:algo}

We start by guessing the root $u$ of the junction; we can run this algorithm 
for every possible root $u \in V$ and take the resulting junction structure with 
minimum density. Then, we solve the following LP relaxation for the min-density 
junction structure rooted at $u$. This relaxation has the same structure as that 
of the \hclp~ in the fault-tolerant, single source setting, with the additional 
variable $y_t$ for each terminal $t \in \calT$ 
denoting whether or not $t$ is included in the junction structure. We normalize 
$\sum_{t} y_t = 1$. For each terminal $t$, we let $P_t^{h'}$ denote the set of 
all $t$-$u$ paths in $G$ of hop-length at most $h'$. We write the LP for 
general $k$, and note that when $k = 2$, all violated edge sets $Q$ consist of 
a single edge.
We refer to this LP relaxation as \junctionlp.

\begin{subequations}
\begin{align}
  \min \sum_{e \in E} c(e)x_e &\\
  \text{s.t.} \sum_{p \in \calP^{h'}_t, p \cap Q = \emptyset} f_p^Q &= y_t 
  &\forall t \in \calT, \forall Q \subseteq E, |Q| < k \\
  \sum_{p \in \calP^{h'}_t, e \in p} f_p^Q &\leq x_e 
  &\forall e \in E, \forall t \in \calT, \forall Q \subseteq E, |Q| < k \\
  y_{s_i} &= y_{t_i} &\forall i \in [r] \\
  \sum_{t \in \calT} y_t &= 1 \\
  x_e, f_p, y_t &\geq 0
\end{align}
\end{subequations}

Let $x^*, f^*, y^*$ be an optimal solution to this LP. Let $y_{\max}^* = 
\max_{t \in \calT} y_t^*$. Partition the set of 
terminals $\calT$ into $\log r+1$ groups $T_1, \dots, T_{\log r + 1}$, where 
$t \in T_j$ if 
$\frac {y_{\max}^*} {2^{j+1}} < y_t^* \leq \frac {y_{\max}^*} {2^j}$.

\begin{claim}
\label{claim:junction_goodgroup}
	There exists $\theta$ such that 
	$\sum_{t \in T_{\theta}} y_t^* \geq 1/(2(\log r+1))$. 
\end{claim}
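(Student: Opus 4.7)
The plan is a standard bucketing/pigeonhole argument. The total $y^*$ mass is exactly $1$ by constraint $(2e)$ of \junctionlp, and there are at most $|\calT| = 2r$ terminals. The partition $T_1, \dots, T_{\log r + 1}$ covers precisely those terminals whose $y_t^*$ value is at least the smallest bucket threshold. I will show that the ``small'' terminals left outside of $\bigcup_j T_j$ contribute at most half of the total $y^*$ mass, so that some bucket must contain an $\Omega(1/\log r)$ fraction.

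First, I would identify the tail: a terminal $t$ is not assigned to any $T_j$ iff $y_t^* \leq y_{\max}^*/2^{\log r + 2} = y_{\max}^*/(4r)$ (the single terminal achieving the maximum can be placed in $T_1$ by taking the upper endpoint of its defining interval closed; this is just a boundary convention). Call this set $\calT_{\text{small}}$. Then
\[
\sum_{t \in \calT_{\text{small}}} y_t^* \;\leq\; |\calT| \cdot \frac{y_{\max}^*}{4r} \;\leq\; 2r \cdot \frac{y_{\max}^*}{4r} \;=\; \frac{y_{\max}^*}{2}.
\]
Since $y_{\max}^* \leq \sum_{t \in \calT} y_t^* = 1$, this tail contributes at most $1/2$ to the total.

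Consequently, the terminals living in the buckets must carry the remaining mass:
\[
\sum_{j=1}^{\log r + 1} \sum_{t \in T_j} y_t^* \;=\; 1 - \sum_{t \in \calT_{\text{small}}} y_t^* \;\geq\; \frac{1}{2}.
\]
By averaging over the $\log r + 1$ groups, there must exist some index $\theta$ with $\sum_{t \in T_\theta} y_t^* \geq \frac{1}{2(\log r + 1)}$, which is exactly the desired bound. There is no real obstacle here; the only subtlety is the boundary handling of the maximum-value terminal(s), which is handled by a harmless endpoint convention and does not affect the argument.
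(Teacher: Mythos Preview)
Your proposal is correct and follows essentially the same argument as the paper: bound the total $y^*$-mass of the ``tail'' terminals outside $\bigcup_j T_j$ by $|\calT|\cdot y_{\max}^*/(4r)\le y_{\max}^*/2\le 1/2$, deduce that the buckets carry at least $1/2$ of the mass, and pigeonhole over the $\log r+1$ groups. Your remark about the boundary convention for the maximizing terminal is a fair observation (the paper's bucket definition is slightly sloppy there), but it does not affect the argument.
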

\begin{proof}
	Suppose $t \notin T_j$ for $j \in [\log r]$. Then, $y_t^* 
	\leq y_{\max}^*/(2^{\log r +2}) = y_{\max}^*/4r$. 
	Thus $\sum_{t \notin \cup T_j} y_t^* \leq |\calT|y_{\max}^*/4r 
	\leq y_{\max}^*/2 \leq \frac 1 2$, since all $y_t^* \leq 1$ by 
	constraint (e) of \junctionlp. This implies that $\sum_{t \in \cup T_j} y_t^* 
	= \sum_{t \in \calT} y_t^* - \sum_{t \notin \cup T_j} y_t^* 
	\geq \frac 1 2$. Since there are $\log r +1$ groups, at least
	one group must have total $y_t^*$ sum at least $1/(2(\log r + 1))$.
\end{proof}

We run the $(\alpha, \beta)$-approximation for single-source 
$(h', 2)$-Fault-HCND with root $u$ and terminals $T_{\theta}$, where 
$\theta$ is given by Claim \ref{claim:junction_goodgroup}, and return 
the resulting solution $H$.
The following lemma follows immediately from the definition of junction 
structures:

\begin{lemma}
\label{lem:junction_hopstretch}
	$H$ is a junction structure with root $u$, terminals $T_\theta$, 
	and hop length $\beta h'$.
\end{lemma}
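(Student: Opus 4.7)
The plan is essentially to unpack the definition. A junction structure rooted at $u$ with terminals $T_\theta$ and hop length $\beta h'$ is, by definition, a subgraph in which every $t \in T_\theta$ is $(\beta h', 2)$-hop-connected to $u$. But this is exactly the feasibility guarantee of an $(\alpha, \beta)$-bicriteria approximation for single-source $(h', 2)$-Fault-HCND: the returned subgraph satisfies, for every terminal $t$, the property that $u$ and $t$ are $(\beta h', 2)$-hop-connected (the hop bound $h'$ is stretched by the factor $\beta$, while the connectivity requirement itself is preserved exactly since we are in the $k = 2$ regime and cannot degrade it).

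Concretely, I would write: since $H$ is obtained by invoking the assumed $(\alpha, \beta)$-approximation on the single-source $(h', 2)$-Fault-HCND instance with source $u$ and terminal set $T_\theta$, for each $t \in T_\theta$ and each $e \in E(H)$ the subgraph $H \setminus \{e\}$ contains a $u$-$t$ path of hop length at most $\beta h'$. This is precisely the condition that $u$ and $t$ are $(\beta h', 2)$-hop-connected in $H$, so $H$ qualifies as a junction structure with the stated parameters.

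There is no obstacle here; the content of the lemma is a direct definitional rewriting. The only thing worth flagging is that the density claim (which is the nontrivial half of Lemma~\ref{lem:junction_main_algo}) is deliberately not asserted in this lemma, and will be handled separately by combining the $\alpha$ cost factor of the black-box algorithm with the integrality-gap bound coming from Claim~\ref{claim:junction_goodgroup} applied to the fractional solution of \junctionlp.
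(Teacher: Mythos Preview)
Your proposal is correct and matches the paper's treatment: the paper states that the lemma ``follows immediately from the definition of junction structures,'' which is exactly the definitional unpacking you give. There is nothing to add.
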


\begin{claim}
\label{claim:junction_ss_feasible}
	Let $\lambda = \frac{2^{\theta + 1}}{y_{\max}^*}$, $x = \lambda x^*$, 
	$f = \lambda f^*$. Then $(x, f)$ is a feasible solution to 
	\hclp~for single source $(h', 2)$-Fault-HCND 
	with terminals $T_\theta$.
\end{claim}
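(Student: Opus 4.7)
My plan is to verify the three families of constraints of the single-source fault-tolerant \hclp{} (with source $u$, terminal set $T_\theta$, hop bound $h'$, and connectivity $k=2$) directly for the proposed pair $(x, f) = (\lambda x^*, \lambda f^*)$. Non-negativity and the edge-capacity constraints carry over for free: since $\lambda > 0$, non-negativity is preserved, and multiplying both sides of constraint (b) of \junctionlp{} by $\lambda$ yields $\sum_{p \in \calP^{h'}_t,\, e \in p} f^Q_p \leq x_e$ for every $t \in T_\theta$, every $e \in E$, and every $Q \subseteq E$ with $|Q| < 2$.

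The real content is verifying the flow-coverage equality $\sum_{p \in \calP^{h'}_t,\, p \cap Q = \emptyset} f^Q_p = 1$ for every $t \in T_\theta$ and every $Q$ with $|Q| < 2$. Applying constraint (b) of \junctionlp{} to $(x^*, f^*, y^*)$ gives $\sum_{p \in \calP^{h'}_t,\, p \cap Q = \emptyset} f^{*Q}_p = y^*_t$, so after scaling we get $\sum_p f^Q_p = \lambda\, y^*_t$. By the definition of $T_\theta$, every $t \in T_\theta$ satisfies $y^*_t > y^*_{\max}/2^{\theta+1}$, so $\lambda\, y^*_t = (2^{\theta+1}/y^*_{\max}) \cdot y^*_t > 1$. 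The sum is therefore at least $1$; to make it exactly $1$ I would scale the subcollection $\{f^Q_p\}_p$ for each pair $(t, Q)$ down uniformly by a factor in $(0, 1]$. This can only decrease left-hand sides, so the edge-capacity inequality remains satisfied, and since the scaling is applied separately to each $(t, Q)$ group of variables (recall that the fault-tolerant LP carries a separate copy $f^Q$ for each $Q$) the different equality constraints do not interfere with one another.

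I do not expect any serious obstacle here: the only care required is in the last step, where we crucially use that the fault-tolerant relaxation maintains independent path variables $f^Q$ for each violating set $Q$, so the equalities can be enforced one $(t, Q)$ at a time without conflict. An analogous rescaling-to-equality argument appears already in the proof of Lemma~\ref{lem:lp_assumption1}.
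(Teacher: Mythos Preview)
Your argument is correct and follows essentially the same route as the paper: scale the capacity and flow constraints by $\lambda$, use membership in $T_\theta$ to get $\lambda y^*_t > 1$, and then trim each $f^Q$ block down to exact equality. One small slip: the edge-capacity constraint you invoke in your first paragraph is constraint~(c) of \junctionlp, not~(b).
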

\begin{proof}
The fact that $(x, f)$ satisfies the second constraint of \hclp~ follows immediately 
from the fact that $(x^*, f^*)$ satisfies constraint (c) of \junctionlp,
since both $x^*$ and $f^*$ are scaled by the same value. 

We then show that $(x, f)$ satisfies constraint the first constraint of \hclp, that is,
for all $t \in T_\theta$, for all $q \in E$,
$\sum_{p \in \calP_{t}^{h'}, p \cap Q = \emptyset} f_p^{Q} = 1$. 
By constraint (b) of \junctionlp, we know that
$\sum_{p \in \calP_{t}^{h'}, p \cap Q = \emptyset} f_p^{Q*} = y_t^*$, so 
$\sum_{p \in \calP_{t}^{h'}, p \cap Q = \emptyset} f_p^Q
= \lambda \sum_{p \in \calP_{t}^{h'}, p \cap Q = \emptyset} f_p^{Q*} 
= \lambda y_t^*$. Since $t \in T_\theta$, 
$y_t^* > \frac{y_{\max}^*}{2^{\theta + 1}}$, so $\lambda y_t^* > 1$. 
We can scale down $f_p$ values to get the sum to be exactly $1$ without violating 
any constraints.
\end{proof}

\begin{lemma}
\label{lem:junction_density_lp}
  The density of $H$ is at most $O(\alpha \log r) \sum_{e \in E} c(e)x_e^*$.
\end{lemma}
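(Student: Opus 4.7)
The plan is to prove the density bound by combining two sides: an upper bound on $c(H)$ coming from the single-source approximation, and a lower bound on the number of pairs fully contained in $H$ that leverages the grouping constraint $y_{s_i}^* = y_{t_i}^*$ in \junctionlp.

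First, I would upper bound $c(H)$. By Claim \ref{claim:junction_ss_feasible}, $(\lambda x^*, \lambda f^*)$ is a feasible solution to \hclp{} for the single-source $(h',2)$-Fault-HCND instance with root $u$ and terminals $T_\theta$. Therefore the LP optimum of that single-source instance is at most $\lambda \sum_{e \in E} c(e) x_e^*$, and applying the assumed $\alpha$-approximation with respect to $\opt_{LP}$ gives
\[
c(H) \;\le\; \alpha \cdot \lambda \sum_{e \in E} c(e) x_e^* \;=\; \frac{\alpha \cdot 2^{\theta+1}}{y_{\max}^*} \sum_{e \in E} c(e) x_e^*.
\]

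Next, I would lower bound the number of terminal pairs contained in $H$. Since the single-source algorithm connects every terminal in $T_\theta$ to $u$, every terminal of $T_\theta$ belongs to $V(H)$. The key observation is constraint (d) of \junctionlp: $y_{s_i}^* = y_{t_i}^*$ for every pair $i \in [r]$. By definition of the groups, this forces $s_i$ and $t_i$ to lie in the same group $T_j$. Hence if $s_i \in T_\theta$ then $t_i \in T_\theta$ as well, so the number of pairs with both endpoints in $H$ is at least $|T_\theta|/2$. Using $y_t^* \le y_{\max}^*/2^\theta$ for $t \in T_\theta$ and Claim \ref{claim:junction_goodgroup},
\[
|T_\theta| \;\ge\; \frac{2^\theta}{y_{\max}^*} \sum_{t \in T_\theta} y_t^* \;\ge\; \frac{2^\theta}{2(\log r + 1)\, y_{\max}^*}.
\]

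Finally, I would combine the two bounds:
\[
\density(H) \;=\; \frac{c(H)}{|\{i : s_i, t_i \in H\}|} \;\le\; \frac{c(H)}{|T_\theta|/2} \;\le\; \frac{\alpha \cdot 2^{\theta+1}/y_{\max}^* \cdot \sum_{e} c(e) x_e^*}{2^\theta/(4(\log r + 1)\, y_{\max}^*)} \;=\; O(\alpha \log r) \sum_{e \in E} c(e) x_e^*,
\]
as desired. There is no real obstacle here once Claim \ref{claim:junction_ss_feasible} and Claim \ref{claim:junction_goodgroup} are in hand; the only subtlety worth flagging in the write-up is that constraint (d) is what lets us pass from ``terminals in $H$'' to ``pairs covered by $H$'' without losing an extra factor.
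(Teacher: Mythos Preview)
Your proposal is correct and follows essentially the same approach as the paper's own proof: upper-bound $c(H)$ via Claim~\ref{claim:junction_ss_feasible} and the $\alpha$-approximation guarantee, lower-bound the number of covered pairs via constraint~(d) of \junctionlp{} together with Claim~\ref{claim:junction_goodgroup}, and combine. The only cosmetic difference is that you isolate the bound $|T_\theta| \ge 2^\theta/(2(\log r + 1)\,y_{\max}^*)$ explicitly before dividing, whereas the paper carries $|T_\theta|$ symbolically into the density expression and simplifies at the end.
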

\begin{proof}
	Since $H$ is given by an $(\alpha, \beta)$-approximation for single-source 
	$(h', 2)$-Fault-HCND, $c(H) \leq \alpha \cdot \opt_{HC}$, where $\opt_{HC}$ 
	is the optimal LP solution for the given single-source instance. By 
	Claim \ref{claim:junction_ss_feasible}, there exists a feasible solution 
	to $\hclp$ with cost $\frac{2^{\theta + 1}}{y_{\max}^*} x^*$, so 
	$\opt_{HC} \leq \frac{2^{\theta + 1}}{y_{\max}^*} \sum_{e \in E} c(e)x_e^*$. 
	To lower bound the number of terminals covered by $H$, notice that since 
	$y^*$ is a feasible solution to \junctionlp, $y_{s_i} = y_{t_i}$ for all 
	terminal pairs $i \in [r]$. Therefore, for every terminal in $T_\theta$, 
	its corresponding terminal pair is also in $T_\theta$, so 
	$|\{i: s_i, t_i \in H\}| \geq |T_\theta|/2$. 
	Thus the density of $H$ is at most 
	$(4\alpha 2^{\theta})/(y_{\max}^*|T_{\theta}|) \sum_{e \in E} c(e)x_e^*$.
	Recall that by our choice of $\theta$ (see Claim \ref{claim:junction_goodgroup}),
	$|T_\theta| y_{\max}^*/2^{\theta} \geq \sum_{t \in T_\theta} y_t^*
	\geq \frac 1 {2(\log r + 1)}$. Therefore, 
	$(4\alpha 2^{\theta})/(y_{\max}^*|T_{\theta}|) \leq O(\alpha \log r)$ 
	as desired. 
\end{proof}

\begin{proof}[Proof of Lemma \ref{lem:junction_main_algo}]
	Let $H^*$ be a junction structure rooted at $u$ of optimal density and let 
	$T^*$ be the set of terminal pairs it captures. 
	Set $x_e = 1$ for $e \in H^*$, $y_t = 1$ for all $t \in T^*$. 
	For every captured terminal $t$ and every $Q \subseteq E$, $|Q| < k$,
	choose some $t$-$u$ path $p \in H^* \setminus Q$ of hop length at most $h$ 
	and set $f_p^{Q} = 1$.

	It is easy to verify that $\frac 1 {|T^*|}(x, f, y)$ satisfies 
	all constraints of \junctionlp~. Thus $\sum_{e \in E} c(e) x_e^*
	\leq c(H^*)/|T^*| = \density(H^*)$. Combining this with 
	Lemma \ref{lem:junction_density_lp},
	$\density(H) \leq O(\alpha \log r) \density(H^*)$.
	This, along with Lemma \ref{lem:junction_hopstretch},
	concludes the proof.
\end{proof}

\mypara{Acknowledgements:} We thank Mik Zlatin for helpful discussions 
on hop-constrained metric embeddings. 

\bibliographystyle{plainurl}
\bibliography{references}

\end{document}